\newtheorem{theorem}{Theorem}[section]
\newtheorem{proposition}{Proposition}
\newtheorem{corollary}[theorem]{Corollary}
\newtheorem{lemma}[theorem]{Lemma}
\begin{document}
\title{A Combinatorial Approach to X-Tolerant Compaction Circuits}
\author{Yuichiro Fujiwara
        and Charles J. Colbourn
\thanks{This work was supported in part by JSPS Research Fellowships for Young Scientists (YF) and by DOD grants N00014-08-1-1069 and N00014-08-1-1070 (CJC).}
\thanks{Y. Fujiwara is with the Department of Mathematical Sciences, Michigan Technological University, Houghton,
MI 49931 USA. {\tt  yfujiwar@mtu.edu}.}%
\thanks{C. J. Colbourn is with the School of Computing, Informatics, and Decision Systems Engineering,
Arizona State University, Tempe, AZ 85287-8809 USA. {\tt charles.colbourn@asu.edu}}}
\markboth{IEEE transactions on Information Theory,~Vol.~x, No.~xx,~month~year}
{Fujiwara and Colbourn: A Combinatorial Approach to X-Tolerant Compaction Circuits}


\maketitle

\begin{abstract}
Test response compaction for integrated circuits (ICs) with scan-based design-for-testability (DFT) support in the presence of unknown logic values (Xs) is investigated from a combinatorial viewpoint.
The theoretical foundations of X-codes, employed in an X-tolerant compaction technique called X-compact, are examined. 
Through the formulation of a combinatorial model of X-compact, novel design techniques are developed for X-codes to detect a specified maximum number of  errors in the presence of a specified maximum number of unknown logic values, while requiring only small fan-out.
The special class of X-codes that results leads to an avoidance problem for configurations in combinatorial designs. 
General design methods and nonconstructive existence theorems to estimate the compaction ratio of an optimal X-compactor are also derived.
\end{abstract}

\begin{keywords}
Circuit testing, built-in self-test (BIST), compaction, X-compact,
test compression, X-code, superimposed code, Steiner system,
configuration.
\end{keywords}

\IEEEpeerreviewmaketitle

\section{Introduction}
\PARstart{T}{his} work discusses a class of codes that arise in data volume compaction of responses from integrated circuits (ICs) under scan-based test.
We first recall briefly the background of the X-tolerant compaction technique in digital circuit testing.

Digital circuit testing applies test patterns to a circuit under test and monitors the circuit's responses to the applied patterns.
A tester compares the observed response to a test pattern to the expected response and, if there is a mismatch, declares the circuit chip defective.
Usually the expected responses are obtained through fault-free simulation of the chip.

Test cost for traditional scan-based testing is dominated by test data volume and test time \cite{MCC}.
Therefore various test compression techniques have been developed to reduce test cost.
One way to achieve this is to reduce test application time and the number of test patterns by employing automatic test pattern generation (ATPG)
(see \cite{ITVLSITEST1, ITVLSITEST2, ITVLSITEST3, ITVLSITEST4} and references therein).
We are interested in the other kind of technique, using methods to hash responses while maintaining test quality.
Signature analyzers (e.g., \cite{MISR,MISR2,MISR3,OPMISR,OPMISR2}) are vulnerable to error masking caused by unknown logic values (Xs)  \cite{XTUTORIAL}. 
{\it X-compact} has been proposed in order to conduct reliable testing in the presence of Xs \cite{XCOMPACTION}.
A response compaction circuit based on X-compact is an {\it X-compactor}.
X-compactors have proved their high error detection ability in actual  systems \cite{XTEST,XTUTORIAL}.

An X-compactor can be written in matrix form as an {\it X-code} \cite{XCODES}.
Basic properties of X-codes have been studied \cite{XCODES,XCODES2}.
Graph theoretic techniques have been employed to minimize fan-out of inputs \cite{XCODESCAGE}; in general an X-compactor tolerates the presence of Xs in exchange for large fan-out.
These studies  focus on particular classes of X-codes rather than the general coding theoretic aspects.

The purpose of the present paper is to investigate theoretical foundations of X-codes and to provide general construction techniques.
In Section II we outline the combinatorial requirements for the X-compact technique and present an equivalent definition of X-codes in order to investigate X-compactors as codes and
combinatorial designs. 
Some basic properties of X-codes are also presented.
In Section III we investigate X-codes that require only small fan-out and have good error detectability and X-tolerance.
We prove the equivalence between a class of Steiner $t$-designs and particular X-codes having the maximum number of codewords and the minimum fan-out. 
This allows us to give constructions and to show existence of such X-codes.
Section IV deals with existence of X-codes in the more general situation.
Both constructive and nonconstructive theorems are provided.
Finally we conclude in Section V.

\section{Combinatorial Requirements and X-Codes}

We do not describe scan-based testing and response compaction in detail here, instead referring the reader to  \cite{XTUTORIAL,XCOMPACTION}.

Scan-based testing repeatedly applies vectors  of test inputs to the circuit, and for each test captures a vector from $\{0,1\}^n$ as the test output.  Naturally it is important that the test output be the correct one.  To determine this, the function of the circuit is simulated (in a fault-free manner) to produce a reference output.  When the test and reference outputs agree, no fault has been detected.  The first major obstacle is that fault-free simulation may be unable to determine whether a specific output is 0 or 1, and hence it is an unknown logic value X.  The second is that if each output requires a separate pin on the chip, the number of tests that can be accommodated is quite restricted.  We deal with these two problems in turn.

We define an algebraic system to describe the behavior of Xs.
The {\it X-algebra} ${\mathbb X}_2 = (\{0,1,\mbox{X}\},+,\cdot)$ over the field ${\mathbb F}_2$ is
the set $\{0,1\}$ of elements of ${\mathbb F}_2$ and a third element $\mbox{X}$, equipped with two binary operations ``$+$" (addition)  and ``$\cdot$" (multiplication) satisfying:

\begin{enumerate}
\item   $a+b$ and  $a \cdot b$ are performed in ${\mathbb F}_2$ for $a,b \in {\mathbb F}_2$;
\item $a+\mbox{X}=\mbox{X}+a=\mbox{X}$ for $a \in {\mathbb F}_2$;
\item $0\cdot \mbox{X}=\mbox{X}\cdot 0=0$ for the additive identity $0$; 
\item  $1 \cdot \mbox{X}=\mbox{X}\cdot 1 =\mbox{X}$.
\end{enumerate}

The element X is termed an {\em unknown logic value}. 

Now consider a test output ${\sf b} = (b_1,\dots,b_n) \in \{0,1\}^n$ and a reference output ${\sf c} = (c_1,\dots,c_n) \in \{0,1,\mbox{X}\}^n$.  When $c_i \in \{0,1\}$, the test and reference outputs agree on the $i$th bit when $b_i=c_i$; otherwise the $i$th bit is an {\sl error bit}. When $c_i = \mbox{X}$, whatever the value of $b_i$, no error is detected.  Thus the $i$th bit is (known to be) in error if and only if $b_i + c_i = 1$, using addition in ${\mathbb X}_2$.

Turning to the second problem, an {\sl X-compact matrix} is an $n \times m$ matrix $H$ with elements from $\{0,1\}$.  The  {\it compaction ratio} of $H$ is $n/m$.
The number of $1$s in the $i$th row is  the {\it weight}, or {\it fan-out}, of  row $i$.
Output (or response) compaction is performed by computing the vector ${\sf d} = (d_1,\dots,d_m) = {\sf b} H$  for output (arithmetic is in ${\mathbb X}_2$).  In the same way, the reference output can be compacted using the same matrix to form ${\sf r} = (r_1,\dots,r_m) = {\sf c} H$.  
As before, if $d_i \neq r_i$ and $r_i \neq \mbox{X}$ (that is, if $d_i + r_i = 1$), an error is detected.

To be of practical value, an X-compact matrix $H$ should detect the presence of error bits in ${\sf b}$ with respect to ${\sf c}$ given the compacted vectors ${\sf d}$ and ${\sf r}$ under `reasonable' restrictions on the number of errors and number of unknown logic values.  

Suppose that $b_\ell + c_\ell = 1$ (so that there is a fault to be detected).  In principle, whenever $h_{\ell j} = 1$, the fault could be observed on output $j$.  Let $L = \{ j : h_{\ell j} = 1\}$.
Suppose then that $j \in L$.  If it happens that $\sum_{i=1}^n c_i h_{ij} = \mbox{X}$, the error at position $\ell$ is {\sl masked} for output $j$ (that is, $d_j + r_j  = \mbox{X}$, and no error is observed). On the other hand, if  \[ d_j + r_j = \sum_{i=1}^n b_i h_{ij} + \sum_{i=1}^n c_i h_{ij} = \sum_{i=1}^n (b_i+c_i) h_{ij} = 0 \]
then no error is observed.  
This occurs when there are an {\sl even} number of values of $i$ for which $h_{ij} = 1$ and $b_i + c_i = 1$; because this holds when $i=\ell$ by hypothesis, the error at position $\ell$ is {\sl canceled} for output $j$ when the number of such errors is even.  When an error is masked or canceled for every output $j \in L$, it is not detected.  Otherwise, it is detected by an output that is neither masked nor canceled.

Treating X's as erasures and using traditional codes can increase the error detectability of an X-compactor \cite{ICOMPACT}.
Unfortunately, this involves postprocessing test responses and cannot be easily implemented \cite{XCOMPACTION}.
Therefore, we focus on X-compaction in which
an error is only detected by the simple comparison described here. 

There are numerous criteria in defining a ``good" X-compact matrix.  It should have a high compaction ratio and be able to detect any faulty circuit behavior anticipated in actual testing.
Power requirements, compactor delay, and wireability dictate that the weight of each row in a matrix be small to meet practical limitations on fan-in and fan-out \cite{XCODESCAGE,XTUTORIAL}.

The fundamental problem in X-tolerant response compaction is to design an X-compact matrix with  large compaction ratio that  detects  faulty circuit behavior.
To achieve this,  X-codes (which represent X-compact matrices) were introduced  \cite{XCODES}.
In this section, we discuss basic properties of X-codes.
In order to investigate  X-codes from  coding and design theoretic views, we introduce an equivalent definition.

Consider two $m$-dimensional vectors $\boldsymbol{s}_1= (s_1^{(1)},s_2^{(1)},\dots,s_m^{(1)})$ and $\boldsymbol{s}_2 = (s_{1}^{(2)},s_{2}^{(2)},\dots,s_{m}^{(2)})$,  where $s_i^{(j)} \in {\mathbb F}_2$.
The {\it addition} of 
$\boldsymbol{s}_1$ and $\boldsymbol{s}_2$
is bit-by-bit addition, denoted by
$\boldsymbol{s}_1 \oplus \boldsymbol{s}_2$; that is,
\[
\boldsymbol{s}_1 \oplus \boldsymbol{s}_2 = (s_1^{(1)}+s_{1}^{(2)},
s_2^{(1)}+s_{2}^{(2)},\dots,s_m^{(1)}+s_{m}^{(2)}).
\]
The {\it superimposed sum} of  $\boldsymbol{s}_1$
and $\boldsymbol{s}_2$,
denoted $\boldsymbol{s}_1 \vee \boldsymbol{s}_2$, is 
\[
\boldsymbol{s}_1 \vee \boldsymbol{s}_2 = (s_1^{(1)} \vee s_{1}^{(2)},
s_2^{(1)} \vee s_{2}^{(2)},\dots,s_m^{(1)} \vee s_{m}^{(2)}),
\]
where $s_i^{(j)} \vee s_k^{(l)} = 0$ if $s_i^{(j)} = s_k^{(l)} =0$, otherwise $1$.
An $m$-dimensional vector $\boldsymbol{s}_1$ {\it covers} an
$m$-dimensional vector $\boldsymbol{s}_2$ if
$\boldsymbol{s}_1 \vee \boldsymbol{s}_2 = \boldsymbol{s}_1$.

For a finite set $S=\{\boldsymbol{s}_1,\dots,\boldsymbol{s}_s\}$
of $m$-dimensional vectors, define
\[\bigoplus S = \boldsymbol{s}_1\oplus\dots\oplus\boldsymbol{s}_s
\mbox{\ and\ } 
\bigvee S = \boldsymbol{s}_1\vee\dots\vee\boldsymbol{s}_s.\]
When $S=\{\boldsymbol{s}_1\}$ is a singleton, 
 $\bigoplus S = \bigvee S= \boldsymbol{s}_1$.
For $S = \emptyset$ we define $\bigoplus S = \bigvee S= \boldsymbol{0}$,
the zero vector.

Let $d$ be a positive integer and $x$ a nonnegative integer.
An $(m,n,d,x)$ {\it X-code} ${\mathcal X}= \{\boldsymbol{s}_1,\boldsymbol{s}_2,\dots,\boldsymbol{s}_n\}$
is a set of $m$-dimensional vectors over ${\mathbb F}_2$ such that $|{\mathcal X}|=n$ and 
\[(\bigvee S_1) \vee (\bigoplus S_2) \not= \bigvee S_1.\]
for any pair of mutually disjoint  subsets $S_1$ and $S_2$ of ${\mathcal X}$ with $|S_1|=x$ and $1 \leq |S_2| \leq d$.
A vector $\boldsymbol{s}_i \in {\mathcal X}$ is a {\it codeword}.
The {\it weight} of a codeword $\boldsymbol{s}_i$ is  $|\{s_j^{(i)} \not= 0: s_j^{(i)} \in \boldsymbol{s}_i \}|$.
The ratio $n/m$ is the {\it compaction ratio} of ${\mathcal X}$.
An X-code forming an orthonormal basis of the $m$-dimensional linear space over ${\mathbb F}_2$ is {\it trivial}.

Roughly speaking, an X-code is a set of codewords such that for every positive integer $d' \leq d$ no superimposed sum of any $x$ codewords covers the vector obtained by adding up any $d'$ codewords chosen from the rest of the $n-x$ codewords.
Now we present a method of designing an X-compact matrix from an X-code.

\begin{proposition}\label{equivalence}
There exists an $(m,n,d,x)$ X-code ${\mathcal X}$ if and only if there exists an
$n \times m$ X-compact matrix $H$ which detects any combination of $d'$ faults ($1 \leq d' \leq d$) in the presence of at most $x$ unknown logic values.
\end{proposition}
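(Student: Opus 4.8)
The plan is to set up a direct correspondence between the $n$ rows of an $n \times m$ X-compact matrix $H$ and the $n$ codewords of $\mathcal{X}$, and then to show that the defining inequality of an X-code is, coordinate by coordinate, precisely the statement that some compactor output reveals the fault. Concretely, given $\mathcal{X} = \{\boldsymbol{s}_1, \ldots, \boldsymbol{s}_n\}$, I would let $H$ be the matrix whose $i$th row is $\boldsymbol{s}_i$, so that $h_{ij}$ is the $j$th coordinate of $\boldsymbol{s}_i$; this assignment is visibly reversible, so it suffices to match the two properties under it. A fault pattern with unknown values and errors then corresponds to two disjoint subsets of rows: $S_1 = \{\boldsymbol{s}_i : c_i = \mbox{X}\}$ (the X-positions) and $S_2 = \{\boldsymbol{s}_i : b_i + c_i = 1\}$ (the error positions), with $|S_2| = d'$.

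The key computation is to evaluate, for each output $j$, the quantity $d_j + r_j = \sum_{i=1}^n (b_i + c_i) h_{ij}$ in $\mathbb{X}_2$. I would split the sum according to whether $c_i = \mbox{X}$, $b_i + c_i = 1$, or $b_i + c_i = 0$. Using $0 \cdot \mbox{X} = 0$, $1 \cdot \mbox{X} = \mbox{X}$, and the absorbing rule $a + \mbox{X} = \mbox{X}$, one sees that $d_j + r_j = \mbox{X}$ exactly when some $\boldsymbol{s}_i \in S_1$ has $h_{ij} = 1$, that is, when $(\bigvee S_1)_j = 1$ (the error is masked at $j$); and otherwise $d_j + r_j = \sum_{\boldsymbol{s}_i \in S_2} h_{ij} \bmod 2$, which is $(\bigoplus S_2)_j$. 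Hence output $j$ detects the fault, meaning $d_j + r_j = 1$, if and only if $(\bigvee S_1)_j = 0$ and $(\bigoplus S_2)_j = 1$.

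It then remains to recognize that such a detecting output $j$ exists if and only if $(\bigvee S_1) \vee (\bigoplus S_2) \neq \bigvee S_1$: the superimposed sum can differ from $\bigvee S_1$ only in a coordinate where $\bigvee S_1$ is $0$ and $\bigoplus S_2$ is $1$, which is exactly a detecting output. Quantifying this equivalence over all admissible disjoint pairs $S_1, S_2$ with $|S_1| = x$ and $1 \le |S_2| \le d$ then yields both directions of the proposition simultaneously.

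The one point requiring care is the mismatch between ``exactly $x$'' in the X-code definition and ``at most $x$'' in the detection requirement. I would dispose of it by a monotonicity observation: enlarging $S_1$ can only enlarge $\bigvee S_1$ in the covering order, hence can only mask more outputs, so a code surviving the worst case $|S_1| = x$ also survives every smaller number of unknowns, padding $S_1$ with spare codewords outside $S_1 \cup S_2$ when fewer than $x$ are present. This bookkeeping, together with keeping the absorbing behavior of $\mbox{X}$ straight throughout the $\mathbb{X}_2$ arithmetic, is the only genuinely fiddly part; the rest is a transparent unwinding of the definitions, which were evidently crafted to make this identification clean.
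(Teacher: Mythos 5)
Your proposal is correct and follows essentially the same route as the paper's proof: identify codewords with rows of $H$, compute $d_j+r_j=\sum_i(b_i+c_i)h_{ij}$ in ${\mathbb X}_2$ split into error and X contributions, and observe that a detecting output is exactly a coordinate where $\bigvee S_1$ is $0$ and $\bigoplus S_2$ is $1$, i.e., where the covering condition fails. Your explicit monotonicity/padding argument for reconciling ``exactly $x$'' with ``at most $x$'' is a point the paper's proof silently assumes (it takes the X-set to have cardinality exactly $x$), so if anything you are slightly more careful on that detail.
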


\begin{proof}
First we prove necessity. Assume that ${\mathcal X}$ is an $(m,n,d,x)$ X-code.
Write
${\mathcal X}=\{\boldsymbol{s}_1,\boldsymbol{s}_2,\dots,\boldsymbol{s}_n\}$, where $\boldsymbol{s}_i = (s_1^{(i)},s_2^{(i)},\dots,s_m^{(i)})$ for $1 \leq i \leq n$.
Define an $n \times m$ matrix $H=(h_{i,j})$ as $h_{i,j} = s_j^{(i)}$.
We show that $H$ forms an X-compact matrix that detects a fault if the test output ${\sf b}$ contains $d'$ error bits, $1 \leq d' \leq d$, and up to $x$ Xs.

Let $E = \{k : b_k + c_k = 1\}$, the set of indices of error bits, have cardinality $d'$. 
Let $X = \{k : c_k = \mbox{X}\}$, the set of indices of unknown logic values, have cardinality $x$.
Now comparing $d_\ell$ and $r_\ell$, 
\begin{eqnarray}\label{condition1}
d_\ell + r_\ell &=&
\sum_{k}b_k\cdot h_{k,\ell}+\sum_{k}c_k\cdot h_{k,\ell}\nonumber\\
&=& \sum_{k\in E,X}(b_k+c_k)\cdot h_{k,\ell}\nonumber\\
&=& \sum_{k\in E}\mbox{1}\cdot h_{k,\ell}+\sum_{k\in X}\mbox{X}\cdot h_{k,\ell},
\end{eqnarray}
with operations performed in ${\mathbb X}_2$.
Because the set of rows of $H$ forms the set of codewords of ${\mathcal X}$,
no superimposed sum of $x$ rows covers the vector obtained by an addition of any $d'$ rows.
Hence there exists a column $c$ such that
\begin{equation}\label{condition2}
\sum_{k\in E}\mbox{1}\cdot h_{k,c} =
1 \mbox{ and } \sum_{k\in X}\mbox{X}\cdot h_{k,c} = 0.
\end{equation}
Then (\ref{condition1}) and (\ref{condition2}) imply $d_c+r_c = 1$, that is, $H$ detects a fault.

Because (\ref{condition2}) holds if and only if the right hand side of (\ref{condition1}) equals one for $l=c$, sufficiency is straightforward.
\end{proof}

By virtue of this equivalence,
we can employ various known results and techniques in coding theory
to design an X-compactor with good error detection ability, X-tolerance,
and a high compaction ratio.
For the case when $x=0$, an $(m,n,d,0)$ X-code forms an $n \times m$ X-compact matrix
which is a parity-check matrix of a binary linear code of length $n$
and minimum distance $d$.
In fact, since the condition that $x=0$ implies the absence of Xs,
this special case is reduced to traditional space compaction. 
Because our focus is compaction in the presence of unknown logic values,
we assume that $x \geq 1$ henceforth unless otherwise stated.
In  the absence of Xs, see
\cite{ECCCOMPUTERSYSTEMS,SPACETIMECOMPACTION}.

By definition, an $(m,n,d,x)$ X-code, $d \geq 2$, is also an $(m,n,d-1,x)$ X-code.
Also an $(m,n,d,x)$ X-code forms an $(m,n,d,x-1)$ X-code.
Moreover, an $(m,n,d,x)$ X-code is an $(m,n,d+1,x-1)$ X-code \cite{XCODES}.

It can be difficult to design an X-compactor having both
the necessary error detectability and the exact number of inputs needed.
One trivial solution is to discard codewords from a larger X-code
with sufficient error detection ability and X-tolerance.
The following is another simple way to adjust the number of inputs.

\begin{proposition}\label{prop:extension}
If an $(m,n,d,x)$ X-code and an $(m',n',d',x')$ X-code exist,
there exists an $(m+m',n+n',\min\{d,d'\},\min\{x,x'\})$ X-code.
\end{proposition}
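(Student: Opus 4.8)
The plan is to use a direct-sum construction in which each codeword is padded with zeros. Write the two given codes as $\mathcal{X} = \{\boldsymbol{s}_1,\dots,\boldsymbol{s}_n\}$ and $\mathcal{Y} = \{\boldsymbol{t}_1,\dots,\boldsymbol{t}_{n'}\}$, and set $d'' = \min\{d,d'\}$ and $x'' = \min\{x,x'\}$. I would form the set $\mathcal{Z}$ of $(m+m')$-dimensional vectors consisting of $\hat{\boldsymbol{s}}_i = (\boldsymbol{s}_i,\boldsymbol{0})$ for $1 \le i \le n$ (each $\boldsymbol{s}_i$ padded with $m'$ trailing zeros) together with $\hat{\boldsymbol{t}}_j = (\boldsymbol{0},\boldsymbol{t}_j)$ for $1 \le j \le n'$ (each $\boldsymbol{t}_j$ padded with $m$ leading zeros). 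First I would record that no (nondegenerate) X-code contains the zero vector, since choosing $S_2$ to be a single zero codeword would make $(\bigvee S_1)\vee(\bigoplus S_2)=\bigvee S_1$ and violate the definition; hence the $\hat{\boldsymbol{s}}_i$ and $\hat{\boldsymbol{t}}_j$ are pairwise distinct and $|\mathcal{Z}| = n+n'$, as the claimed parameters require.

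The core of the argument is to verify the defining inequality of $\mathcal{Z}$ for the parameters $d''$ and $x''$. I would fix disjoint subsets $S_1,S_2 \subseteq \mathcal{Z}$ with $|S_1| = x''$ and $1 \le |S_2| \le d''$, and split each by block: $S_i = S_i^{\mathcal X} \cup S_i^{\mathcal Y}$, where $S_i^{\mathcal X}$ collects the $\hat{\boldsymbol{s}}$-type members and $S_i^{\mathcal Y}$ the $\hat{\boldsymbol{t}}$-type members. Since $S_2 \neq \emptyset$, at least one of $S_2^{\mathcal X},S_2^{\mathcal Y}$ is nonempty; by the symmetry of the construction I may assume $S_2^{\mathcal X}\neq\emptyset$. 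The structural observation that makes the construction work is that the first $m$ coordinates of every $\hat{\boldsymbol{t}}$-type vector vanish, so on the first block $\bigoplus S_2$ coincides with $\bigoplus T_2$ and $\bigvee S_1$ coincides with $\bigvee T_1$, where $T_1,T_2 \subseteq \mathcal{X}$ are the unpadded codewords underlying $S_1^{\mathcal X},S_2^{\mathcal X}$.

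I would then invoke the X-code property of $\mathcal{X}$. Here lies the one technical wrinkle: $|T_1| = |S_1^{\mathcal X}| \le x'' \le x$ need not equal $x$, whereas the definition asks for $|S_1| = x$ exactly. This is precisely where I would apply the remark immediately preceding the proposition, that an $(m,n,d,x)$ X-code is also an $(m,n,d,x-1)$ X-code; iterating it shows the defining inequality of $\mathcal{X}$ holds for every $S_1$-part of size at most $x$. As $T_1,T_2$ are disjoint with $1 \le |T_2| \le d'' \le d$, the property of $\mathcal{X}$ furnishes a coordinate $p \le m$ at which $\bigoplus T_2$ is $1$ while $\bigvee T_1$ is $0$. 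By the structural observation, $\bigoplus S_2$ is $1$ and $\bigvee S_1$ is $0$ at this same coordinate, so $(\bigvee S_1)\vee(\bigoplus S_2) \neq \bigvee S_1$. The case $S_2^{\mathcal Y} \neq \emptyset$ is handled identically using $\mathcal{Y}$ and the second block.

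I expect the main obstacle to be conceptual bookkeeping rather than calculation: one must confirm that the two blocks never interfere, so that a coordinate witnessing detection inside $\mathcal{X}$ cannot be masked by the $\hat{\boldsymbol{t}}$-type contributions, and one must notice that the portion of $S_1$ lying in a single block may be strictly smaller than $x$, which is exactly why the monotonicity-in-$x$ remark is needed. Once these two points are in place the verification is routine.
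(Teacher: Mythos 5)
Your construction is exactly the paper's: pad the codewords of $\mathcal{X}$ with $m'$ trailing zeros, pad those of $\mathcal{Y}$ with $m$ leading zeros, and take the union. The paper's proof stops there and simply asserts the result, whereas you additionally carry out the verification the paper leaves implicit --- the block-splitting of $S_1,S_2$, the appeal to monotonicity in $x$ (needed because the part of $S_1$ in one block can have size strictly less than $x$), and the distinctness of the padded vectors --- all of which is correct.
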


\begin{proof}
Let ${\mathcal X} = \{\boldsymbol{s}_1,\dots,\boldsymbol{s}_n\}$
be an $(m,n,d,x)$ X-code and
${\mathcal Y} = \{\boldsymbol{t}_1,\dots,\boldsymbol{t}_{n'}\}$
an $(m',n',d',x')$ X-code.
Extend each codeword $\boldsymbol{s}_i = (s_1^{(i)},\dots,s_m^{(i)})$ of ${\mathcal X}$
by appending $m'$ $0$'s so that extended vectors have the form
$\boldsymbol{s}_i' = (s_1^{(i)},\dots,s_m^{(i)},0,\dots,0)$.
Similarly extend each codeword $\boldsymbol{t}_j = (t_1^{(j)},\dots,t_{m'}^{(j)})$ of
${\mathcal Y}$ by appending  $m$ $0$s
so that extended vectors have the form
$\boldsymbol{t}_j' = (0,\dots,0,t_1^{(j)},\dots,t_{m'}^{(j)})$.
The extended $(m+m')$-dimensional vectors form
an $(m+m',n+n',\min\{d,d'\},\min\{x,x'\})$ X-code.
\end{proof}

Proposition \ref{prop:extension} says that given an $(m,n,d,x)$ X-code,
a codeword of weight less than or equal to $x$
does not essentially contribute to the compaction ratio (see also \cite{XCODES}).
In fact, if ${\mathcal X}$ contains such a codeword
$\boldsymbol{s}_i=(s_1^{(i)},\dots,s_m^{(i)})$,  there exists
at least one coordinate $m'$ such that  $s_{m'}^{(i)}=1$ and  $s_{m'}^{(j)} = 0$ for any other codeword
$\boldsymbol{s}_j \in {\mathcal X}$.
Hence we can delete $\boldsymbol{s}_i$ and 
coordinate $m'$ from ${\mathcal X}$ while keeping $d$ and $x$.
By applying Proposition \ref{prop:extension} and combining a trivial X-code and another X-code,
we can obtain an X-code having the same number of codewords
with  compaction ratio no smaller.
For this reason, when constructing an $(m,n,d,x)$ X-code explicitly,
we assume that every codeword has weight greater than $x$.

Let $M(m,d,x)$ be the maximum number $n$ of codewords
for which there exists an $(m,n,d,x)$ X-code.
More codewords means a higher compaction ratio.
Hence an $(m,n,d,x)$ X-code satisfying $n = M(m,d,x)$ is  {\it optimal}.

Determining the exact value of $M(m,d,x)$ seems difficult except for $M(m,1,1)$.
As pointed out in \cite{XCODES},
a special case of $M(m,d,x)$ has been extensively studied
in the context of superimposed codes  \cite{SUPERIMPOSED}.
An $(1,x)$-{\it superimposed code} of size $m \times n$ is an $m \times n$
matrix $S$ with entries in ${\mathbb F}_2$ such that
no superimposed sum of any $x$ columns of $S$  covers
any other column of $S$. Superimposed codes are also called
{\it cover-free families} and {\it disjunct matrices}.

By definition, a $(1,x)$-{\it superimposed code} of size $m \times n$
is equivalent to the transpose of an X-compact matrix obtained from
an $(m,n,1,x)$ X-code. Hence known results on the maximum ratio
$n/m$ for superimposed codes immediately give information about $M(m,1,x)$.
For completeness, we list useful results on $M(m,1,x)$.

By Sperner's theorem, 

\begin{theorem} (see {\rm \cite{SPERNER,UPPERBOUNDccfSTINSON}})
For $m \geq 2$ an integer,
\[M(m,1,1) \leq {{m}\choose{\lfloor m/2 \rfloor}}.\]
\end{theorem}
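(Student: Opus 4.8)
The plan is to unpack the definition of an $(m,n,1,1)$ X-code into a purely set-theoretic condition and then invoke Sperner's theorem. First I would specialize the defining inequality to $d=1$ and $x=1$. Here $S_1$ and $S_2$ range over \emph{disjoint} singletons, say $S_1=\{\boldsymbol{s}_i\}$ and $S_2=\{\boldsymbol{s}_j\}$ with $i\neq j$, so that $\bigvee S_1=\boldsymbol{s}_i$ and $\bigoplus S_2=\boldsymbol{s}_j$. The requirement $(\bigvee S_1)\vee(\bigoplus S_2)\neq\bigvee S_1$ then reads $\boldsymbol{s}_i\vee\boldsymbol{s}_j\neq\boldsymbol{s}_i$, i.e.\ $\boldsymbol{s}_i$ does not cover $\boldsymbol{s}_j$. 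Since this must hold for \emph{every} ordered pair of distinct codewords, no codeword covers any other; running a pair in both orders shows that any two codewords are mutually non-covering.

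Next I would translate to subsets. Associate with each codeword $\boldsymbol{s}_i$ its support $A_i=\{\,j:s_j^{(i)}=1\,\}\subseteq\{1,\dots,m\}$. A $\{0,1\}$-vector is determined by its support, so distinct codewords yield distinct subsets, and covering corresponds exactly to containment: $\boldsymbol{s}_i\vee\boldsymbol{s}_j=\boldsymbol{s}_i$ if and only if $A_j\subseteq A_i$. Hence the mutual non-covering condition says precisely that $A_i\not\subseteq A_j$ for all $i\neq j$; that is, the family $\{A_1,\dots,A_n\}$ is an antichain in the Boolean lattice of subsets of $\{1,\dots,m\}$.

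Finally I would apply Sperner's theorem, which bounds the size of any antichain in the Boolean lattice of an $m$-set by $\binom{m}{\lfloor m/2\rfloor}$. This gives $n\leq\binom{m}{\lfloor m/2\rfloor}$ for every $(m,n,1,1)$ X-code, and therefore $M(m,1,1)\leq\binom{m}{\lfloor m/2\rfloor}$. The argument is essentially a dictionary between X-codes and antichains, so there is no genuine computational obstacle; the only point requiring care is the reduction step, where one must recognize that the singleton case of the X-code inequality is exactly the covering relation and check both orderings to obtain full incomparability rather than a one-sided condition. The substantive combinatorics is then furnished by the cited Sperner bound.
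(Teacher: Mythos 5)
Your proof is correct and takes essentially the same route as the paper: the paper observes that an $(m,n,1,x)$ X-code is equivalent to a $(1,x)$-superimposed code (for $x=1$, a family of pairwise non-covering columns, i.e.\ an antichain of supports) and then invokes Sperner's theorem, which is precisely the dictionary you make explicit. The only difference is that you spell out the reduction from the X-code definition to the antichain condition, which the paper leaves implicit by citation.
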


Indeed by taking all the $m$-dimensional vectors of weight $\lfloor m/2 \rfloor$
as codewords, we attain the bound.
The same argument is also found in \cite{XCODES}.

The following is a simple upper bound on $M(m,1,x)$:

\begin{theorem}{\rm \cite{UPPERBOUNDccfSTINSON}}\label{m1x}
For any $x \geq 2$,  \[\log_2M(m,1,x) \leq \frac{cm\log_2x}{x^2}\] for some constant $c$.
\end{theorem}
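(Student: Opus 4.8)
The plan is to pass to the equivalent combinatorial object identified just above: by the remarks preceding the statement, an optimal $(m,n,1,x)$ X-code corresponds to a $(1,x)$-superimposed code of size $m \times n$, that is, to an $x$-cover-free family $\mathcal{F} = \{S_1,\dots,S_n\}$ of subsets of an $m$-element ground set (the supports of the columns) in which no member is contained in the union of $x$ others. Thus it suffices to prove that any such family satisfies $\log_2 n \le cm\log_2 x / x^2$. First I would record the easy benchmark so as to locate the real difficulty: since $\mathcal{F}$ is in particular an antichain, Sperner's theorem (used above for $M(m,1,1)$) yields only $\log_2 n = O(m\log_2 x / x)$. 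The content of the theorem is therefore the extraction of the extra factor $1/x$ beyond this antichain estimate.

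Next I would reduce to a uniform family. Partitioning the codewords by their weight $k=|S_i|$, some weight class contains at least $n/(m+1)$ of them; after taking logarithms the factor $m+1$ is absorbed into the constant $c$, so it is enough to bound a $k$-uniform $x$-cover-free family. By the discussion following Proposition~\ref{prop:extension} we may also assume every weight exceeds $x$, and a short averaging argument shows that the relevant range of weights is $k$ of order $m/x$, so the problem reduces to bounding $n$ for a near-uniform family, with $k$ left as a free parameter to be optimized at the end.

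The heart of the argument is a double-counting estimate controlling how strongly the other codewords may overlap a fixed codeword $A \in \mathcal{F}$. Because no $x$ members cover $A$, the traces $\{A \cap S : S \in \mathcal{F},\, S \ne A\}$ cannot cover $A$ efficiently; quantitatively one bounds a suitable sum of a function of the intersection sizes $|A \cap S|$ over all $S \ne A$ by a quantity linear in $x$, and then sums this inequality over all choices of $A$. Feeding in the uniformity $k = \Theta(m/x)$ and converting the intersection bound into a bound on $n$ through a convexity (entropy) estimate produces $\log_2 n = O(m \log_2 x / x^2)$ after optimizing $k$; this is the route underlying the cited bound and the earlier estimates of Dyachkov--Rykov and Ruszink\'o.

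I expect the main obstacle to be precisely this second-order counting step. A naive treatment controls only a single candidate cover of $A$ and hence loses, recovering merely the antichain rate $O(\log_2 x / x)$; to gain the additional factor $1/x$ one must simultaneously rule out the exponentially many ways in which $x$ distinct codewords can combine to cover $A$, and it is this requirement that forces both the delicate averaging/convexity estimate and the restriction to near-uniform families. Establishing the sharp form of this intersection lemma is the crux of the proof, whereas the final optimization over the weight parameter $k$ is routine.
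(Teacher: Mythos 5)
First, a point of reference: the paper offers no proof of Theorem~\ref{m1x} at all --- it is quoted from \cite{UPPERBOUNDccfSTINSON}, and the surrounding text merely records that several distinct proofs are known, with constants roughly two \cite{SCUPPER2}, four \cite{SCUPPER4}, and eight \cite{SCUPPER8}. Your attempt must therefore stand as a free-standing argument, and as such it has a genuine gap: everything that makes the theorem true is concentrated in the one step you describe only qualitatively and then explicitly postpone. The translation to $(1,x)$-superimposed codes (equivalently, $x$-cover-free families) and the pigeonhole reduction to a single weight class, at a cost of a factor $m+1$ absorbed into $c$, are correct but routine. The claimed heart of the proof --- bounding ``a suitable sum of a function of the intersection sizes $|A\cap S|$'' by ``a quantity linear in $x$'' and then invoking ``a convexity (entropy) estimate'' --- is never formulated as a precise lemma, let alone proved, and you close by conceding that ``establishing the sharp form of this intersection lemma is the crux of the proof.'' That crux \emph{is} the theorem: the known routes to $\log_2 n = O(m\log_2 x/x^2)$ (D'yachkov--Rykov, F\"uredi's counting argument, Ruszink\'o's greedy covering argument) each require a nontrivial idea at exactly this point, and nothing in your sketch specifies which quantity to count or why it is $O(x)$. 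A plan that defers this step is not a proof.

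Two of your supporting claims are also wrong or unsubstantiated. Sperner's theorem cannot yield your stated benchmark $\log_2 n = O(m\log_2 x/x)$: the antichain property alone gives $n \le \binom{m}{\lfloor m/2\rfloor}$, i.e., $\log_2 n = O(m)$, and since it makes no reference to $x$ it can carry no $x$-dependence whatsoever. The correct elementary benchmark comes from cover-freeness itself: in an $x$-cover-free family, distinct subfamilies of size at most $x$ have distinct unions (otherwise some member would be covered by at most $x$ others), so $\binom{n}{x}\le 2^m$ and $\log_2 n = O(m/x+\log_2 x)$. Hence the theorem's content is a gain of a factor $\log_2 x/x$ over the easy bound, not the factor $1/x$ over $O(m\log_2 x/x)$ that your plan is calibrated to produce. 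Similarly, the assertion that ``a short averaging argument'' confines the problem to weights $k=\Theta(m/x)$ is stated without justification; in the genuine proofs, handling members whose size lies outside the critical range is part of the work, not a free normalization.
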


Several different proofs of Theorem \ref{m1x} are known.
Bounds on the constant $c$ are  approximately
two in \cite{SCUPPER2}, approximately four in \cite{SCUPPER4},
and approximately eight in \cite{SCUPPER8}.

The asymptotic behavior of the maximum possible number of codewords
has been also investigated for superimposed codes. 
Define the ratio $R(x)$ as
\[R(x)=\underset{m\rightarrow\infty}{\overline{\lim}}\frac{\log_2M(m,1,x)}{m}.\]

The best lower bound $\underline{R}(x) \leq R(x)$ can be found in \cite{ASLOWER}
and the best upper bound $\overline{R}(x) \geq R(x)$ in \cite{SCUPPER2}.
The descriptive asymptotic form of the best bounds as $x \rightarrow \infty$ is
\[\underline{R}(x) \sim \frac{1}{x^2\log_2e} \mbox{\ and \ }
\overline{R}(x) \sim \frac{2\log_2x}{x^2},\]
where $e$ is Napier's constant.
For a detailed summary of the known lower and upper bounds,
see \cite{SUPERIMPOSEDBOOK}.
Constructions with many codewords
have  been  studied in  \cite{SCFU2006,SCIT2000}.
See also \cite{MACULASCDISCMATH,MACULASCDISCAPPLMATH,
DISJUNCTLOVASZLOCAL,GENERALIZEDSCTHCS}
and references therein.

\section{X-Compactors with Small Fan-Out}
In this section we consider an X-compactor having sufficient
tolerance for errors and Xs, a high compaction ratio, and small fan-out.
This section is divided into four parts.
Subsection \ref{fanouttwo} deals with background and known results
of the fan-out problem in X-compactors. Then in Subsection \ref{xcodefromSTS}
we investigate X-codes that tolerate up to two X's and have the minimum fan-out.
X-Codes with further error detection ability and X-tolerance
are investigated in Subsection \ref{hightolerance-smallfanout}.
In Subsection \ref{summary_of_Section_III} we give a brief overview of the performance of our X-codes given in this section
and compare them with other codes.

\subsection{Background: Fan-Out in X-Codes}\label{fanouttwo}

X-compact reduces the number of
bits in the compacted output while keeping error detection ability by propagating each
single bit to many signal lines. In fact, each output of the X-compactor
in \cite{XCOMPACTION} connects to about half of all inputs. However,
larger fan-in increases power requirements, area, and delay \cite{XCODESCAGE}.
When these disadvantages are concerns,
fan-out of inputs of a compactor should be small to reduce fan-in values.

In terms of X-codes, the required fan-out of input $i$ in an X-compactor
is the weight of codeword $\boldsymbol{s}_i$ of the X-code.
Hence, in order to address the fan-out problem,
it is desirable for a codeword to have small weight.
However, as mentioned in Section I, an $(m,n,d,x)$ X-code containing
a codeword with weight at most $x$ is not essential
in the sense of the compaction ratio.
Hence, throughout this section, we restrict ourselves
to $(m,n,d,x)$ X-codes in which
every codeword has weight precisely $x+1$,
namely {\it constant weight} codes.

When a compactor is required to tolerate only a single unknown logic value,
fan-out is minimized when every codeword of an X-code has constant weight two.
This extreme case was addressed in \cite{XCODESCAGE} by considering a simple graph.
We briefly restate their theorems in terms of X-codes.

A {\it graph} $G$ is a pair $(V,{\mathcal E})$ such that $V$ is a finite set
and ${\mathcal E}$ is a set of pairs of distinct elements of $V$.
An element of $V$ is called a {\it vertex}, and an element of ${\mathcal E}$
is called an {\it edge}.
The {\it girth} $g$ of $G$ is the minimal size $|C|$ of
a subset $C \subset {\mathcal E}$
such that each vertex appearing in $C$ is contained in exactly two edges.

The {\it edge-vertex incidence matrix} $H$ of a graph $G = (V,{\mathcal E})$ is a
$|{\mathcal E}|\times |V|$ binary matrix $H = (h_{i,j})$ such that
rows and columns are indexed by edges and vertices respectively
and $h_{i,j}=1$ if the $i$th edge contains the $j$th vertex, otherwise $0$.
By considering the edge-vertex incidence matrix of a graph and Proposition \ref{equivalence},
we obtain:

\begin{theorem}{\rm \cite{XCODESCAGE}}\label{CAGE1}
There exists a graph $G=(V,{\mathcal E})$ of girth $g$ if and only if
there exists a $(|V|,|{\mathcal E}|,g-2,1)$ X-code of constant weight two.
\end{theorem}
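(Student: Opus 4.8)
The plan is to read the edge-vertex incidence matrix $H$ of $G$ as an X-code via Proposition~\ref{equivalence} and to translate the defining covering condition into the language of cycles. Since each edge meets exactly two vertices, every row of $H$ has weight two; conversely two distinct weight-two rows cannot agree in both their nonzero positions, so the rows of a constant-weight-two matrix are in bijection with the edges of a graph on the $m=|V|$ columns. This fixes $m=|V|$ and $n=|{\mathcal E}|$ and reduces the theorem to matching the parameter $d$ against the girth $g$, after first checking that each weight-two support consists of two \emph{distinct} vertices, so that $H$ encodes a genuine simple graph.

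First I would compute, for a codeword $\boldsymbol{s}_i$ corresponding to an edge $e_i$ with endpoints $u,w$ and a set $S_2$ of edges with $e_i \notin S_2$, what $\boldsymbol{s}_i \vee (\bigoplus S_2) = \boldsymbol{s}_i$ means. The vector $\bigoplus S_2$ carries a $1$ in coordinate $v$ precisely when $v$ has odd degree in the subgraph $S_2$, so $\boldsymbol{s}_i$ covers $\bigoplus S_2$ exactly when every odd-degree vertex of $S_2$ lies in $\{u,w\}$. The key observation is that, by the handshake lemma, the set of odd-degree vertices has even cardinality, so it is either empty or equal to $\{u,w\}$; the spurious ``single endpoint'' case is ruled out by parity. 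Hence covering occurs if and only if either $S_2$ is a nonempty even subgraph, or $S_2\cup\{e_i\}$ is one.

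This reduces the whole condition to a statement about cycles. A nonempty even subgraph has every support vertex of degree at least two, hence contains a cycle and so has at least $g$ edges. In the first case this forces $|S_2|\ge g$; in the second, $|S_2\cup\{e_i\}|\ge g$ and therefore $|S_2|\ge g-1$. The binding constraint is the second, realized by deleting one edge $e_i$ from a shortest cycle to leave a path $S_2$ of $g-1$ edges with $\bigoplus S_2 = \boldsymbol{s}_i$. Consequently no admissible $S_2$ with $1\le|S_2|\le g-2$ is ever covered, which is exactly the $(|V|,|{\mathcal E}|,g-2,1)$ X-code property; this gives the forward direction, and running the same computation backwards shows that a constant-weight-two code with this property admits no cycle of length less than $g$ in the associated graph.

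I expect the delicate step to be the exact matching of parameters rather than a mere inequality. The forward direction gives the $(|V|,|{\mathcal E}|,g-2,1)$ property directly, but the converse argument above only yields girth \emph{at least} $g$, since the notation does not force $d$ to be maximal. Both the ``$-2$'' and the tightness trace to the single minimal configuration identified above --- a shortest cycle split as $e_i$ together with a complementary path of length $g-1$ satisfying $\bigoplus S_2 = \boldsymbol{s}_i$ --- and I would use this to argue that a graph of girth exactly $g$ produces a code that is a $(|V|,|{\mathcal E}|,g-2,1)$ but not a $(|V|,|{\mathcal E}|,g-1,1)$ X-code, and to pass, in the converse, from a code realizing the threshold at $g-1$ to a graph whose girth is exactly $g$. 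Pinning this tightness down, while keeping the vertex and edge counts fixed, is where the argument must be handled with the most care.
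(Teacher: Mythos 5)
Your main argument is correct and is exactly the proof the paper intends: the paper states this theorem without proof, restating it from \cite{XCODESCAGE} and noting only that it follows by considering the edge-vertex incidence matrix of a graph together with Proposition~\ref{equivalence}. Your parity computation --- with $x=1$, the covering condition $\boldsymbol{s}_i \vee (\bigoplus S_2) = \boldsymbol{s}_i$ holds precisely when the odd-degree vertices of the subgraph $S_2$ lie among the endpoints of $e_i$, hence by the handshake lemma precisely when $S_2$ or $S_2 \cup \{e_i\}$ is a nonempty even subgraph, forcing $|S_2| \geq g$ or $|S_2| \geq g-1$ respectively --- is the verification being gestured at, and the extremal configuration (a shortest cycle split into one edge plus a complementary path) is the right witness for why $d = g-2$ is the correct parameter.

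The one step in your plan that would fail is the converse tightness outlined in your last paragraph. An $(m,n,g-2,1)$ X-code of constant weight two is not required to stop being an X-code at $d = g-1$, and nothing in the hypothesis forces the associated graph to contain a cycle of length exactly $g$. For instance, a perfect matching on six vertices yields a $(6,3,d,1)$ X-code of constant weight two for every $d$, since its graph is acyclic and no covering ever occurs; yet no graph with six vertices and three edges has girth $12$. So exact girth cannot be recovered from the stated hypotheses: the equivalence is correct only under the reading ``girth at least $g$ if and only if the incidence rows form an X-code with $d = g-2$,'' which is precisely what your main argument establishes. This looseness is inherited from the cited source rather than being a defect of your approach, but you should not spend effort pinning down exact girth in the converse, because as literally stated it is false.
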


\begin{theorem}{\rm \cite{XCODESCAGE}}\label{CAGE2}
A set ${\mathcal X}$ of $m$-dimensional vectors is
an $(m,n,d-1,1)$ X-code of constant weight two if and only if
it is an $(m,n,d,0)$ X-code of weight two.
\end{theorem}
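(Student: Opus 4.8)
The plan is to specialize the general X-code definition to the two parameter sets in question and then reduce the whole equivalence to a single observation about when a weight-two vector covers an $\oplus$-sum. Setting $x=1$ and replacing $d$ by $d-1$, a constant weight two $(m,n,d-1,1)$ X-code is a set $\mathcal{X}$ of weight-two vectors such that for every codeword $\boldsymbol{e}\in\mathcal{X}$ and every $S_2\subseteq\mathcal{X}\setminus\{\boldsymbol{e}\}$ with $1\le|S_2|\le d-1$ we have $\boldsymbol{e}\vee(\bigoplus S_2)\neq\boldsymbol{e}$; that is, $\boldsymbol{e}$ does not cover $\bigoplus S_2$. Setting $x=0$ forces $S_1=\emptyset$ and $\bigvee S_1=\boldsymbol{0}$, so a weight-two $(m,n,d,0)$ X-code is a set of weight-two vectors with $\bigoplus S_2\neq\boldsymbol{0}$ for every $S_2\subseteq\mathcal{X}$ with $1\le|S_2|\le d$. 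Once both conditions are written in this form, the equivalence will follow from a covering analysis.

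The key step, carrying essentially all the content, is a covering lemma for weight-two vectors. Because the weight of a vector over $\mathbb{F}_2$ is additive modulo $2$ and every codeword has weight $2$, the sum $\bigoplus S_2$ always has even weight. Hence if a weight-two codeword $\boldsymbol{e}$, with support $\{u,v\}$, covers $\bigoplus S_2$, then the support of $\bigoplus S_2$ is an even-cardinality subset of $\{u,v\}$, so it is either empty or all of $\{u,v\}$. Therefore
\[
\boldsymbol{e}\ \text{covers}\ \bigoplus S_2 \iff \bigoplus S_2\in\{\boldsymbol{0},\boldsymbol{e}\}.
\]
I would prove exactly this statement and use it as the bridge between the two conditions.

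The two implications then follow via the correspondence $T=S_2\cup\{\boldsymbol{e}\}$ (with $\boldsymbol{e}\notin S_2$), under which $\bigoplus T=(\bigoplus S_2)\oplus\boldsymbol{e}$, so that $\bigoplus S_2=\boldsymbol{e}$ is the same as $\bigoplus T=\boldsymbol{0}$. For the direction from $(m,n,d,0)$ to $(m,n,d-1,1)$: assuming $\boldsymbol{e}$ covers $\bigoplus S_2$ for some disjoint $S_2$ with $1\le|S_2|\le d-1$, the lemma gives either $\bigoplus S_2=\boldsymbol{0}$, a forbidden sum of size $\le d-1\le d$, or $\bigoplus S_2=\boldsymbol{e}$, whence $\bigoplus(S_2\cup\{\boldsymbol{e}\})=\boldsymbol{0}$ is a forbidden sum of size $|S_2|+1\le d$; either way the $(m,n,d,0)$ hypothesis is contradicted. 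Conversely, from $(m,n,d-1,1)$, suppose $\bigoplus T=\boldsymbol{0}$ for some $T$ with $1\le|T|\le d$; since every codeword is a nonzero (weight-two) vector, $|T|\ge 2$, so choosing any $\boldsymbol{e}\in T$ and putting $S_2=T\setminus\{\boldsymbol{e}\}$ yields a disjoint pair with $1\le|S_2|\le d-1$ and $\bigoplus S_2=\boldsymbol{e}$, so $\boldsymbol{e}$ covers $\bigoplus S_2$, contradicting the hypothesis.

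I expect the main obstacle to be not any single deep step but the bookkeeping around the index shift between $d$ and $d-1$ together with the degenerate sizes. In particular one must check that the singleton case $|T|=1$ cannot yield $\bigoplus T=\boldsymbol{0}$, which is exactly where the constant weight two (hence nonzero codeword) assumption is essential and where an off-by-one in the size bounds would otherwise break the argument; similarly one must confirm that the empty sum is correctly excluded so that the ranges $1\le|S_2|\le d-1$ and $1\le|T|\le d$ line up precisely under $T=S_2\cup\{\boldsymbol{e}\}$. The even-weight observation underlying the covering lemma is the one genuinely substantive ingredient.
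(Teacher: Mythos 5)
Your proof is correct, and the first thing to note is that the paper does not actually prove Theorem~\ref{CAGE2}: it is quoted from \cite{XCODESCAGE} (the paper merely restates the graph-theoretic results of that source in X-code language), so your argument is an independent, self-contained derivation rather than a variant of an internal proof. Your route is the natural algebraic one: specialize the definition to $x=1$ (singleton $S_1$, so the condition is that no codeword covers an $\oplus$-sum of up to $d-1$ of the others) and to $x=0$ (empty $S_1$, so the condition is that no nonempty $\oplus$-sum of up to $d$ codewords vanishes), then bridge the two with the parity lemma: an $\oplus$-sum of weight-two vectors has even weight, hence a weight-two codeword $\boldsymbol{e}$ covers $\bigoplus S_2$ exactly when $\bigoplus S_2\in\{\boldsymbol{0},\boldsymbol{e}\}$. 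The correspondence $T=S_2\cup\{\boldsymbol{e}\}$ then converts each violation of one condition into a violation of the other, and you correctly handle the off-by-one hazards: the case $|T|=1$ is impossible because codewords are nonzero, and the ranges $1\le|S_2|\le d-1$ and $1\le|T|\le d$ match under the correspondence. For context, your parity lemma is precisely the algebraic form of the graph-theoretic fact behind Theorem~\ref{CAGE1}: viewing weight-two codewords as edges via the edge-vertex incidence matrix, an $\oplus$-sum vanishes iff the chosen edges induce even degree at every vertex, so your lemma says that an edge covers such a sum iff the remaining edges either already form an even subgraph or form a path that the edge closes into a cycle; this is why girth governs both parameter sets. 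The one point worth flagging explicitly is the implicit hypothesis $d\ge 2$, needed so that $d-1$ is a legal (positive) error-detection parameter in the definition of an X-code; the paper's statement glosses over this degenerate case as well.
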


These two theorems say that in order to design an X-compactor with high
error detection ability, we only need to find a graph with large girth.
The same argument is also found in \cite{XCODES}.
For existence of such graphs and more details on X-codes of constant weight two,
see \cite{XCODESCAGE} and references therein.

\subsection{Two X's and Fan-Out Three}\label{xcodefromSTS}

Multiple X's can occur; here we present  X-codes
that are tolerant to two X's and have the maximum compaction ratio.
To  accept up to two unknown logic values, we need an X-code of constant weight three.
We employ a well-known class of combinatorial designs.

A {\it set system} is an ordered pair $(V,{\mathcal B})$ such that $V$ is
a finite set of {\it points}, and ${\mathcal B}$ is a family
of subsets ({\it blocks}) of $V$.
A {\it Steiner $t$-design} $S(t,k,v)$ is a set system $(V,{\mathcal B})$,
where $V$ is a finite set of cardinality $v$ and
${\mathcal B}$ is a family of $k$-subsets of $V$
such that each $t$-subset of $V$ is contained in exactly one block.
Parameters $v$ and $k$ are the {\it order} and {\it block size}
of a Steiner $t$-design. When $t=2$ and $k=3$,
an $S(2,3,v)$ is a {\it Steiner triple system} of order $v$, STS$(v)$. An STS$(v)$ exists
if and only if $v \equiv 1, 3$ (mod $6$) \cite{TRIPLESYSTEMS}.
A {\it triple packing} of {\it order} $v$ is a set system $(V,{\mathcal B})$
such that ${\mathcal B}$ is a family of triples of a finite set $V$ of cardinality $v$
and any pair of elements of $V$ appear in ${\mathcal B}$ at most once.
An STS$(v)$ is a triple packing
of order $v\equiv 1,3$ (mod $6$) containing the maximum number of triples.

The {\it point-block incidence matrix} of a set system $(V,{\mathcal B})$ is
the binary $|V|\times |{\mathcal B}|$ matrix $H = (h_{i,j})$ such that
rows are indexed by points, columns are indexed by blocks,
and $h_{i,j}=1$ if the $i$th point is contained in the $j$th block, otherwise $0$.
The {\it block-point incidence matrix} is its transpose.

When $d=1$, an $(m,n,1,2)$ X-code of constant weight three
is equivalent to a $(1,2)$-superimposed code of size $m\times n$
of constant column weight three.
It is well known that the point-block incidence matrix of an $S(t,k,v)$ forms
an $(1,\lceil k/(t-1)\rceil-1)$-superimposed code of size
$v\times {{v}\choose{t}}/{{k}\choose{t}}$.
Hence, by using  an STS$(v)$, we obtain for every $v \equiv 1, 3$ (mod $6$)
a $(v,v(v-1)/6,1,2)$ X-code. An upper bound on the number of codewords of
$(1,2)$-superimposed codes of constant weight $k$
is available: 

\begin{theorem}{\rm \cite{EFFCOVERFREEFAMILY1}}\label{coverfreer=2}
Let $n^k(m)$ denote the maximum number of columns
of a $(1,2)$-superimposed code such that and every column
is of length $m$ and has constant weight $k$. Then, 
\[n^{2t-1}(m) \leq n^{2t}(m+1) \leq \frac{{{m}\choose{t}}}{{{2t-1}\choose{t}}}\]
with equality  if and only if there exists a Steiner $t$-design $S(t,2t-1,m)$.
\end{theorem}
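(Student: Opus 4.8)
The plan is to prove the statement as three linked claims --- the right-hand upper bound, the left-hand inequality, and the construction realizing equality --- and to read off the equality characterization from the analysis of the bound. Since the X-code language and the superimposed-code language coincide by Proposition~\ref{equivalence}, throughout I identify a code with the family $\mathcal{C}$ of the supports of its columns, so that a $(1,2)$-superimposed code of constant weight $k$ is exactly a family of $k$-subsets of an $m$-set in which no member is contained in the union of two others.

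First I would dispose of the two easy ingredients. For the left inequality $n^{2t-1}(m)\le n^{2t}(m+1)$, take an optimal weight-$(2t-1)$ code on $[m]$ and append one new coordinate set to $1$ in every column. Each support gains this common point, so all weights become $2t$ and the ground set becomes $[m+1]$; since the new point lies in every member, a containment $C\subseteq A\cup B$ among the extended sets holds if and only if it holds among the originals, so the $(1,2)$-superimposed property is preserved and the number of columns is unchanged. For the construction giving equality, suppose $S(t,2t-1,m)$ exists and take its blocks as columns. Two distinct blocks meet in at most $t-1$ points, for otherwise a common $t$-subset would lie in two blocks; hence if $C\subseteq A\cup B$ then, since $|C\cap A|+|C\cap B|\ge|C|=2t-1$, one of $A,B$ meets $C$ in at least $t$ points, a contradiction. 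Thus the $\binom{m}{t}/\binom{2t-1}{t}$ blocks form a weight-$(2t-1)$ $(1,2)$-superimposed code, so $n^{2t-1}(m)\ge\binom{m}{t}/\binom{2t-1}{t}$; lifting it by the previous paragraph realizes the same value in weight $2t$ on $[m+1]$.

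The substance is the upper bound $n^{2t}(m+1)\le\binom{m}{t}/\binom{2t-1}{t}$, which I would attack by double counting incidences between columns and $t$-subsets. The key structural observation is local: if a $t$-subset $T$ is contained in columns $C_1,\dots,C_d$, then, because every $C_i$ contains $T$, the condition $C_j\subseteq C_i\cup C_k$ is equivalent to $C_j\setminus T\subseteq(C_i\setminus T)\cup(C_k\setminus T)$, so the links $C_i\setminus T$ themselves form a $(1,2)$-superimposed family. Pushing this down to the smallest subsets shows that within any single column only a tightly constrained (intersecting) collection of $t$-subsets can be \emph{shared} with other columns, while the remaining $t$-subsets are \emph{private}, occurring in that one column alone. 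Counting each column against its private $t$-subsets --- which, being private, are never double counted --- should then yield an inequality of the shape $|\mathcal{C}|\cdot\binom{2t-1}{t}\le(\text{number of available }t\text{-subsets})$.

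The main obstacle is to make this charging \emph{exactly} tight: to show that each column contributes at least $\binom{2t-1}{t}$ private $t$-subsets and that the pool of such subsets has size only $\binom{m}{t}$ rather than the naive $\binom{m+1}{t}$. This is precisely where evenness of the weight $2t$ is used --- a crude count over all $\binom{m+1}{t}$ subsets loses a constant factor, and recovering the sharp constant $\binom{2t-1}{t}$ together with the ``$-1$'' in $\binom{m}{t}$ requires exploiting that the shared $t$-subsets of a column form a star-like family whose apex behaves like a distinguished coordinate. Finally, the equality analysis falls out of tightness: if $n^{2t}(m+1)=\binom{m}{t}/\binom{2t-1}{t}$ then every inequality above is an equality, which forces the private $t$-subsets to partition all of $\binom{m}{t}$, each occurring in exactly one column once the common coordinate is stripped off. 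That perfect single covering is by definition a Steiner $t$-design $S(t,2t-1,m)$, giving the ``only if'' direction and completing the equivalence.
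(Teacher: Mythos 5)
Your two easy ingredients are correct and would be accepted as stated: appending a common new coordinate to every column shows $n^{2t-1}(m)\le n^{2t}(m+1)$, and the blocks of an $S(t,2t-1,m)$ pairwise meet in at most $t-1$ points, so they form a weight-$(2t-1)$ code with $\binom{m}{t}/\binom{2t-1}{t}$ columns, which settles the ``if'' direction of the equality claim. Note also that there is no proof in the paper to compare against: Theorem~\ref{coverfreer=2} is quoted from Erd\H{o}s--Frankl--F\"uredi \cite{EFFCOVERFREEFAMILY1}, so your argument has to stand entirely on its own.

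It does not, because the heart of the theorem is exactly the step you label ``the main obstacle'' and then leave unresolved. The double count you set up is the standard one: two disjoint shared $t$-subsets $T$ and $C\setminus T$ of a column $C$ would give $C\subseteq A\cup B$, so the shared $t$-subsets of $C$ pairwise intersect, at most one of each complementary pair $\{T,\,C\setminus T\}$ is shared, each column has at least $\binom{2t-1}{t}$ private $t$-subsets, and private subsets are never repeated across columns. But this only yields $n^{2t}(m+1)\le\binom{m+1}{t}/\binom{2t-1}{t}$; improving $\binom{m+1}{t}$ to $\binom{m}{t}$ is the actual content of the Erd\H{o}s--Frankl--F\"uredi theorem, and your proposed route to it rests on a false structural claim. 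An intersecting family of $t$-subsets of a $2t$-set need not be ``star-like'': already for $t=2$ the family $\{12,13,23\}$ inside $\{1,2,3,4\}$ is maximal intersecting with no common element, and in general the family of all $t$-subsets of $C$ \emph{avoiding} a fixed point is a maximum intersecting family that is not a star. So there is no ``apex'' available to play the role of a distinguished coordinate; that a single point of each column absorbs all shared $t$-subsets is (essentially) the conclusion of the extremal analysis, not a hypothesis you may invoke. The same unproven structure is what your ``only if'' argument silently uses when it ``strips off the common coordinate'' to read a Steiner system out of tightness. As written, you have established the left inequality, the lower-bound construction, and the weak upper bound with $\binom{m+1}{t}$; the sharp constant $\binom{m}{t}$ and the Steiner characterization---the substance of the cited theorem---remain unproven.
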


The following is an immediate consequence:

\begin{theorem}\label{xcodests}
For any $(m,n,1,2)$ X-code of constant weight three,
$n \leq \frac{m(m-1)}{6}$
with equality  if and only if there exists an STS$(m)$.
\end{theorem}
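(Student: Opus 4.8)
The plan is to read this bound off directly from Theorem \ref{coverfreer=2} by specializing its free parameter. The bridge is the equivalence already recorded in the text: an $(m,n,1,2)$ X-code of constant weight three is the same object as a $(1,2)$-superimposed code of size $m \times n$ of constant column weight three, where the $n$ columns are the codewords, each of length $m$ and weight three. Consequently, the largest admissible $n$ for such an X-code is exactly $n^3(m)$ in the notation of Theorem \ref{coverfreer=2}, so every $(m,n,1,2)$ X-code of constant weight three satisfies $n \leq n^3(m)$.

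First I would set $t=2$ in Theorem \ref{coverfreer=2}, which is forced by the column weight: $2t-1 = 3$ gives $t=2$. The chain in that theorem then specializes to
\[
n^{3}(m) \leq n^{4}(m+1) \leq \frac{\binom{m}{2}}{\binom{3}{2}} = \frac{m(m-1)}{6},
\]
and its equality clause becomes the existence of an $S(2,3,m)$, that is, an STS$(m)$. Combining this with $n \leq n^3(m)$ yields the inequality $n \leq m(m-1)/6$ for every such X-code.

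For the equality statement I would argue both directions. If $n = m(m-1)/6$, then since $n \leq n^3(m) \leq m(m-1)/6 = n$, the whole chain collapses and in particular $n^3(m) = m(m-1)/6$; by the equality clause of Theorem \ref{coverfreer=2} this forces an STS$(m)$ to exist. Conversely, if an STS$(m)$ exists, then (as already observed in the text) its point-block incidence matrix gives a $(m,m(m-1)/6,1,2)$ X-code of constant weight three, which meets the bound. Hence the value $n = m(m-1)/6$ is attained precisely when an STS$(m)$ exists.

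Honestly, there is no serious obstacle here, which is why the statement is billed as an immediate consequence. The only points requiring care are bookkeeping: matching the X-code length $m$ to the \emph{column length} of the superimposed code rather than to its number of columns, and noticing that only the outer equality in the chain of Theorem \ref{coverfreer=2} is needed, so the intermediate quantity $n^4(m+1)$ never has to be analyzed. The substitution $t=2$ does the rest.
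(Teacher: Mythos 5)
Your proposal is correct and follows exactly the route the paper intends: the paper states Theorem \ref{xcodests} as an ``immediate consequence'' of Theorem \ref{coverfreer=2}, relying on the same equivalence between constant-weight-three $(m,n,1,2)$ X-codes and $(1,2)$-superimposed codes with column weight three, with the substitution $t=2$ and the STS$(m)$ incidence-matrix construction supplying the equality direction. Your write-up simply makes explicit the bookkeeping the paper leaves to the reader.
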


Hence for  $d=1$, $x=2$, and fan-out three,
an X-code from any STS$(v)$ has the maximum compaction ratio $(v-1)/6$.

One may ask for larger error detectability of an $(m,n,1,2)$ X-code
when one (or zero) unknown logic value is assumed.
An $(m,n,d,x)$ X-code is also an $(m,n,d+1,x-1)$ X-code, and hence
any $(m,n,1,2)$ X-code from an STS$(m)$ is also an $(m,n,2,1)$ X-code.
However, a careful choice of Steiner triple systems gives higher error detectability
while maintaining the compaction ratio.

A {\it configuration} ${\mathcal C}$ in a triple packing, $(V,{\mathcal B})$, 
is a subset ${\mathcal C} \subseteq {\mathcal B}$.
The set of points appearing in at least one block of a configuration ${\mathcal C}$
is denoted by $V({\mathcal C})$.
Two configurations ${\mathcal C}$ and ${\mathcal C}'$ are
{\it isomorphic}, denoted ${\mathcal C} \cong {\mathcal C}'$,
if there exists a bijection $\phi : V({\mathcal C}) \rightarrow V({\mathcal C}')$
such that for each block $B \in {\mathcal C}$,
the image $\phi(B)$ is a block in ${\mathcal C}'$.
When $|{\mathcal C}|=i$,
a configuration ${\mathcal C}$ is  an {\it $i$-configuration}.
A configuration ${\mathcal C}$ is  {\it even}
if for every point $a$ appearing in ${\mathcal C}$
the number $|\{B: a \in B \in {\mathcal C}\}|$
of blocks containing $a$ is even.
Because every block in a triple packing has three points, no $i$-configuration for $i$ odd  is even.  

A triple packing is  {\it $r$-even-free} if
for every integer $i$ satisfying $1\leq i \leq r$
it contains no even $i$-configurations.
By definition every $r$-even-free triple packing, $r\geq 2$, is also $(r-1)$-even-free.
For an even integer $r$, an $r$-even-free triple packing is also $(r+1)$-even-free.
Every triple packing is trivially $3$-even-free.
For $v >3$ an STS$(v)$ may or may not be $4$-even-free.
Up to isomorphism, the only even $4$-configuration is the {\it Pasch} configuration.
It can be written on six points and four blocks:
$\{\{a, b, c\}, \{a, d, e\}, \{f, b, d\}, \{f, c, e\}\}$.
For the list of all the small configurations in a triple packing and more complete treatments, we refer the reader to \cite{TRIPLESYSTEMS} and \cite{stopsts}.
Because a $4$-even-free STS is $5$-even-free, an STS is $5$-even-free if
and only if it contains no Pasch configuration.

\begin{lemma}\label{5-even-free}
If there exists a $5$-even-free STS$(v)$, there exists
a $(v,v(v-1)/6,3,1)$ X-code of constant weight three. The
code is  a $(v,v(v-1)/6,5,0)$ X-code of constant weight
three.
\end{lemma}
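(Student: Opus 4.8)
The plan is to realize the desired X-code concretely: take the ground set $V$ of the hypothesized $5$-even-free $\mathrm{STS}(v)$ as the coordinate index set, and for each block $B$ of the design let the corresponding codeword $\boldsymbol{s}_B$ be the characteristic vector of $B$ in $\{0,1\}^v$. Since every block is a triple this yields $v(v-1)/6$ codewords, each of weight three, so the constant-weight-three and cardinality claims are immediate and it remains only to verify the two X-code conditions. The single observation that drives both verifications is that, for a family $S_2$ of blocks, the support of $\bigoplus S_2$ is exactly the set of points lying in an odd number of the blocks of $S_2$; in particular $\bigoplus S_2 = \boldsymbol{0}$ if and only if every point meets $S_2$ an even number of times, that is, if and only if $S_2$ is an even configuration.

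For the $(v,v(v-1)/6,5,0)$ claim, $x=0$ forces $S_1=\emptyset$ and $\bigvee S_1 = \boldsymbol{0}$, so the defining inequality becomes $\bigoplus S_2 \neq \boldsymbol{0}$ for every configuration $S_2$ with $1 \le |S_2| \le 5$. By the observation above this is precisely the statement that the design contains no even $i$-configuration for $1 \le i \le 5$, which is exactly $5$-even-freeness. Hence this direction is immediate from the hypothesis.

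The substantive case is the $(v,v(v-1)/6,3,1)$ claim, where $x=1$ and $S_1=\{B_0\}$ is a single block, so $\bigvee S_1$ is the characteristic vector of the triple $B_0$. Here the covering condition $(\bigvee S_1)\vee(\bigoplus S_2)=\bigvee S_1$ says exactly that $\operatorname{supp}(\bigoplus S_2) \subseteq B_0$, and the goal is to show this cannot happen for a set $S_2$ disjoint from $\{B_0\}$ with $1 \le |S_2| \le 3$. I would argue by the size of $\operatorname{supp}(\bigoplus S_2)$, whose parity equals that of $|S_2|$ because the total number of point-incidences is $3|S_2|$. If $|S_2|=1$ the support is a triple distinct from $B_0$, hence not contained in $B_0$; if $|S_2|=2$ then two distinct blocks of an $\mathrm{STS}(v)$ meet in at most one point, so the support has size $6$ or $4$ and again cannot lie inside a triple. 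The remaining case $|S_2|=3$ is where the design structure is essential: using that the three blocks pairwise meet in at most one point, a short enumeration of the possible intersection patterns shows that $\operatorname{supp}(\bigoplus S_2)$ has size $3$, $5$, $7$, or $9$, and that size $3$ occurs only for a ``triangle'' of three triples, in which case $\operatorname{supp}(\bigoplus S_2)$ is the set of its three multiplicity-one points. The covering condition would then force $B_0$ to be exactly this triple, so $\{B_0\}\cup S_2$ would consist of four triples each point of which lies in exactly two blocks, i.e. a Pasch configuration. Since a $5$-even-free (equivalently Pasch-free) $\mathrm{STS}(v)$ contains no such configuration, the covering condition is unreachable and the $(3,1)$ condition holds.

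I expect the enumeration in the $|S_2|=3$ subcase to be the main obstacle: the whole lemma rests on pinning down that the only way three triples can have their symmetric difference swallowed by a fourth triple is the Pasch configuration. The parity remark (that $\operatorname{supp}(\bigoplus S_2)$ has odd size when $|S_2|=3$) rules out support size $1$ once one checks that size $1$ would require four points of multiplicity two while only three block-pairs are available to supply shared points, so the case reduces cleanly to the triangle. Both conditions thus follow from the single hypothesis of $5$-even-freeness; indeed, for this constant-weight-three code both are equivalent to the absence of Pasch configurations.
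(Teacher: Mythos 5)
Your proof is correct and takes essentially the same route as the paper's: represent blocks by their characteristic vectors, observe that $5$-even-freeness is exactly the $(v,v(v-1)/6,5,0)$ condition, and reduce the critical case of one X and three errors to the presence of a Pasch configuration, contradicting even-freeness. Your explicit enumeration of intersection patterns for three triples simply fills in the detail that the paper compresses into the assertion that ``the only possible case is that the $4$-configuration forms a Pasch configuration.''
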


\begin{proof}
Let $(V,{\mathcal B})$ be a $5$-even-free STS$(v)$.
For every $B_i \in {\cal B}$ define a $v$-dimensional vector
$\boldsymbol{s}_i$ such that each coordinate $s_j^{(i)} \in \boldsymbol{s}_i$
is indexed by a distinct point $j \in V$ and $s_j^{(i)}=1$
if $j \in B_i$, otherwise $0$. Then we obtain a $(v,v(v-1)/6,1,2)$ X-code
${\mathcal S} = \{\boldsymbol{s}_i: B_i \in {\cal B}\}$ of constant weight three.
We prove that ${\mathcal S}$ is a $(v,v(v-1)/6,3,1)$ X-code
that is also a $(v,v(v-1)/6,5,0)$ X-code.
By definition, for $1 \leq i \leq 5$ no $i$-configuration
${\mathcal C} \subseteq {\mathcal B}$ is even. Hence
\[\bigoplus \{\boldsymbol{s}_i: B_i \in {\mathcal C}\} \not= \boldsymbol{0}.\]
This implies that ${\mathcal S}$ is a $(v,v(v-1)/6,5,0)$ X-code.
On the other hand, since no pair of points appears twice,
for any mutually distinct blocks $B_i$, $B_j$, $B_k \in {\mathcal B}$,
\[\boldsymbol{s}_i \not= \boldsymbol{s}_j \mbox{ and }\boldsymbol{s}_i
\vee (\boldsymbol{s}_j \oplus \boldsymbol{s}_k) \not=\boldsymbol{s}_i.\]
It remains to show that no codeword in ${\mathcal S}$
covers addition of three others.
Suppose to the contrary that there exist four distinct codewords
$\boldsymbol{s}_i$, $\boldsymbol{s}_j$, $\boldsymbol{s}_k$,
and $\boldsymbol{s}_l$ such that
\[\boldsymbol{s}_i \vee (\boldsymbol{s}_j \oplus \boldsymbol{s}_k
\oplus \boldsymbol{s}_l) =  \boldsymbol{s}_i.\]
Because no pair of points appears twice and every block has exactly three points,
the only possible case is that the $4$-configuration
$\{B_i, B_j, B_k, B_l\}$ forms a Pasch configuration, and hence
it is even, a contradiction. 
\end{proof}

Steiner triple systems avoiding Pasch configurations have been long studied  as {\it anti-Pasch} STSs \cite{TRIPLESYSTEMS}.

\begin{theorem}{\rm \cite{RESOLUTIONANTIPASCH}}\label{anti-Pasch}
There exists a $5$-even-free STS$(v)$ if and only if $v \equiv 1, 3$ (mod $6$) and $v \not\in \{ 7,13\}$.
\end{theorem}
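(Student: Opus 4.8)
The plan is to treat the two directions separately, with the forward (existence) direction carrying essentially all of the difficulty. Recall from the discussion preceding the statement that a $5$-even-free STS is exactly an \emph{anti-Pasch} STS, i.e.\ one containing no Pasch configuration, so the theorem is a spectrum result for anti-Pasch Steiner triple systems.

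First I would dispose of necessity. Since a $5$-even-free STS is in particular an STS$(v)$, the congruence $v \equiv 1,3 \pmod 6$ is forced by the existence spectrum of Steiner triple systems quoted above. It therefore remains to rule out $v \in \{7,13\}$. For $v = 7$ the Steiner triple system is unique (the Fano plane), and one checks directly that it is saturated with Pasch configurations. For $v = 13$ there are exactly two isomorphism classes of STS, and I would verify that each contains a Pasch configuration. Alternatively, both small cases can be settled by a counting argument that gives a strictly positive lower bound on the number of Pasch configurations in an STS$(v)$ for these orders, obtained by counting, for each pair of blocks meeting in a point, the completions forced on the remaining points; this avoids appealing to a full classification.

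For sufficiency I would construct anti-Pasch STS$(v)$ for all remaining admissible orders (that is, $v \equiv 1,3 \pmod 6$ with $v \geq 9$ and $v \neq 13$) by combining a finite library of explicitly built base designs with recursive constructions that provably preserve Pasch-freeness. The backbone is a set of seed systems for the low orders $9, 15, 19, 21, 25, 27, 31, 33, 37, 39, \dots$, enough to initialize both residue classes and to cover the awkward orders lying just above the two exceptions, which cannot be reached recursively from smaller anti-Pasch inputs. These seeds would come from direct methods, e.g.\ cyclic or $1$-rotational difference constructions in which the base blocks are chosen so that no four translates close up into a Pasch configuration (a finite verification), supplemented by computer search where no clean difference family is available. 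The engine is then a family of multiplicative recursions of the shape $u \mapsto 3u$, $u \mapsto 2u+1$, and similar, built from group divisible designs whose groups are filled with anti-Pasch subsystems; note that tripling keeps both classes inside $v \equiv 3 \pmod 6$ while $u \mapsto 2u+1$ interchanges the two residue classes, so the combination reaches every admissible order. The crucial lemma in each case is that any Pasch configuration in the composite design must \emph{project}, under the natural map to the ingredient structure, into a forbidden even configuration inside one of the ingredients; since all ingredients are anti-Pasch and the group divisible framework excludes the degenerate projections, the composite inherits Pasch-freeness.

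The main obstacle is twofold. First, one must assemble \emph{enough} recursions, with mutually compatible multipliers, so that starting from the finite seed set every admissible order in both classes $v \equiv 1$ and $v \equiv 3 \pmod 6$ is attained; bridging the two classes and covering the sparse region of small orders near $13$ is where the bookkeeping is most delicate. Second, and more technically demanding, is the Pasch-freeness verification inside each recursion: I would have to enumerate every way four blocks of the composite, drawn from the various block classes (within a group, across groups, and from the transversal ingredient), could form a Pasch configuration, and show each such pattern forces a Pasch or other small even configuration in an ingredient. Making this projection argument exhaustive and airtight across all the construction families is the heart of the work.
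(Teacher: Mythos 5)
The first thing to note is that the paper does not prove this statement at all: it is quoted, with a citation, as the resolution of the anti-Pasch conjecture (Grannell, Griggs, and Whitehead, together with the companion constructions of Ling, Colbourn, Grannell, and Griggs). The only ingredient the paper itself supplies is the observation, made just before the statement, that $5$-even-free is equivalent to anti-Pasch, which you correctly recall. So the comparison here is against the cited literature, not against an in-paper argument.

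Measured against that literature, your outline names the right skeleton --- necessity via the STS spectrum plus a finite check that the Fano plane and both STS$(13)$s contain Pasch configurations; sufficiency via seed systems and Pasch-preserving recursions --- but it is a research program rather than a proof, and the genuine gap sits exactly where you place ``the heart of the work.'' The crucial lemma you assert, that any Pasch configuration in a composite design must project to a forbidden configuration in an ingredient and that ``the group divisible framework excludes the degenerate projections,'' is false for naive GDD, tripling, or $u \mapsto 2u+1$ constructions: a Pasch can be formed entirely from cross blocks whose projection collapses points or lands on a block multiset that is not a configuration of the ingredient at all, so no ingredient property is violated. Eliminating these cross-Pasches is precisely why the constructions in the literature require specially chosen ingredients (quasigroups and Latin squares avoiding particular subsquares, carefully selected difference families), and why the class $v \equiv 1 \pmod 6$ resisted attack for roughly a quarter century after Erd\H{o}s posed the problem. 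Since every hard step --- the explicit seeds, the explicit recursions with compatible multipliers, and the exhaustive verification that no cross-Pasch survives --- is deferred, the proposal cannot be credited as a proof; at best it is an accurate table of contents for the papers from which the theorem is quoted.
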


By combining Theorem \ref{anti-Pasch} and Lemma \ref{5-even-free}, we obtain:

\begin{theorem}\label{anti-Paschxcodes}
For every $v \equiv 1, 3$ (mod $6$) and $v \not\in \{ 7,13\}$,
there exists a $(v,v(v-1)/6,1,2)$ X-code of constant weight three
that is a $(v,v(v-1)/6,3,1)$ X-code and a $(v,v(v-1)/6,5,0)$ X-code.
\end{theorem}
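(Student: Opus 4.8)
The plan is to assemble the statement from two results already in hand: the existence of anti-Pasch (equivalently, $5$-even-free) Steiner triple systems established in Theorem~\ref{anti-Pasch}, and the translation of the $5$-even-free property into X-code parameters established in Lemma~\ref{5-even-free}. First I would fix an integer $v$ with $v \equiv 1, 3$ (mod $6$) and $v \not\in \{7,13\}$, which is precisely the range in which Theorem~\ref{anti-Pasch} guarantees a $5$-even-free STS$(v)$, say $(V,{\mathcal B})$.

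Next I would feed $(V,{\mathcal B})$ into Lemma~\ref{5-even-free}. That lemma takes a $5$-even-free STS$(v)$ and builds a single constant-weight-three code ${\mathcal S}$ on $v$ coordinates with $v(v-1)/6$ codewords, verifying in one argument that ${\mathcal S}$ is simultaneously a $(v,v(v-1)/6,3,1)$ X-code and a $(v,v(v-1)/6,5,0)$ X-code. This immediately yields two of the three claimed parameter sets. For the remaining $(v,v(v-1)/6,1,2)$ descriptor, I would invoke the standard fact recorded just before Theorem~\ref{xcodests}: the point-block incidence matrix of any STS$(v)$ is a $(1,2)$-superimposed code of constant column weight three, so the very same ${\mathcal S}$ is also a $(v,v(v-1)/6,1,2)$ X-code. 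Hence all three descriptions hold for one code.

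Because the substantive content lives entirely upstream---in the design-theoretic existence result of Theorem~\ref{anti-Pasch} and in the configuration analysis carried out in the proof of Lemma~\ref{5-even-free}---the combination step itself presents no real obstacle. The one point deserving care is \emph{consistency}: the theorem asserts three parameter sets for a single X-code rather than three unrelated codes. I would therefore stress that Lemma~\ref{5-even-free} produces exactly one set ${\mathcal S}$ meeting all of them, and that the $(v,v(v-1)/6,1,2)$ property is just that same ${\mathcal S}$ regarded as a superimposed code. No further verification is needed, so the proof reduces to citing Theorem~\ref{anti-Pasch} to produce the STS and Lemma~\ref{5-even-free} to convert it.
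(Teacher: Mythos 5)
Your proposal is correct and matches the paper's own proof, which likewise obtains Theorem~\ref{anti-Paschxcodes} by simply combining Theorem~\ref{anti-Pasch} with Lemma~\ref{5-even-free}; your extra remark that the $(v,v(v-1)/6,1,2)$ property comes from the superimposed-code fact is exactly how the lemma's own proof establishes it for the same single code ${\mathcal S}$.
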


An X-compactor designed from these can detect any odd number of errors unless there is  an unknown logic value.
One may want to take advantage of the high compaction ratio of the optimal $(m,n,1,2)$ X-codes arising from $4$-even-free STSs
when there is only a small possibility that more than two Xs occur or multiple errors happen with multiple Xs.
Our X-codes from $4$-even-free STSs also have high performance in such situations:

\begin{theorem}\label{one_error_three_xs}
The probability that a $(v,v(v-1)/6,1,2)$ X-code from a $4$-even-free STS$(v)$ fails to detect a single error when there are exactly three Xs is
$\frac{162(v-3)^2}{(v+2)(v+3)(v-4)(v^2-v-18)}$.
\end{theorem}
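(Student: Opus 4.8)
The plan is to set up a uniform probability model over the one-error, three-X scenario and to reduce ``failure to detect'' to a purely combinatorial covering condition in the underlying Steiner system. Fix the $4$-even-free STS$(v)$ $(V,\mathcal{B})$ with $b = |\mathcal{B}| = v(v-1)/6$, and recall that the codeword for a block $B_\ell = \{p_1,p_2,p_3\}$ carries its three $1$'s in the coordinates indexed by $p_1,p_2,p_3$. A single error selects one block $B_\ell$ (the error block) and exactly three Xs select three further distinct blocks (the X-blocks); I take the probability uniform over all such configurations, of which there are $b\binom{b-1}{3}$.

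First I would translate detection into the language of points. By the detection criterion in the proof of Proposition \ref{equivalence}, for a single error the output indexed by point $c$ exposes the error precisely when $c \in B_\ell$ and no X-block contains $c$; any output with $c \notin B_\ell$ carries no error contribution at all. Hence the error is detected if and only if some point of $B_\ell$ is avoided by all three X-blocks, so a configuration is a \emph{failure} exactly when the three X-blocks together cover all three points $p_1,p_2,p_3$.

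The key combinatorial step is to count failures using the Steiner property alone. Since every pair of points lies in exactly one block, no block other than $B_\ell$ can contain two of $p_1,p_2,p_3$, so each X-block meets $\{p_1,p_2,p_3\}$ in at most one point. Covering all three points with only three such blocks therefore forces each X-block to contain exactly one of them, with the three covered points distinct. Using the replication number $(v-1)/2$ of an STS$(v)$, each point lies in $(v-3)/2$ blocks other than $B_\ell$, and these three choices are independent and automatically produce distinct blocks, so the number of failure configurations for a fixed error block is $\big((v-3)/2\big)^3$. I would remark that this count invokes only the STS axioms, so the anti-Pasch ($4$-even-free) hypothesis—which governs the code's extra error-detection strength—plays no role in the formula.

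Finally I would assemble the probability as $\dfrac{\big((v-3)/2\big)^3}{\binom{b-1}{3}}$, the common error-block factor $b$ cancelling by symmetry. The remaining work is the routine simplification of $\binom{b-1}{3}$, which factors cleanly through $b-1 = (v-3)(v+2)/6$, $b-2 = (v-4)(v+3)/6$, and $b-3 = (v^2-v-18)/6$; substituting these and reducing the constant $6^4/8 = 162$ yields the stated expression. The only genuine obstacle is the counting insight that ``cover all three points'' collapses to ``exactly one block per point''—once that is secured, the rest is bookkeeping.
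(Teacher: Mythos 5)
Your proof is correct, and its counting step takes a genuinely different route from the paper's. Both arguments begin with the same reduction: with a single error, detection fails precisely when the three points of the error block are covered by the union of the three X-blocks. The paper then counts failures through the census of $4$-block configurations, invoking the fact that the number of occurrences of each $4$-configuration in an STS$(v)$ is a function of $v$ and the number of Pasch configurations; the $4$-even-free hypothesis forces the Pasch count to zero, so every relevant configuration count becomes a function of $v$ alone, and the result follows by summing the covering types. You bypass the configuration census entirely: the pair property forces any covering triple of X-blocks to consist of exactly one block through each point of $B_\ell$, the three families of candidate blocks are pairwise disjoint of size $(v-3)/2$ each, so there are exactly $\left(\frac{v-3}{2}\right)^3$ failing triples per error block --- in \emph{any} STS$(v)$. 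This is more elementary, and it buys a strictly stronger conclusion, exactly as you remark: the formula needs only the Steiner axioms, so the $4$-even-free hypothesis is superfluous for this particular statement (one can confirm this on the Fano plane, which does contain Pasch configurations: $8/\binom{6}{3}=2/5$ agrees with the formula at $v=7$). What the paper's heavier machinery buys is uniformity: the same configuration-counting argument carries over to Theorem \ref{two_errors_two_xs}, where the Pasch configuration itself is among the failing $4$-configurations, so there the failure probability genuinely depends on the Pasch count and $4$-even-freeness cannot be dropped. Your concluding algebra ($b-1=\frac{(v-3)(v+2)}{6}$, $b-2=\frac{(v-4)(v+3)}{6}$, $b-3=\frac{v^2-v-18}{6}$, and $6^4/8=162$) checks out.
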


\begin{proof}
Because there is only one error, an X-code fails to detect this error when all three points in the block
that corresponds to the error are contained in at least one block corresponding to an X.
The number of occurrences of each $4$-configuration in an STS$(v)$ is determined by $v$ and the number of Pasch configurations (see \cite{TRIPLESYSTEMS}, for example).
A simple calculation proves the assertion.
\end{proof}

\begin{theorem}\label{two_errors_two_xs}
The probability that a $(v,v(v-1)/6,1,2)$ X-code from a $4$-even-free STS$(v)$ fails to detect errors when there are exactly two Xs and exactly two errors is
$\frac{1296}{(v+2)(v+3)(v-4)(v^2-v-18)}$.
\end{theorem}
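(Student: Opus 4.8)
The plan is to reduce the probabilistic statement to a pure counting problem over choices of four distinct blocks of the STS, and then to evaluate that count by a short case analysis exploiting the triple-packing property. Throughout, I identify each input of the compactor with its codeword, i.e.\ with a block (triple) of the STS$(v)$, exactly as in Lemma \ref{5-even-free}. If the two errors sit at the inputs indexed by blocks $E_1,E_2$ and the two Xs at the inputs indexed by blocks $X_1,X_2$ (four distinct blocks), then by the masking/cancellation analysis preceding Proposition \ref{equivalence} a compactor output --- that is, a point $p\in V$ --- reveals the error exactly when $p$ lies in an odd number of the error blocks and in none of the X blocks. Writing $E_1\triangle E_2$ for the symmetric difference, the code therefore fails to detect the two errors if and only if
\[
E_1\triangle E_2\subseteq X_1\cup X_2 .
\]
So the numerator I must count is the number of selections (an unordered error pair together with a disjoint unordered X pair) satisfying this inclusion, and the denominator is the total number of such selections, namely $\binom{n}{2}\binom{n-2}{2}$ with $n=v(v-1)/6$.

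The heart of the argument is a case split on $|E_1\cap E_2|$. First I would dispose of the disjoint case $E_1\cap E_2=\emptyset$: then $E_1\triangle E_2=E_1\cup E_2$ has six points, and covering them by the two triples $X_1,X_2$ would force $X_1,X_2$ to be disjoint and to partition the same six points that $E_1,E_2$ do; a pigeonhole argument then places two points of some $X_i$ into a single $E_j$, so the pair they form would lie in two distinct blocks, contradicting the triple-packing property. Hence no failure arises from disjoint error blocks. In the remaining case $E_1\cap E_2=\{z\}$ I write $E_1=\{z,a,b\}$ and $E_2=\{z,c,d\}$, so that $E_1\triangle E_2=\{a,b,c,d\}$ with $a,b,c,d$ distinct, and the inclusion becomes $\{a,b,c,d\}\subseteq X_1\cup X_2$. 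The key local fact is that no single block can contain three of these four points, since any three of them include the pair $\{a,b\}$ (already in $E_1$) or the pair $\{c,d\}$ (already in $E_2$); hence each $X_i$ meets $\{a,b,c,d\}$ in at most two points. Running through the ways two triples (which meet in at most one point) can nonetheless cover all four, I expect the only option to be a $(2,2)$ split along two ``cross'' pairs, forcing the X blocks to be precisely $\{A_{ac},A_{bd}\}$ or $\{A_{ad},A_{bc}\}$, where $A_{xy}$ denotes the unique block of the STS on the pair $\{x,y\}$. A brief verification using the triple-packing property shows these two X-block pairs always exist, are distinct from $E_1,E_2$ and from each other, and do satisfy the inclusion; thus each intersecting error pair contributes exactly two failing X-choices.

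With the structure settled, the counting is routine. The number of error pairs meeting in a point is $v\binom{r}{2}$, where $r=(v-1)/2$ is the number of blocks through a point, giving $v(v-1)(v-3)/8$; multiplying by the two failing X-choices yields $v(v-1)(v-3)/4$ failing selections. Dividing by the total $\binom{n}{2}\binom{n-2}{2}$ and simplifying with the identities $v(v-1)-6=(v-3)(v+2)$, $v(v-1)-12=(v-4)(v+3)$ and $v(v-1)-18=v^2-v-18$ collapses the expression to $\dfrac{1296}{(v+2)(v+3)(v-4)(v^2-v-18)}$, as claimed.

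I expect the main obstacle to be the second paragraph: correctly ruling out the disjoint-error case and, more delicately, proving that in the intersecting case the four points can be covered by two blocks in exactly the two cross-pair ways. Everything else --- the translation of non-detection into the set inclusion and the final arithmetic --- is mechanical. It is worth noting that this count never invokes the anti-Pasch hypothesis beyond the basic triple-packing property; the $4$-even-free assumption only guarantees (via Theorem \ref{anti-Paschxcodes}) that the code in question has the advertised error-detection strength, while the failure probability computed here is in fact the same for every STS$(v)$.
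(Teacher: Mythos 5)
Your proof is correct, and while it reduces to the same count as the paper's, the execution is genuinely more self-contained. The paper's proof simply \emph{asserts} that a failure occurs only when the four blocks form a configuration isomorphic to $\{\{a,b,c\},\{d,e,f\},\{a,e,g\},\{c,f,g\}\}$ with the intersecting pair of blocks as the errors, then quotes from \cite{TRIPLESYSTEMS} the fact that a $4$-even-free STS$(v)$ contains exactly $v(v-1)(v-3)/4$ copies of this configuration, and divides by $\binom{4}{2}\binom{v(v-1)/6}{4}$ --- which is the same denominator as your $\binom{n}{2}\binom{n-2}{2}$. Your case analysis on $|E_1\cap E_2|$ is precisely the missing justification of that ``only when'' step: the pigeonhole argument kills the disjoint-error case, and the key local fact (no block other than $E_1,E_2$ can contain three of the four points of $E_1\triangle E_2$) forces the two cross-pair completions $\{A_{ac},A_{bd}\}$ and $\{A_{ad},A_{bc}\}$, giving $2\cdot v\binom{(v-1)/2}{2}=v(v-1)(v-3)/4$ failing selections. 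Your closing observation is also correct and is something the paper's route obscures: in an STS containing Pasch configurations, the two cross blocks of an intersecting error pair may share their third point, so the failing four-block sets are Pasches rather than the seven-point configuration, but the number of failing selections per intersecting error pair is still exactly two; hence the probability formula holds for \emph{every} STS$(v)$, whereas the configuration count the paper cites is specific to anti-Pasch systems. What the paper's approach buys in exchange is brevity and consistency with the configuration-counting machinery it also invokes for Theorem \ref{one_error_three_xs}; what yours buys is a complete, citation-free verification of the combinatorial core of the statement.
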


\begin{proof}
A $(v,v(v-1)/6,1,2)$ X-code from a $4$-even-free STS$(v)$ fails to detect errors when
there are exactly two Xs and exactly two errors only when corresponding four blocks form a $4$-configuration isomorphic to $\{\{a,b,c\},\{d,e,f\},\{a,e,g\},\{c,f,g\}\}$
where the first two blocks represent Xs and the other two blocks correspond to errors. The number of occurrences of the $4$-configuration
in a $4$-even-free STS$(v)$ is $\frac{v(v-1)(v-3)}{4}$, and the total number of occurrences of all $4$-configurations is ${\frac{v(v-1)}{6}}\choose{4}$  \cite{TRIPLESYSTEMS}.
Divide $\frac{v(v-1)(v-3)}{4}$ by ${{4}\choose{2}}{{\frac{v(v-1)}{6}}\choose{4}}$ to obtain the probability that the X-code fails to detect the two errors.
\end{proof}

Hence when a $4$-even-free STS of sufficiently large order is used, the probability that the corresponding X-code fails to detect errors
when the sum of the numbers of errors and Xs is at most four is close to zero.
A more complicated counting argument is necessary to calculate the performance of X-codes from STSs 
when the sum of the numbers of errors and Xs is greater than four.
For more complete treatments and current research results on counting configurations in Steiner triple systems,
we refer the reader to \cite{stopsts} and references therein.

Useful explicit constructions for $5$-even-free STS$(v)$ can be found in 
\cite{ANTIPASCHJLMS,ANTIPASCHSTINSONDM,ANTIPASCHGRANNELLARS,
ANTIPASCHBROUWER,RESOLUTIONANTIPASCH,TRIPLESYSTEMS}.
The  cyclic $5$-sparse Steiner triple systems in \cite{MITRECOLBOURN}
provide examples of $5$-even-free STS$(v)$ for $v \leq 97$,
because cyclic $5$-sparse systems are all anti-Pasch.
Further $r$-even-freeness improves the error detectability of the resulting X-code:

\begin{theorem}\label{higherevenfreeforxcodes}
For $r\geq 4$, if there exists an $r$-even-free triple packing $(V,{\mathcal B})$,
there exists a $(|V|,|{\mathcal B}|,1,2)$ X-code of constant weight three
that is also a
$(|V|,|{\mathcal B}|,3,1)$ X-code and a $(|V|,|{\mathcal B}|,r,0)$ X-code.
\end{theorem}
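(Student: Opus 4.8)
The plan is to follow the proof of Lemma~\ref{5-even-free} almost verbatim, replacing the $5$-even-free STS by the $r$-even-free triple packing and tracking where the integer $5$ is replaced by $r$ and where only $4$-even-freeness (equivalently, avoidance of the Pasch configuration) is actually needed; note that $r \geq 4$ guarantees $4$-even-freeness. As there, I would first build the candidate code: index the coordinates by the points of $V$ and, for each block $B_i \in \mathcal{B}$, let $\boldsymbol{s}_i$ be its characteristic vector, so $\mathcal{S} = \{\boldsymbol{s}_i : B_i \in \mathcal{B}\}$ is a set of $|\mathcal{B}|$ vectors of length $|V|$, each of weight three, which are distinct and pairwise non-covering because no pair of points is repeated. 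It then remains to verify the three X-code conditions.

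For the $(|V|,|\mathcal{B}|,r,0)$ claim the verification is immediate, and this is the only place where the full strength of $r$ is used. With $x=0$ the defining inequality reads $\bigoplus S_2 \neq \boldsymbol{0}$ for every $S_2 \subseteq \mathcal{S}$ with $1 \le |S_2| \le r$. Now $\bigoplus\{\boldsymbol{s}_i : B_i \in \mathcal{C}\} = \boldsymbol{0}$ holds precisely when every point lies in an even number of blocks of $\mathcal{C}$, that is, when $\mathcal{C}$ is an even configuration. Hence this condition is literally the statement that $(V,\mathcal{B})$ has no even $i$-configuration for $1 \le i \le r$, which is the hypothesis.

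For the $(|V|,|\mathcal{B}|,1,2)$ and $(|V|,|\mathcal{B}|,3,1)$ claims I would reuse the covering arguments of Lemma~\ref{5-even-free}, observing that they never invoke more than the packing property and $4$-even-freeness. Concretely, the $(1,2)$ condition asks that no codeword $\boldsymbol{s}_c$ be covered by $\boldsymbol{s}_a \vee \boldsymbol{s}_b$; since $B_c$ meets each of $B_a,B_b$ in at most one point, at most two points of $B_c$ lie in $B_a \cup B_b$, so $B_c \not\subseteq B_a \cup B_b$. For the $(3,1)$ condition ($|S_1|=1$, $1 \le |S_2| \le 3$) the cases $|S_2|=1$ and $|S_2|=2$ are settled by weight alone: a weight-three vector cannot cover a distinct weight-three vector, nor can it cover $\boldsymbol{s}_j \oplus \boldsymbol{s}_k$, which has weight four or six.

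The main obstacle, exactly as in Lemma~\ref{5-even-free}, is the case $|S_2|=3$: I must show that $\boldsymbol{s}_i$ cannot cover $w := \boldsymbol{s}_j \oplus \boldsymbol{s}_k \oplus \boldsymbol{s}_l$ for distinct $i,j,k,l$. Here I would argue by parity and packing. Since $\operatorname{supp}(w) \subseteq B_i$ we have $\operatorname{wt}(w) \le 3$, while $\operatorname{wt}(w) \equiv 3+3+3 \equiv 1 \pmod 2$, so $\operatorname{wt}(w) \in \{1,3\}$. Writing $a,b,c$ for the numbers of points lying in all three, exactly two, and exactly one of $B_j,B_k,B_l$, the nine point-incidences give $3a+2b+c=9$, and the packing property gives $3a+b \le 3$ (each of the three pairs of blocks shares at most one point). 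If $\operatorname{wt}(w)=a+c=1$, then subtracting yields $a+b=4$, whence $3a+b=2a+4 \ge 4 > 3$, a contradiction; hence $\operatorname{wt}(w)=3$, forcing $\operatorname{supp}(w)=B_i$ and $w=\boldsymbol{s}_i$. Then $\boldsymbol{s}_i \oplus \boldsymbol{s}_j \oplus \boldsymbol{s}_k \oplus \boldsymbol{s}_l = \boldsymbol{0}$, so $\{B_i,B_j,B_k,B_l\}$ is an even $4$-configuration (necessarily a Pasch configuration), contradicting $4$-even-freeness. This settles the last case and completes all three verifications.
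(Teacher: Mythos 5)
Your proof is correct and takes essentially the same route as the paper: build the block--point incidence vectors, identify zero sums with even configurations to obtain the $(|V|,|{\mathcal B}|,r,0)$ property, and use the packing property together with $4$-even-freeness (Pasch avoidance, guaranteed by $r\geq 4$) for the $(1,2)$ and $(3,1)$ properties, exactly as in Lemma~\ref{5-even-free}. If anything, your write-up is more complete than the paper's own proof, which explicitly verifies only the $(r,0)$ condition (asserting that this ``suffices'') and implicitly delegates the $(3,1)$ case analysis to the earlier lemma; your counting argument ($3a+2b+c=9$ with $3a+b\leq 3$) rigorously excludes the weight-one case that both the lemma and the theorem treat only informally.
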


\begin{proof}
Let $(V,{\mathcal B})$ be an $r$-even-free triple packing of order $v$.
For every $B_i \in {\cal B}$ define a $v$-dimensional vector
$\boldsymbol{s}_i$ such that each coordinate $s_j^{(i)} \in \boldsymbol{s}_i$
is indexed by a distinct point $j \in V$ and $s_j^{(i)}=1$
if $j \in B_i$, otherwise $0$. Then we obtain a $(|V|,|{\mathcal B}|,1,2)$ X-code
${\mathcal S} = \{\boldsymbol{s}_i: B_i \in {\cal B}\}$ of constant weight three.
It suffices to prove that ${\mathcal S}$ forms a $(|V|,|{\mathcal B}|,r,0)$ X-code.
Suppose to the contrary that ${\mathcal S}$ is not
a $(|V|,|{\mathcal B}|,r,0)$ X-code.
Then for some $r'\leq r$ there exists a set of $r'$ codewords
$\boldsymbol{s}_i$, $\boldsymbol{s}_j$\dots, $\boldsymbol{s}_{k}$ such that
\[\boldsymbol{s}_i \oplus \boldsymbol{s}_j \dots\oplus \boldsymbol{s}_k =
\boldsymbol {0}.\]
However, the set of the corresponding blocks
$B_i$, $B_j$,\dots,$B_k$ forms an even $r'$-configuration, a contradiction.
\end{proof}

One may want an $r$-even-free STS with large $r$ to obtain higher error detection ability while keeping
the maximum compaction ratio. Although it is known that every Steiner triple system has a configuration with seven or fewer blocks so that every element of the configuration belongs to at least two \cite{stopsts}, it may happen that none of these are even.  Nevertheless, the following 
gives an upper bound of even-freeness of Steiner triple systems.

\begin{theorem}\label{no8evenfreeSTS}
For $v>3$ there exists no $8$-even-free STS$(v)$.
\end{theorem}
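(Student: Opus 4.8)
The plan is to prove the contrapositive: every STS$(v)$ with $v>3$ contains an \emph{even} configuration of at most $8$ blocks, since that is exactly what $8$-even-freeness forbids. Recall from the discussion preceding Lemma~\ref{5-even-free} that an even configuration has even size, so the even configurations of size at most $8$ have $4$, $6$, or $8$ blocks; by the argument in the proof of Theorem~\ref{higherevenfreeforxcodes} this means I am hunting for a set of at most $8$ blocks whose incidence vectors sum to $\boldsymbol 0$ over ${\mathbb F}_2$. The first reduction is immediate: if the system is not anti-Pasch it contains a Pasch configuration, which is an even $4$-configuration, and we are done. Hence I may assume the system is anti-Pasch, equivalently $5$-even-free by the remark before Theorem~\ref{anti-Pasch}, and aim to manufacture an even configuration with $6$ or $8$ blocks.

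The engine of the construction is the elementary bookkeeping fact that adjoining one block $B$ to a configuration ${\mathcal C}$ toggles the parity of exactly the three points of $B$; consequently, if the set $O({\mathcal C})$ of odd-degree points of ${\mathcal C}$ is itself a block of the system, then ${\mathcal C}$ together with that block is even. The starting material is the configuration guaranteed by \cite{stopsts}: every STS has a configuration ${\mathcal C}_0$ with at most $7$ blocks in which every point has degree at least $2$. If $O({\mathcal C}_0)=\emptyset$ then ${\mathcal C}_0$ is already even with at most $6$ blocks and we are finished; otherwise each odd-degree point has degree at least $3$, and the task becomes to cancel $O({\mathcal C}_0)$ with a bounded number of extra blocks while keeping the total at most $8$.

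To control the odd-degree points I would exploit the cycle structure of a pair $(a,b)$: the blocks through $a$ and through $b$ partition $V\setminus\{a,b,c\}$, where $\{a,b,c\}$ is the block on $a$ and $b$, into even cycles, each of length at least $6$ in an anti-Pasch system. A cycle of length $8$ is, on inspection, a set of $8$ blocks in which $a$ and $b$ have degree $4$ and every other point has degree $2$, hence an even $8$-configuration; so any length-$8$ cycle finishes the proof. A cycle of length $6$ instead leaves $a$ and $b$ as the only odd-degree points, and here the unique occurrence of the pair $\{a,b\}$ lets me bring in the block on $a$ and $b$ to trade those two stubborn points for a single one, after which the parity must be resolved within budget. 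Assembling these pieces reduces the problem to a finite analysis of the isomorphism types of ${\mathcal C}_0$ (the Pasch configuration, the five-block configuration whose odd part is one point, and the $6$- and $7$-block configurations of minimum degree $2$), exhibiting in each case the at most $8$ blocks that sum to zero.

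The hard part is precisely this parity cancellation. Adjoining blocks changes $O({\mathcal C})$ by a triple at a time, and a residual odd set of size one or two — the one-point case being typified by the five-block configuration noted above — cannot in general be driven to $\emptyset$ by few blocks, since writing a single point as a short ${\mathbb F}_2$-sum of triples is exactly the obstruction one is fighting. The crux is therefore to arrange the initial configuration and the completing blocks so that the residual odd-degree set is always a genuine block of the system (or empty), which is what caps the count at $8$; the real work is in showing that the unique coverage of each pair forces the needed completing block to exist and to be distinct from the blocks already chosen, across every case of the finite analysis.
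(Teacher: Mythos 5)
Your proposal does not prove the theorem: it is a plan whose central step---the parity cancellation---is deferred rather than carried out, and the specific tools you assemble provably cannot carry it out. First, nothing guarantees that any pair admits a cycle of length exactly $8$: in an anti-Pasch system the cycles of a pair have even length at least $6$, so they may all have length $6$ or length at least $10$, and your only unconditional win (the $8$-cycle) never occurs. Second, the $6$-cycle branch is a dead end for a parity reason you yourself cite: a $6$-cycle has odd-degree set $\{a,b\}$, adjoining the block $\{a,b,c\}$ gives $7$ blocks with odd-degree set $\{c\}$, and no configuration on an odd number of triples can be even (the degree sum $3\cdot 7$ is odd), while any eighth block through $c$ necessarily turns its two other points odd; so ``trading two stubborn points for one'' can never terminate within the budget of $8$ blocks. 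Third, the starting configuration from \cite{stopsts} cannot anchor the ``finite analysis of isomorphism types'' either: as the paper notes immediately before the theorem, every STS contains a configuration on at most $7$ blocks with minimum degree $2$, but it may happen that none of these is even---e.g.\ the mitre has odd-degree set of size one, which by the parity obstruction above can never be cancelled by a single further block. Your last paragraph, which says the ``real work'' is to force the residual odd set to be a block of the system, is an accurate restatement of the difficulty, not a resolution of it; no mechanism in the proposal forces that to happen.

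The missing idea, and the route the paper takes, is to replace block-by-block cancellation by a pigeonhole that produces \emph{two} configurations with the \emph{same} odd-degree set and then superimposes them. In an anti-Pasch (equivalently $5$-even-free) STS$(v)$, the number of $4$-configurations isomorphic to $\{\{a,b,e\},\{c,d,e\},\{a,c,f\},\{b,d,g\}\}$---the type whose odd-degree set is the pair $\{f,g\}$ of degree-one points---equals $v(v-1)(v-3)/4$ by \cite{FOURLINESTS}, which exceeds $\binom{v}{2}$ for $v\geq 7$; hence two distinct such configurations ${\mathcal A}\neq{\mathcal B}$ share their degree-one pair. Every point then has even total degree in ${\mathcal A}$ and ${\mathcal B}$ combined, so if they are block-disjoint their union is an even $8$-configuration, and otherwise deleting the shared blocks leaves an even configuration on four or six blocks; either way $8$-even-freeness is contradicted. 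Your cycle idea contains a germ of this principle (two $6$-cycles on the same pair $\{a,b\}$ superimpose to an even configuration, but of size $12$), yet only the counting argument brings the size down to at most $8$; without it, the structural approach as proposed cannot be completed.
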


\begin{proof}
Suppose to the contrary that there exists an STS$(v)$,
${\cal S}$, that is $8$-even free.
Consider a $4$-configuration ${\mathcal C}$ isomorphic to
$\{\{a,b,e\},\{c,d,e\},\{a,c,f\},\{b,d,g\}\}$;
the points $f$ and $g$ are each contained in exactly one block.
For any anti-Pasch STS$(v)$
the number of occurrences of configurations isomorphic to
${\mathcal C}$ is $v(v-1)(v-3)/4$ \cite{FOURLINESTS} (see also \cite{polytope}).
Because $v \geq 7$, we have $v(v-1)(v-3)/4 > {{v}\choose{2}}$. Hence
there is a pair of configurations ${\mathcal A}$ and ${\mathcal B}$ such that
${\mathcal A} \cong {\mathcal B} \cong {\mathcal C}$ 
and they share the two points  contained in exactly one block.
In other words,
there exists a pair ${\mathcal A}$ and ${\mathcal B}$ having the form
$\{\{a,b,e\},\{c,d,e\},\{a,c,f\},\{b,d,g\}\}$
and $\{\{a',b',e'\},\{c',d',e'\},\{a',c',f\},\{b',d',g\}\}$ respectively.
If there is no common block between $A$ and $B$,
then the merged configuration ${\mathcal A} \cup {\mathcal B}$ forms
an even configuration consisting of eight blocks, a contradiction.
Otherwise, there is at least one block contained in both ${\mathcal A}$ and ${\mathcal B}$.
Removing blocks shared between ${\mathcal A}$ and ${\mathcal B}$ from their union, we obtain
an even configuration on four or six blocks, a contradiction.
\end{proof}

By combining Theorems  \ref{xcodests}, \ref{higherevenfreeforxcodes}, and \ref{no8evenfreeSTS}, we have:
\begin{theorem}\label{no8evenfreeXcode}
There exists no $(m,n,1,2)$ X-code that achieves the maximum compaction ratio $(m-1)/6$ and is also an $(m,n,3,1)$ X-code and an $(m,n,8,0)$ X-code.
\end{theorem}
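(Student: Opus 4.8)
The plan is to prove the statement by contradiction, chaining the three cited results so that the hypothesized code is forced to be an $8$-even-free STS, which Theorem~\ref{no8evenfreeSTS} forbids. Adopting the standing restriction of this section to constant-weight-three codes, suppose that for some $m>3$ there is an $(m,n,1,2)$ X-code $\mathcal{X}$ that attains the compaction ratio $(m-1)/6$ and is simultaneously an $(m,n,3,1)$ and an $(m,n,8,0)$ X-code. I will show that no such $\mathcal{X}$ can exist.

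First I would use Theorem~\ref{xcodests} to pin down the combinatorial structure of $\mathcal{X}$. Attaining the compaction ratio $(m-1)/6$ is the same as attaining $n=m(m-1)/6$, the extremal value of Theorem~\ref{xcodests}, whose equality case characterizes the optimal constant-weight-three $(m,n,1,2)$ X-codes as precisely those arising from Steiner triple systems: reading the codewords as triples on $m$ points, the $(1,2)$-superimposed (cover-free) property forbids any triple from lying in the union of two others, and once $n=m(m-1)/6$ the $3n={m\choose2}$ pair-incidences must be distributed so that every pair occurs exactly once. Hence $\mathcal{X}$ is, up to relabeling of coordinates, the X-code whose codewords are the characteristic vectors of the blocks of an STS$(m)$.

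Next I would convert the $(m,n,8,0)$ hypothesis into a statement about even-freeness, using the equivalence underlying Theorem~\ref{higherevenfreeforxcodes}. For $x=0$ the defining inequality of an X-code reduces to $\bigoplus S_2\neq\boldsymbol{0}$ for every nonempty $S_2$ with $|S_2|\leq 8$, and for characteristic vectors of blocks one has $\bigoplus S_2=\boldsymbol{0}$ exactly when the corresponding configuration is even. Thus $\mathcal{X}$ being an $(m,n,8,0)$ X-code is equivalent to the STS$(m)$ containing no even $i$-configuration for $1\leq i\leq 8$, i.e.\ to its being $8$-even-free. The $(m,n,3,1)$ hypothesis plays no role in the contradiction; it is achievable on its own from an anti-Pasch STS (Theorem~\ref{anti-Paschxcodes}) and is listed only to record the full slate of desired properties.

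Finally, Theorem~\ref{no8evenfreeSTS} states that no $8$-even-free STS$(v)$ exists for $v>3$, contradicting the conclusion of the previous paragraph and completing the argument. I expect the only delicate point to be the first step: Theorem~\ref{xcodests} as stated guarantees only that an STS$(m)$ \emph{exists} at equality, so some care is needed to confirm that the extremal code must itself be STS-derived rather than merely coexist with such a system (in particular, ruling out a low-degree point that would let a repeated pair escape a cover-free violation). One must also discard the degenerate small orders $m\leq 3$, where the single-block system is vacuously $8$-even-free and the statement is uninteresting; this is exactly why the supposition is taken with $m>3$.
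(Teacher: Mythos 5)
Your proof is correct and is essentially the paper's own argument: the paper presents this theorem as an immediate consequence of combining Theorems \ref{xcodests}, \ref{higherevenfreeforxcodes} (in its converse direction, even configuration $\Leftrightarrow$ zero sum), and \ref{no8evenfreeSTS}, which is exactly the chain you use, including the observation that the $(m,n,3,1)$ hypothesis is not needed for the contradiction. If anything you are more careful than the paper's text, which gives no written proof and silently passes over both the equality-case subtlety you flag (that the extremal code must itself be an STS rather than merely coexist with one) and the degenerate order $m=3$.
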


An STS is $7$-even-free if and only if it is $6$-even-free.
Up to isomorphism, there are two kinds of even $6$-configurations
which may appear in an STS. One is called the {\it grid} and the other is
the {\it double triangle}. Both $6$-configurations are
described by nine points and six blocks:
$\{\{a,b,c\},\{d,e,f\},\{g,h,i\},\{a,d,g\},\{b,e,h\},\{c,f,i\}\}$ and
$\{\{a,b,c\},\{a,d,e\},\{c,f,e\},\{b,g,h\},\{d,h,i\},\{f,g,i\}\}$ respectively.
By definition, an STS is $6$-even-free if it simultaneously avoids Pasches, grids, and double triangles.
We do not know whether there exists a $6$-even-free STS$(v)$ for any $v>3$.
However,  a moderately large number of
triples can be included while keeping $6$-even-freeness:

\begin{theorem}\label{6evenfree}
There exists a constant $c>0$ such that for sufficiently large $v$
there exists a $6$-even-free triple packing of order $v$ with $cv^{1.8}$
triples.
\end{theorem}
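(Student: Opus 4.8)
The plan is to use the probabilistic deletion (alteration) method. I would fix a $v$-element point set and include each of the $\binom{v}{3}$ possible triples independently with probability $p$, to be chosen later, producing a random family $\mathcal{B}_0$. By the discussion preceding the theorem, a triple packing is $6$-even-free exactly when it avoids the Pasch configuration together with the two even $6$-configurations, the grid and the double triangle (there being no even $i$-configuration for $i \in \{1,2,3,5\}$, since a block has three points). Thus the ``bad'' substructures I must eliminate are: (i) pairs of triples meeting in two points, which violate the packing condition; (ii) Pasch configurations; (iii) grids; and (iv) double triangles.

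Next I would compute the relevant expectations by linearity. The expected number of chosen triples is $\binom{v}{3}p = \Theta(v^3 p)$. A pair of triples sharing two points is supported on $4$ points and $2$ blocks, so there are $\Theta(v^4)$ of them and the expected number surviving is $\Theta(v^4 p^2)$. A Pasch configuration uses $6$ points and $4$ blocks, giving expected count $\Theta(v^6 p^4)$. Each of the grid and the double triangle uses $9$ points and $6$ blocks, giving expected count $\Theta(v^9 p^6)$.

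The tuning of $p$ is the heart of the argument. Setting $p = \epsilon v^{-6/5}$ for a small constant $\epsilon>0$ yields $\Theta(v^{9/5})$ triples, while the pair-collisions contribute $\Theta(\epsilon^2 v^{8/5})$ and the Pasch configurations $\Theta(\epsilon^4 v^{6/5})$, both of strictly smaller order. The grids and double triangles, however, contribute $\Theta(\epsilon^6 v^{9/5})$, the same order as the triples themselves. This is precisely why the exponent is $1.8 = 9/5$: among all forbidden structures it is the two $6$-configurations on $9$ points and $6$ blocks that bind, since for a structure on $P$ points with $B$ blocks the expected count $\Theta(v^P p^B)$ matches $\Theta(v^3 p)$ at the threshold $p = v^{-(P-3)/(B-1)}$, and $(9-3)/(6-1)=6/5$ is the most restrictive such exponent. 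Because the grid and double-triangle counts match the triple count only up to the factor $\epsilon^6$ against $\epsilon$, choosing $\epsilon$ small enough forces the total expected number of bad substructures below, say, half the expected number of triples.

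Finally I would fix an outcome in which (number of triples) minus (number of bad substructures) attains at least its expectation, which is $\ge c v^{9/5}$, and delete one triple from each surviving bad substructure. Deleting triples never creates new configurations, and each deletion destroys its own bad substructure, so the result is a genuine triple packing (type (i) gone) avoiding Pasch, grid, and double triangle, hence $6$-even-free, with at least $c v^{9/5}$ triples retained. The main obstacle, and the only delicate point, is that the grids and double triangles sit at the same order $v^{9/5}$ as the surviving triples, so the argument cannot render them negligible and must instead exploit the slack between $\epsilon$ and $\epsilon^6$; confirming that each even $6$-configuration really occupies $9$ points and $6$ blocks (so that its density exponent is $6/5$, and not smaller) is exactly what pins down the value $1.8$.
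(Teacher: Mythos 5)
Your proposal is correct and follows essentially the same route as the paper's proof: random inclusion of triples with probability $\Theta(v^{-6/5})$, identification of the $9$-point, $6$-block even configurations (grid and double triangle) as the binding constraint that pins down the exponent $1.8$, choice of a small constant to exploit the gap between $\epsilon$ and $\epsilon^{6}$, and deletion of one triple from each surviving bad substructure. The only difference is bookkeeping: the paper extracts a good outcome via Markov's inequality on the bad-configuration count together with a Chernoff bound on the number of triples, whereas you apply linearity of expectation directly to the difference (triples minus bad substructures), a slightly cleaner version of the same alteration argument.
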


\begin{proof}
Let ${\cal C'}$ be a set of representatives of all of the nonisomorphic even configurations on
six or fewer  triples and let ${\cal C''}$ be a configuration
consisting of pair of distinct triples sharing a pair of elements.
Let ${\cal C}= {\mathcal C}' \cup {\mathcal C}''$.
Pick uniformly at random triples from $V$ with probability $p=c'v^{-\frac{6}{5}}$
independently, where $c'$ satisfies $0<c'<(\frac{10}{41\cdot79\cdot83})^{\frac{1}{5}}$.
Let $b_{\cal C}$ be a random variable counting the  configurations
isomorphic to a member of ${\cal C}$ in the resulting set of triples.
Define $E(b_{\cal C})$ as its expected value. Then

\begin{eqnarray*}
E(b_{\cal C}) &\leq&
{{v}\choose{4}}{{{4}\choose{3}}\choose{2}} p^2+
{{v}\choose{6}}{{{6}\choose{3}}\choose{4}} p^4+
{{v}\choose{9}}{{{9}\choose{3}}\choose{6}}p^6\\
&=&{{{9}\choose{3}}\choose{6}}\frac{c'^6 v^{1.8}}{9!}+f(v),
\end{eqnarray*}
where $f(v)=O(v^{1.6})$. By Markov's Inequality,
\[P\left(b_{\cal C} \geq 2E(b_{\cal C})\right)
\leq \frac{1}{2}.\]
Hence,
\[P\left(b_{\cal C}\leq 2{{{9}\choose{3}}\choose{6}}
\frac{c'^6v^{1.8}}{9!}+2f(v)\right)\geq\frac{1}{2}.\]

Let $t$ be a random variable counting the  triples
and $E(t)$ its expected value.
Then
\[E(t)=p{{v}\choose{3}}=\frac{c'}{6}v^{1.8} - g(v),\]
where $g(v)=O(v^{0.8})$. Because $t$ is a binomial random variable,
by Chernoff's inequality, for sufficiently large $v$
\begin{equation*}
P\left( t<\frac{E(t)}{2}\right) < e^{-\frac{E(t)}{8}}<\frac{1}{2}.
\end{equation*}

Hence, if $v$ is sufficiently large, then with positive probability we have
a set ${\cal B}$ of triples with the property
that $|{\cal B}| > \frac{E(t)}{2}$ and
the number of configurations in ${\cal B}$ isomorphic
to a member of ${\cal {\cal C}}$ is at most
\[2{{{9}\choose{3}}\choose{6}}\frac{c'^6v^{1.8}}{9!}+2f(v).\]
Let $ex(v,r)$ be the maximum cardinality $|{\cal B}|$ such that
there exists an $r$-even-free triple packing.
By deleting a triple from each configuration
isomorphic to a member of ${\cal C}$, we obtain
\[ex(v,6) \geq \frac{E(t)}{2}
-2{{{9}\choose{3}}\choose{6}}\frac{c'^6v^{1.8}}{9!}+h(v),\]
where $h(v)=O(v^{1.6})$.
Then
for some positive constant $c$ and sufficiently large $v$,
it holds that $ex(v,6) \geq cv^{1.8}$. 
\end{proof}

Hence we have:
\begin{theorem}\label{6evenfreexcode}
There exists a constant $c>0$ such that for sufficiently large $v$
there exists a $(v,cv^{1.8},1,2)$ X-code that is also a $(v,cv^{1.8},3,1)$ X-code and a $(v,cv^{1.8},6,0)$ X-code.
\end{theorem}

An STS$(v)$ has approximately $v^2/6$ triples.
The same technique can be used to obtain a lower bound on $ex(v,r)$ for $r\geq 8$.
In fact,  $ex(v,8)$ is at least $O(v^{\frac{12}{7}})$, and hence for sufficiently large $v$ there exists a constant $c>0$ such that
there exists a $(v,cv^{\frac{12}{7}},1,2)$ X-code that is also a $(v,cv^{\frac{12}{7}},3,1)$ X-code and a $(v,cv^{\frac{12}{7}},8,0)$ X-code.

\subsection{Higher X-Tolerance with the Minimum Fan-Out}\label{hightolerance-smallfanout}

In general, the probability that a defective digital circuit produces an error at a specific
signal output line is quite small. In fact, several errors are unlikely happen simultaneously \cite{XTUTORIAL,BISTWOHL}.
Also, multiple Xs  with errors are rare \cite{XCOMPACTION}.
Therefore, X-codes given in Theorems \ref{anti-Paschxcodes} and \ref{6evenfree} are particularly useful for relatively simple scan-based testing
such as built-in self-test (BIST) where the tester is only required to detect defective chips.
Nonetheless, more sophisticated X-codes are also useful to improve test quality and/or
to identify or narrow down the error sources by taking advantage of more detailed information
about when incorrect responses are produced \cite{XCOMPACTION}.
Hence, for use in higher quality testing and error diagnosis support,
it is of theoretical and practical interest to consider
$(m,n,d,x)$ X-codes of constant weight $x+1$, where $x\geq 3$ or $d\geq 6$.

For $(m,n,1,2)$ X-codes of constant weight three,
we employed Theorem \ref{coverfreer=2} to obtain an upper bound
on the number of codewords. 
The following theorem gives a generalized upper bound:

\begin{theorem}{\rm \cite{EFFCOVERFREEFAMILY2}}\label{coverfree=r}
Let $n(x, m, k)$ denote the maximum number of columns
of a $(1,x)$-superimposed code
such that every column is of length $m$ and has constant weight $k$. Then,
for every $x$, $t$ and $i = 0, 1$ or $i \leq x/2t^2$,
\[n(x, m, x(t-1)+1+i) \leq {{m-i}\choose{t}}/{{k-i}\choose{t}}\]
for all sufficiently large $m$,
with equality if and only if there exists a Steiner $t$-design
$S(t, x(t-1)+1, m-i)$.
\end{theorem}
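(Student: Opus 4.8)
The plan is to separate the two implications and to reduce the general weight parameter to the base case $i=0$. Throughout I identify a $(1,x)$-superimposed code of constant weight $k$ with a family $\mathcal{F}$ of $k$-subsets (``blocks'') of an $m$-set, the cover-free condition reading: for any distinct blocks $B_0,B_1,\dots,B_x$ we have $B_0 \not\subseteq B_1 \cup \cdots \cup B_x$. I would first dispose of sufficiency, i.e.\ the construction attaining equality. Given a Steiner $t$-design $S(t,x(t-1)+1,m-i)$ on a point set $V'$ of size $m-i$, adjoin a fixed set $U$ of $i$ new points to every block, producing $\binom{m-i}{t}/\binom{k-i}{t}$ blocks of size $k=x(t-1)+1+i$ on $m$ points (using $k-i=x(t-1)+1$). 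Since two distinct blocks of a Steiner $t$-design meet in at most $t-1$ points, any block satisfies $|B_0 \cap (B_1\cup\cdots\cup B_x)| \le x(t-1) < x(t-1)+1$ on the $V'$-coordinates, so $B_0$ is not covered; the shared points of $U$ are irrelevant to this. Hence the family is $(1,x)$-cover-free and meets the bound.

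The substance is the upper bound together with the equality characterization, which I would establish first for $i=0$ (so $k=x(t-1)+1$) by double counting, generalizing the argument behind Theorem~\ref{coverfreer=2} (its $x=2$ instance). For a $t$-subset $T$ let $\lambda(T)$ denote the number of blocks containing $T$; counting incidences gives $\sum_T \lambda(T) = n\binom{k}{t}$, so the target $n \le \binom{m}{t}/\binom{k}{t}$ is \emph{exactly} the statement that the average multiplicity is at most one, i.e.\ $\sum_T (\lambda(T)-1) \le 0$. The crux, and the main obstacle, is that a $t$-subset can lie in several blocks without violating cover-freeness (this is not forbidden the way a repeated pair is for $x=2$), so one must show that every such over-covering is paid for by $t$-subsets left entirely uncovered. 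The key structural input is that no block is swallowed by $x$ others; I would convert this into a local density statement, showing that a cluster of blocks through a common $t$-set is forced to spread over many points and hence to vacate a controlled number of nearby $t$-subsets. Weighing these deficits against the excess $\sum_{\lambda(T)\ge 2}(\lambda(T)-1)$ yields $\sum_T(\lambda(T)-1)\le 0$, with equality only when no $t$-subset is over- or under-covered, i.e.\ when every $t$-subset lies in exactly one block --- precisely an $S(t,k,m)$. The ``sufficiently large $m$'' hypothesis enters here to absorb the lower-order error terms in the density estimate.

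Finally I would reduce the general case $i\ge 1$ to this base case by a link/contraction argument: fixing an $i$-subset $I$ and passing to the blocks through $I$, after deleting $I$, yields a constant-weight-$(k-i)$ family on the remaining $m-i$ points that inherits cover-freeness, with $k-i = x(t-1)+1$, so the $i=0$ result applies. The delicate point --- which I expect to be the hardest bookkeeping --- is that the bound $\binom{m-i}{t}/\binom{k-i}{t}$ corresponds to a design possessing $i$ \emph{universal} points, a feature an arbitrary code need not have; the fixed $I$ is not common to all blocks, so one must average over the choice of $I$ and show that any near-extremal code is forced into this universal-point structure, the discrepancy being of lower order and hence negligible for large $m$. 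This is where the restriction $i=0,1$ or $i\le x/2t^2$ is used: it guarantees that the contracted family retains cover-free strength $x$ relative to its weight $x(t-1)+1$, making the base case genuinely available and the forcing valid. The equality analysis then transfers from the base case, pinning the contracted design down to $S(t,x(t-1)+1,m-i)$.
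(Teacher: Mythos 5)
First, a point of orientation: the paper does not prove this theorem at all --- it is quoted as a known result of Erd\H{o}s, Frankl, and F\"uredi \cite{EFFCOVERFREEFAMILY2} --- so your attempt must stand on its own merits. The overall skeleton (a construction attaining the bound, an upper bound, and an equality analysis) is the right shape, and the construction is correct: adjoining a fixed $i$-set to every block of an $S(t,x(t-1)+1,m-i)$ yields $\binom{m-i}{t}/\binom{k-i}{t}$ blocks of size $k$, and since distinct design blocks meet in at most $t-1$ points, the union of any $x$ of them misses at least one of the $x(t-1)+1$ design points of any other block, so the family is $(1,x)$-cover-free. But the two steps that carry all the weight are missing. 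For $i=0$, reducing the bound to $\sum_T(\lambda(T)-1)\le 0$ is only a reformulation: cover-freeness does not forbid $\lambda(T)\ge 2$ (already for $x=2$, $t=2$, two triples may share a pair), and your assertion that a cluster of blocks through a common $t$-set must ``vacate a controlled number of nearby $t$-subsets,'' with deficits outweighing excesses, is precisely the theorem restated, not an argument for it. You specify no charging scheme --- which uncovered $t$-subsets are assigned to which over-covered ones, and why the assignment is sufficiently injective --- and this is exactly where the published proof needs its machinery of own (private) subsets and sunflower-type configurations, together with a separate analysis making equality force a Steiner system.

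The reduction of $i\ge 1$ to $i=0$ has a worse defect. The link of an $i$-set $I$ (blocks through $I$, with $I$ deleted) is indeed $(1,x)$-cover-free, but note this holds automatically for \emph{every} $i$; hence your claim that the hypothesis $i\le x/2t^2$ is what ``guarantees that the contracted family retains cover-free strength $x$'' is incorrect, and that hypothesis must be spent elsewhere in any genuine proof. More importantly, the link argument only bounds the number of blocks through a \emph{fixed} $I$ by $\binom{m-i}{t}/\binom{k-i}{t}$. Since an arbitrary code has no universal points, summing over all $\binom{m}{i}$ choices of $I$ and dividing by $\binom{k}{i}$ gives only $n\le \binom{m}{i}\binom{m-i}{t}\big/\bigl(\tbinom{k}{i}\tbinom{k-i}{t}\bigr)$, which exceeds the claimed bound by a factor of order $m^i$. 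That is not a ``lower-order discrepancy negligible for large $m$''; it is a polynomial loss, and closing it --- showing that a family cannot gain by spreading its $i$ extra points over many different $i$-sets --- is the entire content of the case $i\ge 1$. Your appeal to near-extremal codes being ``forced into the universal-point structure'' assumes exactly what has to be proved.
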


By putting $t=2$ and $i=0$, we obtain:

\begin{corollary}\label{upperxcodex+1}
For an $(m,n,1,x)$ X-code of constant weight $x+1$,
\[n \leq {{m}\choose{2}}/{{x+1}\choose{2}}\]
for all sufficiently large $m$, with equality if and only if there is an $S(2,x+1,m)$.
\end{corollary}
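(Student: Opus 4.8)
The plan is to read the statement as the special case $t=2$, $i=0$ of Theorem~\ref{coverfree=r}, so the work is almost entirely one of matching objects and parameters rather than fresh argument. First I would set up the dictionary between the two combinatorial objects involved. An $(m,n,1,x)$ X-code of constant weight $x+1$ is, via the transpose of its X-compact matrix, exactly a $(1,x)$-superimposed code of size $m\times n$ in which every column has weight precisely $x+1$: when $d=1$ the defining inequality $(\bigvee S_1)\vee(\bigoplus S_2)\neq\bigvee S_1$ is imposed only for $|S_1|=x$ and $|S_2|=1$, which says that no superimposed sum of $x$ codewords covers any remaining codeword. This is precisely the defining property of a $(1,x)$-superimposed code once codewords are viewed as columns. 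Consequently the largest number of codewords $n$ admitted by such an X-code is exactly the quantity $n(x,m,x+1)$ of Theorem~\ref{coverfree=r}.

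With this identification in hand, I would substitute $t=2$ and $i=0$. The weight parameter of the theorem, $x(t-1)+1+i$, then collapses to $x+1$, matching the constant weight of our columns, and $k=x+1$, so that $\binom{k-i}{t}=\binom{x+1}{2}$. Since $i=0$ is one of the values explicitly permitted for every $x$, the hypotheses of Theorem~\ref{coverfree=r} are met for all sufficiently large $m$. The bound $\binom{m-i}{t}/\binom{k-i}{t}$ becomes $\binom{m}{2}/\binom{x+1}{2}$, giving the asserted inequality, and the equality clause transfers verbatim: equality holds if and only if a Steiner $t$-design $S(t,x(t-1)+1,m-i)=S(2,x+1,m)$ exists.

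The main (and essentially only) point requiring care is the bookkeeping of the equivalence under the constant-weight restriction: one must confirm that $n(x,m,x+1)$, which is defined for superimposed codes whose columns have weight exactly $x+1$, coincides with the maximum codeword count of the constant-weight-$(x+1)$ X-code, and that $i=0$ is an admissible choice for all $x$. Both are immediate from the definitions, so no genuine obstacle arises and the corollary follows at once from Theorem~\ref{coverfree=r}.
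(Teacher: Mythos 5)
Your proposal is correct and follows exactly the paper's route: the paper derives Corollary~\ref{upperxcodex+1} precisely by specializing Theorem~\ref{coverfree=r} to $t=2$, $i=0$, relying on the already-established equivalence between $(m,n,1,x)$ X-codes of constant weight $x+1$ and $(1,x)$-superimposed codes of constant column weight $x+1$. Your write-up merely makes the parameter bookkeeping and the X-code/superimposed-code dictionary explicit, which the paper leaves implicit in the surrounding text.
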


Because the set of columns of the block-point incidence matrix of any $S(2,x+1,m)$
forms an $(m,m(m-1)/x(x+1),1,x)$ X-code of constant weight $x+1$,
the existence of Steiner $2$-designs is our next interest.
For $k\in\{4,5\}$,  necessary and sufficient conditions for existence of
an $S(2,k,v)$ are known:

\begin{theorem}{\rm \cite{HANANIK4}}
There exists an $S(2,4,v)$ if and only if $v \equiv 1,4$ (mod $12$).
\end{theorem}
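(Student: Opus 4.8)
The plan is to prove the two implications separately, with essentially all the work lying in sufficiency. Necessity is a short counting argument. Fix a point $p$ of a hypothetical $S(2,4,v)$: each of the $v-1$ pairs $\{p,q\}$ lies in a unique block, and each block through $p$ contains $\binom{3}{1}=3$ of these pairs, so $3 \mid (v-1)$, that is $v \equiv 1 \pmod 3$. Counting pairs globally, each block covers $\binom{4}{2}=6$ pairs and each pair is covered exactly once, so the number of blocks is $\binom{v}{2}/6 = v(v-1)/12$, whence $12 \mid v(v-1)$. Intersecting the condition $v \equiv 1 \pmod 3$ with $12 \mid v(v-1)$ discards the residues $7$ and $10$ modulo $12$ and leaves precisely $v \equiv 1,4 \pmod{12}$.

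For sufficiency I would realize every admissible $v$ through the standard recursive machinery of design theory. First I would construct a finite set of small ingredient designs directly: the trivial $S(2,4,4)$ consisting of a single block, the projective plane $S(2,4,13)=\mathrm{PG}(2,3)$, the affine plane $S(2,4,16)=\mathrm{AG}(2,4)$, and a short list of additional seeds (for instance $v=25,28,37,40$). Alongside these I would prepare the auxiliary families used to combine them, namely group divisible designs with block size four and transversal designs $\mathrm{TD}(4,n)$; the latter are equivalent to a pair of mutually orthogonal Latin squares of order $n$ and exist for every $n \notin \{2,6\}$.

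The heart of the argument is a block-substitution (pairwise balanced design closure) principle together with a recursive ``fill-in'' construction. The set $Q$ of orders admitting an $S(2,4,\cdot)$ is closed under PBD substitution: given a pairwise balanced design whose block sizes all lie in $Q$, replacing each block by an $S(2,4,\cdot)$ on its points yields an $S(2,4,v)$, so $v \in Q$. Using Wilson's fundamental construction I would inflate a master group divisible design by a transversal design and then fill in the groups with smaller members of $Q$, and the resulting tripling and quadrupling recursions show that $Q$ contains every sufficiently large admissible order. Combined with the base designs, this forces $Q$ to contain all $v \equiv 1,4 \pmod{12}$.

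The main obstacle lies at the bottom of the range. The recursions reach a given $v$ only once enough ingredient designs of the appropriate orders are in hand, and for several small admissible orders the required transversal design or group divisible design is unavailable (for example $\mathrm{TD}(4,6)$ does not exist). Those orders must be dispatched by ad hoc or computer-assisted direct constructions, and the delicate point is to verify that this finite set of exceptions is fully covered so that no admissible residue is missed; once the seeds and the recursions are established, closure propagates existence to the entire congruence classes $v \equiv 1,4 \pmod{12}$.
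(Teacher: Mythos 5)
The paper itself contains no proof of this statement: it is quoted as Hanani's theorem with a citation to \cite{HANANIK4}, so your attempt can only be compared with the classical proof in the literature. Your necessity argument is complete and correct: fixing a point gives $3 \mid (v-1)$, the global pair count gives $12 \mid v(v-1)$, and intersecting the two conditions leaves exactly $v \equiv 1,4 \pmod{12}$. Your sufficiency strategy is also the historically correct one --- small seed designs such as $S(2,4,4)$, $\mathrm{PG}(2,3)$, $\mathrm{AG}(2,4)$, transversal designs $\mathrm{TD}(4,n)$ for $n \notin \{2,6\}$, PBD closure, and recursive inflation/fill-in constructions are precisely the machinery of Hanani's proof and its modern streamlinings.

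However, the sufficiency half is where the entire content of the theorem lives, and your text for it is a program rather than a proof; three essential steps are asserted but not carried out. First, the seed designs beyond $v \in \{4,13,16\}$ (you list $25, 28, 37, 40$, and more are needed) are not exhibited; these require explicit ad hoc constructions. Second, the recursions themselves (the ``tripling and quadrupling'' constructions and the fill-in lemma) are invoked by name without a proof that they preserve the $S(2,4,\cdot)$ property and without specifying which orders they produce from which. Third --- and this is the point you yourself flag as delicate --- there is no verification that the seeds together with the recursions actually reach \emph{every} $v \equiv 1,4 \pmod{12}$, i.e.\ that a suitable finite generating set for this PBD-closed class has been covered. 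Without that finite case analysis, the argument as written yields only existence for all sufficiently large admissible $v$ (which already follows from Wilson's asymptotic theory, cited in the paper as \cite{WILSONPBDCLOSURE3}), not the exact statement. So: correct necessity, correct choice of method, but a genuine gap in sufficiency --- the finite verification that constitutes Hanani's actual contribution is missing.
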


\begin{theorem}{\rm \cite{HANANIK5}}
There exists an $S(2,5,v)$ if and only if $v \equiv 1,5$ (mod $20$).
\end{theorem}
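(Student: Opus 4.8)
The plan is to split the argument into necessity (routine counting) and sufficiency (the substantial recursive construction), the latter being where essentially all the work lies.

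\textbf{Necessity.} First I would record the two standard divisibility conditions forced by a Steiner $2$-design. In an $S(2,5,v)$ each point lies in the same number $r$ of blocks; counting the pairs through a fixed point gives $r(5-1)=v-1$, so $4 \mid (v-1)$, i.e.\ $v \equiv 1 \pmod 4$. Counting all pairs gives $b\binom{5}{2}=\binom{v}{2}$, so $20 \mid v(v-1)$. Since $v \equiv 1 \pmod 4$ already yields $4 \mid v(v-1)$, the residual requirement is $5 \mid v(v-1)$, i.e.\ $v \equiv 0,1 \pmod 5$. Combining $v \equiv 1 \pmod 4$ with $v \equiv 0,1 \pmod 5$ through the Chinese Remainder Theorem gives precisely $v \equiv 1,5 \pmod{20}$, establishing necessity.

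\textbf{Sufficiency.} For the converse I would argue in the language of pairwise balanced designs, since an $S(2,5,v)$ is exactly a PBD on $v$ points all of whose blocks have size five, and invoke Wilson's theory of PBD-closure together with his Fundamental Construction. The scheme has three ingredients. First, a small set of \emph{base designs} produced directly: the single block handles $v=5$; the projective plane $PG(2,4)$ gives an $S(2,5,21)$; the affine plane $AG(2,5)$ gives an $S(2,5,25)$; and a short, explicitly enumerated list of further small admissible orders is built by cyclic or difference-family methods. Second, a \emph{recursive engine} from transversal designs: a $TD(5,n)$ is equivalent to three mutually orthogonal Latin squares of order $n$ and exists for all $n$ outside a short list of small orders, so inflating a seed design by a factor $n$ and filling with copies of a $TD(5,n)$, or adjoining ideal points to a $\{5\}$-GDD and completing each group with a base design, yields $S(2,5,\cdot)$ for all sufficiently large admissible orders. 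Third, \emph{closing the induction}: for large $v \equiv 1,5 \pmod{20}$ I would write $v$ in a form the generic construction accepts and fill every hole with an already-constructed smaller $S(2,5,\cdot)$, PBD-closure guaranteeing that the result is again a design whose blocks all have size five.

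\textbf{Main obstacle.} The hard part will not be the asymptotic construction but the finitely many small and exceptional orders that the generic recursion cannot reach. These cluster exactly where the required transversal designs are scarce — the orders (such as $n=6$ and $n=10$) for which three MOLS are unavailable or unknown — and must be dispatched one at a time by explicit, frequently computer-assisted, direct constructions. The delicate bookkeeping at the heart of the proof is to verify that the chosen base designs, together with the transversal designs that \emph{do} exist, collectively cover every residual admissible order, so that no value $v \equiv 1,5 \pmod{20}$ slips through the recursion uncovered.
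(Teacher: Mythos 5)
The first thing to note is that the paper contains no proof of this statement: it is quoted as a known result of Hanani \cite{HANANIK5}, so there is no internal argument to compare yours against. Judged on its own merits, your necessity half is complete and correct: counting pairs through a point gives $4 \mid (v-1)$, counting all pairs gives $20 \mid v(v-1)$, and combining the two congruences yields exactly $v \equiv 1,5 \pmod{20}$.

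The sufficiency half, however, is a description of a proof strategy rather than a proof. The architecture you outline --- base designs such as $PG(2,4)$ and $AG(2,5)$, inflation by transversal designs $TD(5,n)$, filling groups of a $\{5\}$-GDD, and PBD-closure --- is indeed the architecture of the known proofs (Hanani's original argument, and its modern repackaging via Wilson's theory \cite{WILSONPBDCLOSURE3}). But essentially all of the mathematical content of the theorem lives in the part you defer to the ``main obstacle'' paragraph: exhibiting the explicit small designs, and verifying that the recursion, using only those transversal designs that actually exist (there is no $TD(5,6)$, and the existence of $TD(5,10)$ is open), reaches \emph{every} $v \equiv 1,5 \pmod{20}$ with no residual exceptions. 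You concede this bookkeeping is undone, so the claim that no admissible order ``slips through'' is an assertion, not a conclusion. Note also that Wilson's closure theory in the form you invoke only guarantees existence for all \emph{sufficiently large} admissible $v$; covering every admissible order, including the small and exceptional ones, is precisely what distinguishes Hanani's theorem from the asymptotic statement, and it cannot be obtained by citing closure. As it stands, your proposal correctly identifies the road but does not travel it.
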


For $k \geq 6$, the necessary and sufficient conditions on $v$ for existence of
an $S(2,k,v)$ are not known in general; 
the existence of a Steiner $2$-design is solved
only in an asymptotic sense \cite{WILSONPBDCLOSURE3}, although for `small' values
of $k$ substantial results are known.
For a comprehensive table of known Steiner $2$-designs, see \cite{CRCDESIGN2}.

As with X-codes from Steiner triple systems,
the error detectability can be improved by considering avoidance of
even configurations.

An $S(2,k,v)$, $(V,{\mathcal B})$, is  {\it $r$-even-free} if for $1 \leq i \leq r$ it contains
no subset ${\mathcal C} \subseteq {\mathcal B}$
such that $|{\mathcal C}|=i$ and each point appearing in ${\mathcal C}$
is contained in exactly an even number of blocks in ${\mathcal C}$.
A {\it generalized Pasch} configuration in an $S(2,k,v)$, $(V,{\mathcal B})$, is
a subset ${\mathcal C} \subset {\mathcal B}$ such that $|{\mathcal C}|=k+1$
and each point appearing in ${\mathcal C}$ is contained
exactly two blocks of ${\mathcal C}$.
As with triple systems, an $S(2,k,v)$ is $(k+1)$-even-free
if and only if it contains no generalized Pasch configurations.

\begin{theorem}\label{r-evenfreeS(2,k,v)xcodes}
If an $r$-even-free $S(2,k,v)$ for $r\geq k+1$ exists,
there exists a $(v,v(v-1)/k(k-1),1,k-1)$ X-code
of constant weight $k$ that is also a $(v,v(v-1)/k(k-1),k,1)$ X-code and a $(v,v(v-1)/k(k-1),r,0)$ X-code.
\end{theorem}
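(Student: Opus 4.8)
The plan is to build the code exactly as in the proof of Theorem~\ref{higherevenfreeforxcodes}: given the $r$-even-free $S(2,k,v)=(V,\mathcal{B})$, take as codewords the characteristic vectors $\boldsymbol{s}_i$ of the blocks $B_i\in\mathcal{B}$. Each block has size $k$, so the code has constant weight $k$, and since an $S(2,k,v)$ has $v(v-1)/(k(k-1))$ blocks, the number of codewords is as claimed. It then remains only to verify the three parameter sets, and throughout I would use the defining Steiner property that two distinct blocks meet in at most one point. The $(v,v(v-1)/k(k-1),r,0)$ claim I would dispose of exactly as in Theorem~\ref{higherevenfreeforxcodes}: a nonempty set of at most $r$ codewords whose sum $\bigoplus$ is $\boldsymbol{0}$ is precisely an even configuration of size at most $r$, and $r$-even-freeness forbids these. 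The $(v,v(v-1)/k(k-1),1,k-1)$ claim is the constant-weight superimposed-code property recorded after Corollary~\ref{upperxcodex+1}, and is immediate from counting: $B_i$ has $k$ points, each other block covers at most one of them, so the superimposed sum of any $k-1$ codewords covers at most $k-1<k$ coordinates of $\boldsymbol{s}_i$ and cannot cover it.

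The substance is the $(v,v(v-1)/k(k-1),k,1)$ claim, namely that no single codeword $\boldsymbol{s}_i$ covers $\bigoplus S_2$ for a set $S_2$ of codewords disjoint from $\{\boldsymbol{s}_i\}$ with $1\le |S_2|\le k$. I would argue by contradiction. If $\boldsymbol{s}_i$ covers $\bigoplus S_2$, then the support of $\bigoplus S_2$ is contained in $B_i$, so every point outside $B_i$ lies in an even number of blocks of $S_2$. The key step is a counting argument applied to a single block $B\in S_2$: writing $t_B=|B\cap B_i|\le 1$, each of the $k-t_B$ points of $B$ lying outside $B_i$ has even, hence positive, hence at least $2$, degree in $S_2$, so it lies in a further block of $S_2$; because two blocks meet in at most one point, distinct such points of $B$ force distinct partner blocks, and therefore $|S_2|-1\ge k-t_B\ge k-1$. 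This yields $|S_2|\ge k$, and combined with $|S_2|\le k$ it forces $|S_2|=k$ and $t_B=1$ for every $B\in S_2$.

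To finish I would promote this to an even $(k+1)$-configuration. With $t_B=1$ for all $B$ the partner bound is tight, so every other block of $S_2$ shares an outside point with $B$; consequently no two blocks of $S_2$ can meet $B_i$ in the same point, since two such blocks would already share that point of $B_i$ and hence have no outside point in common. Thus the $k$ blocks of $S_2$ meet $B_i$ in $k$ distinct points, so every point of $B_i$ has degree $2$ in $\{B_i\}\cup S_2$, while every point outside $B_i$ has even degree. Hence $\{B_i\}\cup S_2$ is an even configuration of size $k+1\le r$, contradicting $r$-even-freeness. The main obstacle is precisely this final combinatorial step: covering only forces the points \emph{outside} $B_i$ to have even degree, so $\{B_i\}\cup S_2$ is not even a priori; one must invoke the Steiner intersection bound twice---first to pin down $|S_2|=k$ and $t_B=1$, and then to force the $k$ contact points on $B_i$ to be distinct---before evenness, and therefore the contradiction, can be concluded. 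This generalizes the ad hoc analysis for triple systems in Lemma~\ref{5-even-free}, where the same phenomenon appears as the Pasch configuration in the case $k=3$.
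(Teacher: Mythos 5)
Your proposal is correct and takes essentially the same route as the paper: the same block-characteristic-vector construction, the same disposal of the $(1,k-1)$ claim via the Steiner intersection property and of the $(r,0)$ claim via even-freeness, and the same contradiction for the $(k,1)$ claim, in which a covering failure is promoted to an even $(k+1)$-configuration (a generalized Pasch) forbidden by $r$-even-freeness with $r\geq k+1$. If anything, your write-up is more complete than the paper's, which at the key step only says ``by following the argument in the proof of Lemma~\ref{5-even-free}, ${\mathcal B}$ contains a generalized Pasch configuration,'' whereas you supply the actual counting: the Steiner bound used once to force $|S_2|=k$ and $t_B=1$ for every block, and a second time to force the $k$ contact points on $B_i$ to be distinct, so that evenness genuinely holds.
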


\begin{proof}
Let $(V,{\mathcal B})$ be an $r$-even-free $S(2,k,v)$.
For every $B_i \in {\cal B}$ define a $v$-dimensional vector
$\boldsymbol{s}_i$ such that each coordinate $s_j^{(i)} \in \boldsymbol{s}_i$
is indexed by a distinct point $j \in V$ and $s_j^{(i)}=1$
if $j \in B_i$, otherwise $0$. Then we obtain a $(v,v(v-1)/k(k-1),1,k-1)$ X-code
${\mathcal S} = \{\boldsymbol{s}_i: B_i \in {\cal B}\}$ of constant weight $k$.
By definition of an $r$-even-free $S(2,k,v)$, it is straightforward to see that
${\mathcal S}$ is also a $(v,v(v-1)/k(k-1),r,0)$ X-code.
It suffices to prove that
${\mathcal S}$ can also be used as a $(v,v(v-1)/k(k-1),k,1)$ X-code.
Assume that this is not the case. Then,
by following the argument in the proof of Lemma \ref{5-even-free},
${\mathcal B}$ contains a generalized Pasch configuration, a contradiction.
\end{proof}

Existence of an $r$-even-free design has been investigated in the study of
erasure-resilient codes for redundant array of independent disks (RAID) \cite{RAIDERASURECOLBOURN}. In fact,
infinitely many $r$-even-free $S(2,k,v)$s can be obtained from affine spaces
over ${\mathbb F}_q$ \cite{JIMBOERASURE}.

\begin{theorem}{\rm \cite{JIMBOERASURE}}\label{affineerasure}
For any odd prime power $q$ and positive integer $n \geq 2$
the points and lines of $AG(n,q)$ form a $(2q-1)$-even-free $S(2,q,q^n)$.
\end{theorem}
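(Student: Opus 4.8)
The plan is to dispose of the design-theoretic claim first and then treat even-freeness as a statement about $\mathbb{F}_2$-linear dependencies among lines. Since any two distinct points of $AG(n,q)$ lie on exactly one line, every line has exactly $q$ points, and there are $q^n$ points in all, the points and lines form an $S(2,q,q^n)$ at once. For the even-free part I would identify each line $\ell$ with its characteristic vector $\chi_\ell\in\mathbb{F}_2^{q^n}$ indexed by the affine points. Exactly as in the proofs of Lemma~\ref{5-even-free} and Theorem~\ref{higherevenfreeforxcodes}, a configuration $\mathcal{C}$ is even precisely when $\bigoplus_{\ell\in\mathcal{C}}\chi_\ell=\boldsymbol{0}$. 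Thus $(2q-1)$-even-freeness is equivalent to the assertion that every nonempty set of lines summing to $\boldsymbol{0}$ over $\mathbb{F}_2$ has at least $2q$ members. The extremal case guides the argument: inside any $2$-flat, the $q$ lines of one parallel class together with the $q$ lines of a second parallel class cover each point of the flat exactly twice, an even configuration with exactly $2q$ lines. So $2q$ is the right target, and the task is to show no smaller even configuration exists.

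Second, I would record the parity reduction, where the oddness of $q$ first enters. Double counting incidences in an even configuration $\mathcal{C}$ of $i$ lines gives $\sum_{p} d_p = iq$, where $d_p$ is the number of lines of $\mathcal{C}$ through $p$. Every $d_p$ is even, so $iq$ is even; as $q$ is odd, $i$ is even. Hence odd-sized even configurations cannot occur, and it suffices to exclude even configurations of size $i$ with $2\le i\le 2q-2$.

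Third comes the elementary counting step on parallel classes. Fix a direction realized by $a$ lines of $\mathcal{C}$; these are disjoint and cover a set $U$ of $aq$ points, each of which, having even degree, lies on a further line in another direction. A line in a different direction meets these $a$ parallel lines in at most $\min\{a,q\}$ points, so counting incidences into $U$ yields $(i-a)\min\{a,q\}\ge aq$. When some direction carries $a\ge q$ lines this already forces $i\ge 2q$, with equality attained by the two-parallel-class example. The remaining, and genuinely hard, case is when every parallel class of $\mathcal{C}$ has at most $q-1$ lines, where this counting delivers only the insufficient bound $i\ge q+1$.

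The main obstacle is exactly this small-parallel-class case, and I expect it to require finite-geometry input rather than pure counting. I would pass to the projective completion $PG(n,q)=AG(n,q)\cup H_\infty$, replacing each affine line by the projective line $\bar\ell=\ell\cup\{[\mathbf{d}_\ell]\}$ through its point at infinity; evenness then says $\bigoplus_{\ell}\chi_{\bar\ell}$ is supported entirely on $H_\infty$, at the directions of odd multiplicity. The key point is that for \emph{odd} $q$ one cannot realize such a configuration with few lines carrying many distinct directions: for instance there is no set of $q+1$ lines with pairwise distinct directions and no three concurrent, since dually their points at infinity would be an arc admitting no secant through the point dual to $H_\infty$, whereas for odd $q$ every point off a conic (oval) lies on at least one secant. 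This is the same phenomenon that caps arcs in $PG(2,q)$ at $q+1$ points, and a careful extension of this secant/arc analysis is what rules out every even configuration below size $2q$ in the small-class case. I would carry this out first in the plane $n=2$ and then reduce general $n$ to the planar case by working inside the $2$-flat spanned by two intersecting lines of a minimal configuration while controlling the lines that leave it; this dimension reduction is the second delicate point. Tightness follows from the two-parallel-class construction, completing the equivalence asserted in Theorem~\ref{affineerasure}.
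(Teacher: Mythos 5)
The first thing to note is that the paper contains no proof of Theorem \ref{affineerasure} to compare against: it is quoted verbatim from M\"{u}ller and Jimbo \cite{JIMBOERASURE}, so your argument has to stand entirely on its own. The parts you actually carry out are correct: the points and lines of $AG(n,q)$ do form an $S(2,q,q^n)$; a configuration is even iff its characteristic vectors sum to $\boldsymbol{0}$ over $\mathbb{F}_2$ (matching the usage in Lemma \ref{5-even-free} and Theorem \ref{higherevenfreeforxcodes}); the parity step is sound (all degrees $d_p$ even and $\sum_p d_p = iq$ with $q$ odd force $i$ even); the incidence count $(i-a)\min\{a,q\}\geq aq$ for a direction carrying $a$ lines is valid (skewness in $n\geq 3$ only helps the inequality), giving $i\geq 2q$ when some direction has $a\geq q$; and the two-parallel-class configuration inside a $2$-flat shows $2q$ is attained, so $(2q-1)$ is indeed the right threshold.

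However, there is a genuine gap exactly where you flag it, and flagging it is not filling it. When every direction carries at most $q-1$ lines, your counting gives only $i\geq q+a_{\max}$, i.e., it rules out even configurations of size at most $q$; the even sizes between $q+1$ and $2q-2$ are untouched, and that range is the entire content of the theorem. The finite-geometry step you sketch does not close it: your dual-arc statement (no $q+1$ lines with pairwise distinct directions and no three concurrent, via Segre's theorem and the fact that every point off a conic lies on a secant) applies only to configurations with pairwise distinct directions and no three concurrent lines, whereas an even configuration of size at most $2q-2$ may repeat a direction up to $q-1$ times and may have points of degree $4$ or more; no reduction of the general small-class case to that special structure is given, and it is not clear one exists. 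The dimension reduction is likewise asserted rather than proven: the lines of a minimal even configuration in $AG(n,q)$ need not lie in a common $2$-flat, two of its lines may be skew, and ``controlling the lines that leave'' a chosen flat is precisely the argument that is missing. So what you have is a correct skeleton (design property, parity, large-class case, tightness) surrounding an unproven core --- which is the reason the paper delegates this result to \cite{JIMBOERASURE} rather than proving it in a few lines.
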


By combining  Theorems  \ref{r-evenfreeS(2,k,v)xcodes} and  \ref{affineerasure},
we obtain:

\begin{theorem}\label{AGXcodes}
For any odd prime power $q$ and positive integer $n \geq 2$,
there exists a $(q^n,q^{n-1}(q^n-1)/(q-1),1,q-1)$ X-code of constant weight $q$
that is also an
$(q^n,q^{n-1}(q^n-1)/(q-1),q,1)$ X-code and a $(q^n,q^{n-1}(q^n-1)/(q-1),2q-1,0)$ X-code.
\end{theorem}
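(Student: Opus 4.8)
The plan is to obtain the theorem as a direct specialization of the two immediately preceding results, so that essentially all the work lies in checking that the hypotheses line up and that the parameter arithmetic is correct. First I would invoke Theorem \ref{affineerasure} with the given odd prime power $q$ and integer $n \geq 2$ to produce a $(2q-1)$-even-free Steiner $2$-design $S(2,q,q^n)$ from the points and lines of the affine space $AG(n,q)$. This fixes the parameters $v = q^n$, $k = q$, and $r = 2q-1$ for the subsequent application of Theorem \ref{r-evenfreeS(2,k,v)xcodes}.

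Next I would verify the single nontrivial hypothesis of Theorem \ref{r-evenfreeS(2,k,v)xcodes}, namely $r \geq k+1$. Here this amounts to $2q - 1 \geq q + 1$, that is $q \geq 2$, which holds because every odd prime power satisfies $q \geq 3$. With the inequality confirmed, Theorem \ref{r-evenfreeS(2,k,v)xcodes} applies directly and produces a single constant-weight-$k$ code that serves simultaneously as a $(v, N, 1, k-1)$, a $(v, N, k, 1)$, and a $(v, N, r, 0)$ X-code, where $N = v(v-1)/(k(k-1))$; the three even-freeness thresholds are inherited by one and the same underlying incidence structure, which is exactly what the ``also'' phrasing in that theorem guarantees.

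The remaining step is to rewrite these parameters in the claimed form. Substituting $v = q^n$ and $k = q$ yields constant weight $k = q$, X-tolerance $k-1 = q-1$, the two distances $k = q$ and $r = 2q-1$, and the codeword count
\[
N = \frac{q^n(q^n-1)}{q(q-1)} = \frac{q^{n-1}(q^n-1)}{q-1},
\]
which reproduces exactly the three stated X-codes. Since both ingredients are quoted results, there is no genuine obstacle here; the only points requiring care are confirming $r \geq k+1$ (which fails at $q=2$ and is precisely why $q$ is assumed to be an odd prime power, hence at least $3$) and simplifying the fraction $N$ correctly.
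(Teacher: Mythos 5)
Your proof is correct and is essentially identical to the paper's own derivation: the paper obtains this theorem precisely by combining Theorems \ref{r-evenfreeS(2,k,v)xcodes} and \ref{affineerasure} with $v=q^n$, $k=q$, $r=2q-1$, exactly as you do. One minor inaccuracy in a side remark only: the inequality $2q-1\geq q+1$ does not fail at $q=2$ (it holds with equality there); the exclusion of $q=2$ comes from the hypothesis of Theorem \ref{affineerasure} itself, not from this inequality --- but this does not affect the validity of your argument.
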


\subsection{Characteristics of X-Codes from Combinatorial Designs}\label{summary_of_Section_III}

We have given  tight upper bounds of compaction ratio for $(m,n,1,x)$ X-codes
with the minimum fan-out and presented explicit construction methods for X-codes that attain the bounds.
As far as the authors are aware, these are the first mathematical bounds and construction techniques
for this type of optimal X-code with constant weight greater than two.
Optimal X-codes given in Theorems \ref{anti-Paschxcodes} and \ref{AGXcodes} in particular have higher error detection ability
when the number of Xs is smaller than $x$. The known construction technique using hypergraphs, briefly mentioned in \cite{XCODESCAGE},
can not guarantee the same error detection ability.

To illustrate the usefulness of our X-codes, here we compare the error detection ability of an example X-code
that can be generated using Theorem \ref{anti-Paschxcodes} with characteristics of X-codes proposed in \cite{XTUTORIAL}.
The probability that the example $(50,500,1,1)$ X-code in Table 5 in \cite{XTUTORIAL} fails to detect a single error
when there are exactly two Xs is around $4.2\times10^{-6}$. The fan-out of this code is $11$.
Our X-code from Theorem \ref{anti-Paschxcodes}, which has the same compaction ratio, has parameters $(61,610,1,2)$.
The probability that this X-code fails to detect a single error in the same situation is exactly $0$. Its fan-out is 3, which is significantly smaller.
While the multiple error detection ability of the $(50,500,1,1)$ X-code is not specified in \cite{XTUTORIAL}, our code can always detect up to three errors
when there is only one X, and up to five errors when there is no X.
By Theorem \ref{two_errors_two_xs} the probability that our $(61,610,1,2)$ X-code fails to detect errors when there are exactly two Xs and two errors is
$1.5\times10^{-6}$. Therefore, our X-code is ideal when the fan-out problem is critical and/or fault-free simulation rarely produces three or more Xs in an expected response.

Very large optimal X-codes with very high error detecting ability and compaction ratio can be easily constructed by the same method.
For example, Theorem \ref{anti-Paschxcodes} and known results on anti-Pasch STSs immediately
give a $(601,60100,1,2)$ X-code with fan-out 3 and compaction ratio 100.
This code is also a $(601,60100,3,1)$ X-code and a $(601,60100,5,0)$ X-code.
Moreover, the probability that it fails to detect errors when there are exactly two Xs and two errors (or exactly three Xs and a single error) is
around $1.6\times10^{-11}$ (or $7.3\time10^{-7}$ respectively). As far as the authors know, there have been no X-codes available
that guarantee as high error detection ability and have very small fan-out.

As Theorems \ref{anti-Paschxcodes}, \ref{one_error_three_xs}, and \ref{two_errors_two_xs} indicate,
larger X-codes designed with this method have an even higher compaction ratio and better error detection rate.
Because discarding codewords does not affect error detection ability, one may use part of a large X-code
to achieve very high test quality when compaction ratio can be compromised to an extent.

\section{X-Codes of Arbitrary Weight}

The restriction to low-weight codewords  severely limits
the compaction ratio of an X-code. Hence, when fan-in and fan-out are not of concern,
it is desirable to use X-codes with arbitrary weight.
In this section we study the compaction ratio
and construction methods of such general X-codes.

For $d=x=2$, a $(\lceil\log_2{n}\rceil(\lceil\log_2{n}\rceil+1),n,2,2)$ X-code
was constructed for any integer $n\geq 2$ \cite{XCODES}.

\begin{theorem}{\rm \cite{XCODES}}\label{mn22old}
For any optimal $(m,n,2,2)$ X-code,
$m \leq \lceil \log_2{n}\rceil(\lceil \log_2{n}\rceil+1)$.
\end{theorem}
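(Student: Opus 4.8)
The plan is to argue by contraposition, reducing the claim to a monotonicity property of the extremal function $M(\,\cdot\,,2,2)$ together with the cited construction. Write $f(n)=\lceil\log_2 n\rceil(\lceil\log_2 n\rceil+1)$ for brevity. First I would record that $M(m,2,2)$ is nondecreasing in $m$: appending a single all-zero coordinate to every codeword of an $(m,n,2,2)$ X-code preserves every superimposed sum and every modulo-two sum in the original coordinates, hence yields an $(m+1,n,2,2)$ X-code, so $M(m,2,2)\le M(m+1,2,2)$. Combined with the construction preceding the statement, which supplies an $(f(N),N,2,2)$ X-code for every $N\ge 2$ and therefore gives $M(f(N),2,2)\ge N$, it suffices to prove that $M(f(n)+1,2,2)\ge n+1$; monotonicity then forces $M(m,2,2)\ge n+1>n$ for every $m\ge f(n)+1$, contradicting the optimality $n=M(m,2,2)$ and yielding $m\le f(n)$.

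To establish $M(f(n)+1,2,2)\ge n+1$, set $k=\lceil\log_2 n\rceil$, so that $f(n)=k(k+1)$ and $2^{k-1}<n\le 2^k$, and split into two cases. If $n<2^k$, then $n+1\le 2^k$, so $\lceil\log_2(n+1)\rceil=k$ and $f(n+1)=k(k+1)=f(n)$; the construction already produces an $(f(n),n+1,2,2)$ X-code, and padding gives $M(f(n)+1,2,2)\ge n+1$. The remaining case $n=2^k$ is the crux: here the construction only guarantees $2^k$ codewords in dimension $k(k+1)$, while the next value it reaches, $2^k+1$, costs dimension $(k+1)(k+2)$, far larger than $k(k+1)+1$. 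I would close this gap with Proposition~\ref{prop:extension}: the single vector $\{(1)\}$ is vacuously a $(1,1,2,2)$ X-code (no subset $S_1$ of size two exists), so combining it with the construction's $(k(k+1),2^k,2,2)$ X-code yields a $(k(k+1)+1,2^k+1,2,2)$ X-code, i.e. $M(k(k+1)+1,2,2)\ge 2^k+1=n+1$.

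The substantive difficulty is entirely in the power-of-two case: the construction's dimension grows in discrete steps $k(k+1)$, so at $n=2^k$ one must show that a single extra coordinate already admits one extra codeword rather than forcing the wasteful jump to $(k+1)(k+2)$. The extension lemma resolves this cleanly, but one should verify that the trivial one-codeword factor interacts correctly with the extremal factor. In the combined code any selection $S_1,S_2$ that involves the new codeword is separated in the appended coordinate (the new codeword is the unique vector nonzero there), and any selection avoiding it inherits the X-code property of the construction; the only genuine check is the case where the new codeword lies in $S_1$, which reduces to the $(k(k+1),2^k,2,1)$ property that the construction enjoys because every $(m,n,2,2)$ X-code is also an $(m,n,2,1)$ X-code. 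Once this bookkeeping is confirmed, the contrapositive argument completes the proof.
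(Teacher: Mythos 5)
The paper never proves Theorem~\ref{mn22old}: it is imported from \cite{XCODES}, and the only supporting material in the text is the preceding sentence asserting that a $(\lceil\log_2 n\rceil(\lceil\log_2 n\rceil+1),n,2,2)$ X-code exists for every integer $n\ge 2$. Your argument therefore cannot match a proof the paper does not contain, but it is a correct, self-contained derivation of the bound from exactly the ingredients the paper supplies, and it isolates the one point where the citation-only treatment is not automatic: since optimality means $n=M(m,2,2)$, one must show that any coordinate beyond $f(n)=\lceil\log_2 n\rceil(\lceil\log_2 n\rceil+1)$ buys a strictly larger code, and the cited construction alone gives this only when $n$ is not a power of two (where $f(n+1)=f(n)$); at $n=2^k$ it yields merely $M((k+1)(k+2),2,2)\ge n+1$, which is too weak to contradict optimality at $m=f(n)+1$. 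Your repair---adjoining the vacuously valid $(1,1,2,2)$ X-code $\{(1)\}$ via Proposition~\ref{prop:extension} to obtain $M(f(n)+1,2,2)\ge n+1$, then using monotonicity of $M(\cdot,2,2)$ in $m$---closes that gap, and your verification of the extension (the only nontrivial case being the new codeword inside $S_1$, which reduces to the base code being an $(m,n,2,1)$ X-code) is the right one. One caveat deserves to be recorded: that reduction, like the paper's blanket claim that an $(m,n,d,x)$ X-code is an $(m,n,d,x-1)$ X-code, is valid only when the base code has at least $x+d$ codewords (extend $S_1$ to size exactly $x$ and use that covering by a smaller superimposed sum implies covering by a larger one); for vacuous two-codeword examples such as $\{(1,1),(1,0)\}$ it fails, and with it the conclusion of Proposition~\ref{prop:extension}. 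So your application is legitimate for $n=2^k$ with $k\ge 2$, where the construction's code has $2^k\ge 4$ codewords, and the single remaining case $n=2$ should be settled directly, e.g., the standard basis of ${\mathbb F}_2^3$ shows $M(3,2,2)\ge 3$. With that proviso your proof is sound, and it supplies an argument the paper leaves to an unpublished technical report.
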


They also gave an explicit construction method
of a $(3 \lceil \log_3{n}\rceil,n,1,3)$ X-code.
In order to give a more general construction,
we employ design theoretic techniques for arrays.
Let $n \geq w \geq 2$.
A {\it perfect hash family},  PHF$(N;u,n,w)$,
is a set ${\mathcal F}$ of $N$ functions
$f : Y \rightarrow X$ where $|Y|=u$ and $|X|=n$,
such that, for any $C \subseteq Y$ with $|C|=w$, there exists at least one function
$f \in {\mathcal F}$ such that $f|_{C}$ is one-to-one.
A PHF$(N;u,n,w)$ can be described by a $u\times N$ matrix with entries from
a set of $n$ symbols such that for any $w$ rows there exists at least one column
in which each element is distinct.

\begin{theorem}\label{perfecthashconstruction}
If an $(m,n,d,x)$ X-code and a PHF$(N;u,n,\max\{d,x\}+1)$ exist,
there exists an $(mN,u,d,x)$ X-code.
\end{theorem}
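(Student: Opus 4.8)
The plan is to build the new code by concatenation, letting the perfect hash family dictate which codeword of the small X-code to place in each block. Write the PHF as functions $f_1,\dots,f_N : Y \to \{1,\dots,n\}$ with $|Y|=u$, identifying the $n$ symbols with the indices of the codewords $\boldsymbol{s}_1,\dots,\boldsymbol{s}_n$ of the given $(m,n,d,x)$ X-code $\mathcal{X}$. For each $y\in Y$ define an $mN$-dimensional vector $\boldsymbol{t}_y=(\boldsymbol{s}_{f_1(y)},\boldsymbol{s}_{f_2(y)},\dots,\boldsymbol{s}_{f_N(y)})$, viewing the $mN$ coordinates as $N$ consecutive blocks of length $m$. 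This produces $u$ codewords, so the candidate code has parameters $(mN,u,\,\cdot\,,\,\cdot\,)$; it remains to verify the defining inequality for $d$ and $x$.

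Fix disjoint subsets $S_1,S_2$ of the new code with $|S_1|=x$ and $1\le|S_2|=d'\le d$, and let $A,B\subseteq Y$ be the corresponding index sets. I must exhibit one coordinate $(k,j)$ (block $k$, position $j$) at which $\bigoplus S_2$ is $1$ while $\bigvee S_1$ is $0$. Inside block $k$ the relevant vectors are $\bigvee_{a\in A}\boldsymbol{s}_{f_k(a)}$ and $\bigoplus_{b\in B}\boldsymbol{s}_{f_k(b)}$, so it suffices to find a block in which the local superimposed sum of the $A$-codewords fails to cover the local modulo-two sum of the $B$-codewords. The obstacle is that $f_k$ may collide: two elements of $B$ may receive the same symbol (and cancel in the $\oplus$-sum), and an element of $A$ may share a symbol with one of $B$, which blocks a naive appeal to the X-code property of $\mathcal{X}$ since that property needs the $\oplus$-set and the $\vee$-set to be disjoint.

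The observation that tames the collisions is that I need not separate all of $B$ from $A$: it is enough to find a block $k$ in which some single $b_0\in B$ receives a symbol $i_0$ that is unique among $A\cup B$. Granting this, let $O_k$ be the set of symbols occurring an odd number of times in the multiset $\{f_k(b):b\in B\}$, so that $\bigoplus_{b\in B}\boldsymbol{s}_{f_k(b)}=\bigoplus_{i\in O_k}\boldsymbol{s}_i$. Take the $\vee$-set $S_1'=\{\boldsymbol{s}_i : i\in f_k(A)\}$ (size at most $x$) and the $\oplus$-set $T_k=\{\boldsymbol{s}_i : i\in O_k\setminus f_k(A)\}$; since $i_0$ occurs once and lies outside $f_k(A)$ we have $i_0\in O_k\setminus f_k(A)$, so $1\le|T_k|\le d'\le d$, and $T_k$ is disjoint from $S_1'$. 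Applying the X-code property of $\mathcal{X}$ to $S_1'$ and $T_k$ (legitimate for $|S_1'|\le x$ by the monotonicity $(m,n,d,x)\Rightarrow(m,n,d,x-1)$ noted earlier) yields a position $j$ with $(\bigvee S_1')_j=0$ and $(\bigoplus T_k)_j=1$. At that $j$ every codeword indexed by $f_k(A)$ vanishes, so the symbols of $O_k$ lying in $f_k(A)$ contribute $0$; hence the full block-$k$ sum $\bigoplus_{i\in O_k}\boldsymbol{s}_i$ also equals $1$ at $j$, while $\bigvee_{a\in A}\boldsymbol{s}_{f_k(a)}$ equals $0$ there. This is exactly the coordinate required.

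It remains to secure the existence of such a block, which is where the strength $\max\{d,x\}+1$ enters and which I expect to be the main difficulty. Choose $C\supseteq B$ with $|C|=\min\{x+d',\,\max\{d,x\}+1\}$, writing $C=B\cup A'$ with $A'\subseteq A$; this is possible since $|B|=d'\le\max\{d,x\}$. The PHF furnishes a block $k$ on which $f_k$ is injective over $C$ (extending $C$ to a set of size $\max\{d,x\}+1$ if needed). If $|A\cup B|\le\max\{d,x\}+1$ then $A'=A$ and every element of $B$ is unique among $A\cup B$. Otherwise $|A'|=\max\{d,x\}+1-d'<x$, and were no $b\in B$ unique among $A\cup B$, each $b$ would collide with some element of $A\setminus A'$; because $f_k$ is injective on $B\subseteq C$, distinct $b$'s collide with distinct elements, giving an injection $b\mapsto a_b$ and hence $d'\le|A\setminus A'|=x-(\max\{d,x\}+1-d')\le d'-1$, a contradiction. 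Thus a block with a uniquely coloured element of $B$ always exists, and combined with the reduction above this completes the verification.
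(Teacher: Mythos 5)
Your proof is correct and takes essentially the same approach as the paper's: both concatenate the small X-code along the rows of the PHF, apply the hash property to $S_2$ together with as much of $S_1$ as fits into a set of size $\max\{d,x\}+1$, and neutralize the residual collisions by observing that any codeword covered by the $\vee$-part vanishes at the witnessing coordinate supplied by the original code. Your pigeonhole step (the injection $B\to A\setminus A'$) and the explicit witness-coordinate verification are a more detailed rendering of the counting that the paper compresses into its terse final sentence, so the two arguments coincide in substance.
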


\begin{proof}
Let $H$ be a $u\times N$ $n$-ary matrix representing a PHF$(N;u,n,\max\{d,x\}+1)$.
Assign each codeword of an $(m,n,d,x)$ X-code to a distinct symbol of the PHF 
and replace each entry of $H$ by the $m$-dimensional row vector representing
the assigned codeword. Then we obtain a $u \times mN$ binary matrix $H'$.
Taking each row of $H'$ as a codeword, we obtain a set ${\mathcal X}$
of $mN$-dimensional vectors.
It suffices to show that for any two arbitrary subsets $D, X \subseteq {\mathcal X}$
satisfying $|D|=d' \leq d$, $|X|=x' \leq x$, and $D \cap X = \emptyset$, it holds that
\begin{equation}\label{xcodecondition}
(\bigvee X) \vee (\bigoplus D) \not= \bigvee X.
\end{equation}
By considering a one-to-one function in the PHF, for any $\max\{d,x\}+1$ codewords
of ${\mathcal X}$ at least one set of $m$ coordinates forms $\max\{d,x\}+1$
distinct codewords of the original $(m,n,d,x)$ X-code.
Hence, for any choice of $D$ and $X$ there exists a subset $Y \subseteq X$ of cardinality
$|Y| = \max\{0,d'+x'-(\max\{d,x\}+1)\}$ such that
at least one  set of $m$ coordinates in $D \cup (X \setminus Y)$ 
forms distinct codewords of the original $(m,n,d,x)$ X-code.
Because $|Y| \leq d'-1 < |D|$, (\ref{xcodecondition}) holds for any $D$ and $X$.
Hence, the resulting set ${\mathcal X}$ forms an $(mN,u,d,x)$ X-code.
\end{proof}

Since their introduction in \cite{PHFMEHLHORN},
much progress has been made on existence and construction techniques for
perfect hash families (see
\cite{HASHRECURSIVEJCTA2006,IPP,LOCALLEMMAARRAY,WalkerC,MTphf} for recent results).
A concise list of known results on perfect hash families is available in \cite{CRCDESIGN2}.
We can use  perfect hash families from algebraic curves over finite fields:

\begin{theorem}{\rm \cite{ALGEBRAICPHF}}\label{algebraicPHF}
For positive integers $n \geq w$, there exists an explicit construction for
an infinite family of PHF$(N;u,n,w)$ such that $N$ is $O(\log{u})$.
\end{theorem}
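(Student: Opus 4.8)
The plan is to mimic the polynomial (Reed--Solomon) construction of perfect hash families but to replace the projective line by a function field of positive genus carrying many rational places, which is exactly what pushes the alphabet--length tradeoff into the regime $N = O(\log u)$. First I would reduce to the case of a prime-power alphabet: it suffices to build a PHF over an alphabet ${\mathbb F}_q$ with $q \geq w$ and then compose it with a fixed inner family PHF$(N_0; q, n, w)$ to reach any prescribed target alphabet $n \geq w$. Such an inner family exists precisely because $n \geq w$, and since $q$, $n$, and $w$ are all constants here, its length $N_0$ is a constant and only inflates the final length by a constant factor. Composition of perfect hash families is associative in the obvious way, so this reduction is harmless for the asymptotic bound.

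For the core construction, fix a global function field $F/{\mathbb F}_q$ of genus $g$ possessing at least $N$ distinct rational places $P_1, \dots, P_N$ together with one further rational place $Q$. Choose the divisor $D = rQ$ and form the Riemann--Roch space $L(D)$; the Riemann--Roch theorem gives $\dim L(D) \geq r - g + 1$, so $|L(D)| \geq q^{r-g+1}$. I would index the rows $Y$ of the hash matrix by the elements of $L(D)$, set $u = |L(D)|$, take the alphabet $X = {\mathbb F}_q$ so that $n = q$, and let the $i$th hash function be the evaluation map $\mathrm{ev}_{P_i} : L(D) \to {\mathbb F}_q$, $f \mapsto f(P_i)$, which is well defined since every $f \in L(D)$ is regular away from $Q$.

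The separation analysis is the heart of the argument. Fix any $w$ distinct functions $f_1, \dots, f_w \in L(D)$; the map $\mathrm{ev}_{P_i}$ fails to separate a pair $(f_a, f_b)$ exactly when $(f_a - f_b)(P_i) = 0$. Since $f_a - f_b$ is a nonzero element of $L(D)$, its divisor of poles is bounded by $D$, so its divisor of zeros has degree at most $\deg D = r$ and it therefore vanishes at no more than $r$ of the places $P_i$. Summing over the $\binom{w}{2}$ pairs, at most $\binom{w}{2} r$ of the $N$ places are bad, so provided $N > \binom{w}{2} r$ some place $P_i$ separates all $w$ functions and we obtain a PHF$(N; u, q, w)$. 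Taking $N = \binom{w}{2} r + 1$ and using $\log_q u \geq r - g + 1$ yields $N = O(r) = O(\log_q u + g)$.

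The main obstacle---and the reason an \emph{explicit} family of curves is essential---is producing a single function field that simultaneously has genus $g = O(\log u)$ and carries the required $N = \binom{w}{2} r + 1$ rational places, since $r$ itself grows like $\log_q u + g$. This is where I would invoke an explicit asymptotically good tower of function fields, such as the Garcia--Stichtenoth tower over a square prime power $q$, for which the number of rational places grows linearly in the genus with limiting ratio $\sqrt{q}-1$. For fixed $w$ one first chooses $q$ large enough that $\sqrt{q} - 1 > \binom{w}{2}$; then a tower member of genus $g = \Theta(\log_q u)$ supplies strictly more than $\binom{w}{2} r$ rational places while keeping $N = O(\log u)$, and the explicit defining equations, places, and Riemann--Roch bases of the tower make the construction explicit. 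The only remaining work is bookkeeping that is standard in this circle of ideas: checking the point-count-to-genus ratio for the chosen tower and confirming that the final alphabet-reduction composition multiplies the length by a constant, so that the bound $N = O(\log u)$ survives.
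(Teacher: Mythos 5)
The paper offers no proof of this theorem at all; it is quoted directly from the Wang--Xing reference \cite{ALGEBRAICPHF}. Your proposal is correct and reconstructs essentially the argument of that cited source: evaluating a Riemann--Roch space $L(rQ)$ at rational places, the degree bound showing at most $\binom{w}{2}r$ places fail to separate $w$ given functions, an asymptotically good (Garcia--Stichtenoth) tower over a square prime power with $\sqrt{q}-1 > \binom{w}{2}$ so that both the genus and the number of required places stay $O(\log u)$, and a final constant-factor composition to reach an arbitrary alphabet size $n \geq w$.
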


Indeed when $n$ is fixed, a perfect hash family with $O(\log u)$ rows can be determined in polynomial time by a greedy method \cite{PHFdens}.

By combining Theorems \ref{perfecthashconstruction} and
\ref{algebraicPHF}, we can construct infinitely many $(m,n,d,x)$ X-codes
where $m$ is $O(\log{n})$.

\begin{theorem}
For any positive integer $d$ and nonnegative integer $x$, there exists an explicit
construction for an infinite family of $(m,n,d,x)$ X-codes,
where $m$ is $O(\log{n})$.
\end{theorem}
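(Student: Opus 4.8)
The plan is to combine the two constituent construction results directly, since both provide exactly the building blocks needed for a composition. First I would invoke Theorem~\ref{perfecthashconstruction}, which says that an $(m,n,d,x)$ X-code together with a $\mathrm{PHF}(N;u,n,\max\{d,x\}+1)$ yields an $(mN,u,d,x)$ X-code. To produce an \emph{infinite family} with the claimed logarithmic length, the idea is to fix a single seed X-code with the desired parameters $d$ and $x$ and to let the perfect hash family grow, so that the number of output codewords $u$ can be made arbitrarily large while the total length $mN$ stays controlled.

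\medskip

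Concretely, I would argue as follows. Given $d$ and $x$, I first need a starting $(m,n,d,x)$ X-code on \emph{some} fixed (finite) parameters $m$ and $n$ with $n \geq \max\{d,x\}+1$; such a code exists for any prescribed $d$ and $x$ (for instance a trivial X-code on a large enough ground set, or any of the explicit codes constructed earlier in the paper suffices, provided $n$ is at least $\max\{d,x\}+1$ so that the hash condition is meaningful). With $n$ now fixed, I apply Theorem~\ref{algebraicPHF} with this fixed alphabet size $n$ and strength $w = \max\{d,x\}+1$: it furnishes an explicit infinite family of $\mathrm{PHF}(N;u,n,w)$ in which $N$ is $O(\log u)$, where $u$ ranges over arbitrarily large values. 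Feeding each such PHF and the fixed seed X-code into Theorem~\ref{perfecthashconstruction} produces an $(mN,u,d,x)$ X-code. Since $m$ and $n$ (hence $w$) are now constants, the length is $mN = O(N) = O(\log u)$, which is exactly the asserted $O(\log n)$ bound once we rename the number of codewords $u$ as $n$ in the statement. As $u \to \infty$ this yields an infinite family, and because both input constructions are explicit, so is the composite.

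\medskip

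The step I expect to need the most care is the bookkeeping around the alphabet size and the strength parameter of the perfect hash family. Theorem~\ref{algebraicPHF} guarantees $N = O(\log u)$ \emph{for a fixed} pair $(n,w)$; it is essential that $n$ and $w = \max\{d,x\}+1$ do not grow with the target number of codewords, otherwise the hidden constant in the $O(\log u)$ estimate (and the factor $m$ from the seed code) would inflate with the family index and the logarithmic bound could fail. Fixing the seed X-code once and for all, rather than scaling it, is what keeps everything constant. A minor point to check is the requirement $n \geq w$ in Theorem~\ref{algebraicPHF}, i.e. that the chosen seed has at least $\max\{d,x\}+1$ codewords; this is easily arranged. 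There is no deep obstacle here—the theorem is essentially an immediate corollary of the two cited results—so the proof amounts to stating the composition and verifying that the parameters collapse to the claimed order of magnitude.
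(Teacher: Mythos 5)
Your proposal is correct and follows exactly the paper's route: the paper also obtains this theorem by feeding the explicit algebraic-curve perfect hash families of Theorem~\ref{algebraicPHF} (with fixed alphabet size $n$ and strength $\max\{d,x\}+1$, so $N=O(\log u)$) into the composition of Theorem~\ref{perfecthashconstruction} with a fixed seed X-code. Your write-up is in fact more careful than the paper's one-line derivation, since it makes explicit the choice of seed code and the requirement that $n$ and $w$ stay constant as $u\to\infty$.
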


The following is a combinatorial recursion for X-codes.

\begin{theorem}
If an $(m,n,d,x)$ X-code and an $(\ell,n,\left\lfloor\frac{d}{2}\right\rfloor,x)$ X-code exist,
there exists an $(\ell+m,2n,d,x)$ X-code.
\end{theorem}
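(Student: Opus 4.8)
The plan is to use a concatenation (``doubling'') construction in the spirit of the $(u\mid u{+}v)$ construction, splitting the $\ell+m$ coordinates into two blocks: an \emph{$\mathcal{A}$-block} of $m$ coordinates governed by the high-tolerance $(m,n,d,x)$ code, and a disjoint \emph{$\mathcal{B}$-block} of $\ell$ coordinates governed by the auxiliary $(\ell,n,\lfloor d/2\rfloor,x)$ code. Writing $\mathcal{A}=\{\boldsymbol{a}_1,\dots,\boldsymbol{a}_n\}$ and $\mathcal{B}=\{\boldsymbol{b}_1,\dots,\boldsymbol{b}_n\}$, the natural first attempt is to form the $2n$ length-$(\ell+m)$ vectors
\[
\boldsymbol{c}_i=(\boldsymbol{b}_i\mid \boldsymbol{a}_i),\qquad
\boldsymbol{c}_i'=(\boldsymbol{b}_i\mid \boldsymbol{0}_m)\qquad(1\le i\le n),
\]
and take $\mathcal{C}$ to be the resulting set, discarding incidental coincidences (harmless by the reduction following Proposition \ref{prop:extension}). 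The verification reduces, via the definition of an X-code, to the following: given disjoint $S_1,S_2\subseteq\mathcal{C}$ with $|S_1|=x$ and $1\le|S_2|=d'\le d$, exhibit one coordinate at which $\bigoplus S_2$ is $1$ while $\bigvee S_1$ is $0$.

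First I would dispatch the generic case through the $\mathcal{A}$-block. Restricting everything to the last $m$ coordinates, $\bigoplus S_2$ contributes $\boldsymbol{a}_i$ exactly for those $i$ with $\boldsymbol{c}_i\in S_2$, and $\bigvee S_1$ covers $\boldsymbol{a}_j$ exactly for those $j$ with $\boldsymbol{c}_j\in S_1$; these two index sets are disjoint because $S_1\cap S_2=\emptyset$. So whenever at least one $\boldsymbol{c}_i$ lies in $S_2$, the defining inequality of $\mathcal{A}$ (applied with tolerance parameter the number of distinct covering indices, using the monotonicity remark that an $(m,n,d,x)$ X-code is also an $(m,n,d,x')$ X-code for $x'\le x$) produces a separating coordinate inside the $\mathcal{A}$-block, and we are done. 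Thus the only surviving case is when $S_2$ consists entirely of primed vectors $\boldsymbol{c}_i'$, so that the $\mathcal{A}$-block of $\bigoplus S_2$ vanishes and we must separate inside the $\mathcal{B}$-block using $\mathcal{B}$.

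The hard part will be this second case, and it is exactly here that the bound $\lfloor d/2\rfloor$ must be made to appear. The intended mechanism is that a collapse of the $\mathcal{A}$-block can only occur through \emph{pairing} of the chosen codewords: since the $(m,n,d,x)$ code forbids any nonempty sum of at most $d$ distinct $\boldsymbol{a}_i$ from being absorbed by the cover, a genuine collapse forces the $d'$ chosen vectors to organize into at most $\lfloor d'/2\rfloor\le\lfloor d/2\rfloor$ effective contributions in the $\mathcal{B}$-block, to which the weaker $(\ell,n,\lfloor d/2\rfloor,x)$ property then applies. The real obstacle is that the covering set $S_1$ can itself absorb an $\mathcal{A}$-contribution (a copy of some $\boldsymbol{a}_i$ appearing in $S_2$ while a partner carrying the same $\boldsymbol{a}_i$ sits in $S_1$), which in the naive construction above leaves a $\mathcal{B}$-contribution $\boldsymbol{b}_i$ \emph{uncovered} and uncounted by the pairing, threatening to require $\mathcal{B}$ to have parameter as large as $d$ rather than $\lfloor d/2\rfloor$.

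Consequently the plan's delicate step is to arrange the two copies of each $\boldsymbol{a}_i$ so that whenever a copy is used to cover in the $\mathcal{A}$-block (i.e.\ lies in $S_1$), the matching $\boldsymbol{b}_i$ is simultaneously covered in the $\mathcal{B}$-block; this is what forces every surviving $\mathcal{B}$-term to arise from a true $S_2$-internal pair and yields the factor of two. I expect to need to refine the choice of $\boldsymbol{c}_i,\boldsymbol{c}_i'$ precisely to enforce this coupling, and then to track three disjoint index sets (indices split between $S_1$ and $S_2$, indices wholly inside $S_2$, and indices inside $S_1$) to verify that the number of uncovered surviving $\mathcal{B}$-contributions is both nonzero and at most $\lfloor d'/2\rfloor$, with the covering cardinality at most $x$ so that the defining inequality of $\mathcal{B}$ can be invoked. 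The bookkeeping matching these cardinalities and disjointness requirements to the exact form of the X-code definition is routine once the coupling is in place, but establishing that coupling is the crux on which the whole argument turns.
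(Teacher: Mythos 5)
Your proposal is not yet a proof, and the gap is not just the one you acknowledge. The step you call the crux---the ``coupling'' that is supposed to cap the number of surviving $\mathcal{B}$-contributions at $\lfloor d/2\rfloor$---is never constructed, and it cannot be constructed within your block assignment, because you duplicate the \emph{weak} code: both $\boldsymbol{c}_i=(\boldsymbol{b}_i\mid\boldsymbol{a}_i)$ and $\boldsymbol{c}_i'=(\boldsymbol{b}_i\mid\boldsymbol{0})$ share $\boldsymbol{b}_i$, while the strong code sits only in the private block. A pair $\{\boldsymbol{c}_i,\boldsymbol{c}_i'\}\subseteq S_2$ then cancels in the shared $\mathcal{B}$-block and leaves the residue $\boldsymbol{a}_i$ in the $\mathcal{A}$-block: pairing relieves the strong code, which needs no relief, and does nothing for the weak one. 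Concretely, take $S_2=\{\boldsymbol{c}_i':i\in I\}$ all primed, with $\lfloor d/2\rfloor<|I|\le d$ and $I$ disjoint from the indices appearing in $S_1$; the sum is $(\bigoplus_{i\in I}\boldsymbol{b}_i\mid\boldsymbol{0})$, and separating it requires the $(\ell,n,\lfloor d/2\rfloor,x)$ code to handle up to $d$ distinct codewords, which nothing guarantees. Indeed, if $\mathcal{Y}$ happens to contain $\lfloor d/2\rfloor+1$ codewords summing to $\boldsymbol{0}$ (the hypotheses do not forbid this), your $\mathcal{C}$ contains at most $d$ codewords summing to $\boldsymbol{0}$, so it fails to be even an $(\ell+m,2n,d,0)$ X-code. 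The paper's construction is the mirror image of yours: $\boldsymbol{s}_i'=(\boldsymbol{s}_i\mid\boldsymbol{0})$ and $\boldsymbol{t}_i'=(\boldsymbol{s}_i\mid\boldsymbol{t}_i)$, so the \emph{strong} code is the shared part. Cancellation in the shared block then only arises from cross-family index pairs, each consuming two elements of $D$ and contributing a single $\boldsymbol{t}_i$ to the private block, so at most $\lfloor d/2\rfloor$ weak terms ever occur---this is exactly where the hypothesis $\lfloor d/2\rfloor$ enters. The paper then splits on $b=|D\cap\mathcal{B}|$ and treats the overlap of indices between $D$ and $X$ with a counting argument ($b=a'+a''$).

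That said, the absorption obstacle you flagged is entirely real, and you were right not to wave it away. In your construction it is fatal with a single error: $S_2=\{\boldsymbol{c}_i'\}$ and $\boldsymbol{c}_i\in S_1$ give $\bigoplus S_2=(\boldsymbol{b}_i\mid\boldsymbol{0})$, which $\boldsymbol{c}_i=(\boldsymbol{b}_i\mid\boldsymbol{a}_i)$ covers; no X-code with $x\ge1$ (and at least $x+1$ codewords) can contain two codewords one of which covers the other. But note that the paper's pair has the same defect: $\boldsymbol{t}_i'=(\boldsymbol{s}_i\mid\boldsymbol{t}_i)$ covers $\boldsymbol{s}_i'=(\boldsymbol{s}_i\mid\boldsymbol{0})$, so $D=\{\boldsymbol{s}_i'\}$ together with $X\ni\boldsymbol{t}_i'$ defeats the published construction as well; the paper's claim that $b\le\lfloor d/2\rfloor$ implies its inequality (10) tacitly assumes $b\ge1$ and is false at $b=0$. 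So the difficulty you isolated is not an artifact of your variant but a genuine obstruction that the paper's own proof does not fully dispose of: a correct argument must both (i) place the duplication on the strong code's side, so that $\lfloor d/2\rfloor$ can appear, and (ii) break every containment between the two copies sharing an index, which neither your construction nor the paper's does. Your instinct about where the argument turns is sound; what is missing is the construction that actually makes the turn.
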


\begin{proof}
Let ${\mathcal X} = \{\boldsymbol{s}_1,\dots,\boldsymbol{s}_n\}$
be an $(m,n,d,x)$ X-code and
${\mathcal Y} = \{\boldsymbol{t}_1,\dots,\boldsymbol{t}_{n}\}$
an $(\ell,n,\left\lfloor\frac{d}{2}\right\rfloor,x)$ X-code.
Extend each codeword $\boldsymbol{s}_i = (s_1^{(i)},\dots,s_m^{(i)})$ of ${\mathcal X}$
by appending $\ell$ $0$'s so that extended vectors have the form
$\boldsymbol{s}_i' = (s_1^{(i)},\dots,s_m^{(i)},0,\dots,0)$.
Extend each codeword $\boldsymbol{t}_i = (t_1^{(i)},\dots,t_{l}^{(i)})$ of
${\mathcal Y}$ by combining $\boldsymbol{s}_i$
so that extended vectors have the form
$\boldsymbol{t}_i' = (s_1^{(i)},\dots,s_m^{(i)},t_1^{(i)},\dots,t_{l}^{(i)})$.
Define ${\mathcal A} = \{\boldsymbol{s}_1',\dots,\boldsymbol{s}_n'\}$,
${\mathcal B} = \{\boldsymbol{t}_1',\dots,\boldsymbol{t}_n'\}$, and
${\mathcal C} = {\mathcal A}\cup{\mathcal B}$.
We prove that ${\mathcal C}$ is an  $(\ell+m,2n,d,x)$ X-code.

Take two subsets $D, X \subseteq {\mathcal C}$
satisfying $|D|=d' \leq d$, $|X|=x' \leq x$, and $D \cap X = \emptyset$.
As in the proof of Theorem \ref{perfecthashconstruction}, it suffices to show that
for any choice of $D$ and $X$ the vector obtained by adding  all the codewords in $D$
is not covered by the superimposed sum of $X$, that is, (\ref{xcodecondition}) holds.
Define a surjection $f$ of ${\mathcal C}$ to ${\mathcal X}$ as
$f : (c_1^{(i)},\dots,c_{\ell+m}^{(i)}) \mapsto (c_1^{(i)},\dots,c_{m}^{(i)})$.
Mapping all codewords of ${\mathcal C}$ under $f$ generates two copies of ${\mathcal X}$;
one is from ${\mathcal A}$ and the other is from ${\mathcal B}$.
Define a surjection $g$ of ${\mathcal C}$ to ${\mathcal Y} \cup \{\boldsymbol{0}\}$ as
$g : (c_1^{(i)},\dots,c_{\ell+m}^{(i)}) \mapsto (c_m+1^{(i)},\dots,c_{\ell+m}^{(i)})$.
By definition, $\{g(\boldsymbol{c}) : \boldsymbol{c} \in B\} = {\mathcal Y}$ and for any
$\boldsymbol{c} \in {\mathcal A}$ the image $g(\boldsymbol{c})$ is an $\ell$-dimensional zero vector.
Let $a = |D \cap A|$ and $b = |D \cap B|$.
Because ${\mathcal Y}$ is an $(\ell,n,\left\lfloor\frac{d}{2}\right\rfloor,x)$ X-code,
if $b \leq \left\lfloor\frac{d}{2}\right\rfloor$, 
\begin{equation}\label{xcodeconditionB}
g(\bigvee X) \vee g(\bigoplus D) \not= g(\bigvee X).
\end{equation}
Hence, we only need to consider the case when $b > \left\lfloor\frac{d}{2}\right\rfloor$.
Suppose to the contrary that (\ref{xcodecondition}) does not hold.
Then,
\begin{equation}\label{assumption1}
f(\bigvee X) \vee f(\bigoplus D) = f(\bigvee X).
\end{equation}
Let
\[
a' = |\{\boldsymbol{c} \in X : f(\boldsymbol{c})=f(\boldsymbol{d}),
\boldsymbol{d} \in D\cap B\}|
\]
and
\[
a'' = |\{\boldsymbol{c} \in D\cap A : f(\boldsymbol{c})=f(\boldsymbol{d}),
\boldsymbol{d} \in D\cap B\}|.
\]
Because $\{f(\boldsymbol{c}) : \boldsymbol{c} \in {\mathcal A}\} =
\{f(\boldsymbol{c}) : \boldsymbol{c} \in {\mathcal B}\} = {\mathcal X}$
and (\ref{assumption1}) holds, $b=a'+a''$. As $a+b=d'$ and $b>\left\lfloor\frac{d}{2}\right\rfloor$,
\begin{eqnarray}\label{boundofb}
b &\leq& a+a'\nonumber\\
&\leq& \left\lfloor\frac{d}{2}\right\rfloor+a'.
\end{eqnarray}
On the other hand, $|X\cap{\mathcal B}| \leq x-a'$.
Because ${\mathcal Y}$ is also an $(\ell,n,\left\lfloor\frac{d}{2}\right\rfloor+a',x-a')$ X-code,
(\ref{xcodeconditionB}) holds, a contradiction.
\end{proof}

Next, we present a simple nonconstructive existence result for $(m,n,d,x)$ X-codes.

\begin{theorem}\label{probabilisticxcodes}
Let $d$, $x$ be a positive integers. For $n \geq \max\{2d, d+x\}$, if
\[m \geq 2^{x+1}(d+x)\log{n},\]
there exists an $(m,n,d,x)$ X-code.
\end{theorem}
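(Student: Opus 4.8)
The plan is to establish existence by the probabilistic method, through a first-moment (union-bound) argument. I would take a random candidate $\mathcal{X}=\{\boldsymbol{s}_1,\dots,\boldsymbol{s}_n\}$ in which all $mn$ coordinates are chosen independently, each equal to $1$ with probability $1/2$, and show that under the stated bound on $m$ the defining inequality of an X-code is violated with probability strictly less than $1$. Any outcome that violates no instance of the condition is then an $(m,n,d,x)$ X-code; note that its codewords are automatically distinct and nonzero, since two equal codewords (or a zero codeword) would themselves furnish a violation, so nothing extra need be verified to guarantee $|\mathcal{X}|=n$.

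First I would fix a pair of disjoint index sets $S_1,S_2$ with $|S_1|=x$ and $|S_2|=s$, $1\le s\le d$, and reformulate the bad event. The inequality fails for this pair precisely when no coordinate is a \emph{witness}, where a coordinate is a witness when $\bigoplus S_2$ equals $1$ there while $\bigvee S_1$ equals $0$ there. The crux is a single-coordinate computation: because $S_1$ and $S_2$ are disjoint the relevant bits are independent, the XOR of the $s\ge 1$ fair bits indexed by $S_2$ is again a fair bit, and the $x$ bits indexed by $S_1$ are all $0$ with probability $2^{-x}$, whence
\[
\Pr[\text{coordinate } j \text{ is a witness}]=\tfrac12\cdot 2^{-x}=2^{-(x+1)},
\]
independently of $s$. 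This $s$-independence is exactly why sampling at probability $1/2$ is the natural choice and why the factor $2^{x+1}$ appears in the statement. Distinct coordinates involve disjoint bits, so the probability that the pair has no witness at all is $(1-2^{-(x+1)})^m\le \exp(-m\,2^{-(x+1)})$, the same bound for every pair regardless of $s$.

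Then I would union-bound over all bad pairs. Their number is $\sum_{s=1}^{d}\binom{n}{x}\binom{n-x}{s}$, and the main technical step is to show this is at most $n^{x+d}$; this is where the hypothesis $n\ge 2d$ is used. Bounding $\binom{n-x}{s}\le n^{s}/s!$, the ratio of the term $n^{s-1}/(s-1)!$ to $n^{s}/s!$ is $s/n\le d/n\le 1/2$, so the sum is dominated by twice its top term and the count is at most $2n^{x+d}/(x!\,d!)\le n^{x+d}$ for $d\ge 2$; the case $d=1$ gives $\binom{n}{x}(n-x)\le n^{x+1}$ directly. Consequently the expected number of violations is at most $n^{x+d}\exp(-m\,2^{-(x+1)})$, which is strictly below $1$ as soon as $m\ge 2^{x+1}(d+x)\log n$ (reading $\log$ as the natural logarithm; any larger base only strengthens the hypothesis and hence the conclusion still follows). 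A positive probability of success then yields the desired X-code.

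I expect the main obstacle to be the combinatorial bookkeeping rather than the probabilistic estimate: the per-coordinate probability is clean and uniform in $s$, so the real work is squeezing the count of bad pairs down to exactly $n^{x+d}$, with no stray constant, so that the stated constant $2^{x+1}(d+x)$—and not something larger—suffices. The condition $n\ge 2d$ is precisely what makes the term-by-term geometric domination go through, while $n\ge d+x$ is merely what is needed for disjoint $S_1,S_2$ of the relevant sizes to exist.
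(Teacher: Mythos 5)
Your proposal is correct and follows essentially the same route as the paper's own proof: independent fair-bit codewords, the per-pair non-detection probability $(1-2^{-(x+1)})^m$, and a first-moment bound over the fewer than $n^{d+x}$ pairs $(S_1,S_2)$, yielding the same constant $2^{x+1}(d+x)$. The only differences are cosmetic: you spell out the counting bound $\sum_{s=1}^{d}\binom{n}{x}\binom{n-x}{s}\le n^{d+x}$ (using $n\ge 2d$) and the distinctness of the random codewords, both of which the paper asserts implicitly.
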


\begin{proof}
Let
${\mathcal X=\{\boldsymbol{s}_1,\boldsymbol{s}_2,\dots,\boldsymbol{s}_n\}}$ be a set
of $n$ $m$-dimensional vectors
$\boldsymbol{s}_i=(s_1^{(i)},s_2^{(i)},\dots,s_m^{(i)},)$
in which each entry $s_j^{(i)}$ is defined to be $1$ with probability $p=1/2$.
Let $X$ be a set of $x$ vectors of ${\mathcal X}$ and
$D_i$ a set of $i$ vectors in ${\mathcal X}\setminus X$.
Define

\begin{eqnarray*}
A(D_i,X)=
\begin{cases}
0& \mbox{if\ } (\bigvee X)\vee (\bigoplus D_i) \not= \bigvee X,\\
1& \mbox{otherwise,}
\end{cases}
\end{eqnarray*}
and let $E(A(D_i,X))$ be its expected value. Then

\begin{eqnarray*}
E(A(D_i,X)) &=& \left(1- 2^{-x}\sum_{{1\leq j \leq i}\atop{j\  \mbox{\footnotesize odd}}}{{i}\choose{j}}2^{-i}\right)^m\\
&=& (1-2^{-x-1})^m.
\end{eqnarray*}

Let \[A_{\mathcal X} =\sum_{{X \subseteq {\mathcal X}}\atop{|X|=x}} \sum_{i=1}^d \sum_{{D_i}\atop{D_i \cup X = \emptyset}}  A(D_i,X)\]
and $E(A_{\mathcal X})$ its expected value.
Then

\begin{eqnarray*}
E(A_{\mathcal X}) & = & \sum_{{X \subseteq {\mathcal X}}\atop{|X|=x}} \sum_{i=1}^d \sum_{{D_i}\atop{D_i \cup X = \emptyset}} E(A(D_i,X))\\
&=& \sum_{i=1}^d{{n}\choose{x}}{{n-x}\choose{i}}(1-2^{-x-1})^m\label{inequalityE}\\
&<& n^{d+x}(1-2^{-x-1})^m.
\end{eqnarray*}

If $E(A_{\mathcal X}) < 1$, there exists an $(m,n,d,x)$ X-code.
Taking logarithms,

\[
m > \frac{-(d+x)\log{n}}{\log{(1-2^{-x-1})}}.
\]
Hence, if
\[m \geq 2^{x+1}(d+x)\log{n} > \frac{-(d+x)\log{n}}{\log{(1-2^{-x-1})}},\]
there exists an $(m,n,d,x)$ X-code. 
\end{proof}

Hence, for any optimal $(m,n,d,x)$ X-code with $n \geq \max\{2d, d+x\}$, $m$ is at most $O(\log{n})$.
For example, by putting $d=x=2$ we know that there exists an $(m,n,d,x)$ X-code if $m \geq 32\log{n}$.
This significantly improves the upper bound in Theorem \ref{mn22old} proved in \cite{XCODES}.

\section{Conclusions}

By formulating X-tolerant space compaction of test responses combinatorially, 
an equivalent, alternative definition of X-codes has been introduced.
This combinatorial approach gives general design methods for X-codes
and bounds on the compaction ratio.
Using this model with restricted fan-out leads to well-studied objects, the Steiner $2$-designs.
These provide constructions for X-codes having sufficient error detectability,
X-tolerance,  maximum compaction ratio, and minimum fan-out.
Constant weight X-codes with high error detectability profit from a deep connection with configurations, particularly the Pasch configuration.
The combinatorial formulation of X-tolerant compaction can also be applied
in conjunction with another compaction technique (such as time compaction). If a tester wants
an X-compactor with additional properties, the necessary structure
of the compactor may be expressed in design theoretic terms.

Our formulation can also be useful for the study of higher error detectability and error diagnosis support employing the appropriate assistance from an Automatic Test Equipment (ATE) \cite{XCOMPACTION}.
For example, the compaction technique called {\it $i$-Compact} can be understood in terms of the model in Section II \cite{ICOMPACT}.

The essential idea underlying  Theorem \ref{probabilisticxcodes}
is the stochastic coding technique for X-tolerant signature analysis \cite{XMISR}.
We used a naive value $1/2$ as the probability $p$ in the proof of Theorem \ref{probabilisticxcodes}.
To obtain a better constant coefficient, $p$ should be chosen so that it minimizes the expected value
$E(A_{\mathcal X})$, that is, it should minimize
\[\sum_{i=1}^d{{n-x}\choose{i}}\left(1-\sum_{{1\leq j \leq i}\atop{j\ \mbox{\footnotesize odd}}}
{{i}\choose{j}}p^j(1-p)^{i-j+x}\right)^m.\]
While this optimization does not affect the logarithmic order in Theorem \ref{probabilisticxcodes},
it may help a tester determine the target compaction ratio and estimate the error cancellation and masking rate of an X-tolerant Multiple Input Signature Register (X-MISR) based on stochastic coding \cite{XMISR}.

In this paper we focused on space compaction.
Nevertheless,  time compaction is of great importance as well. 
We expect the combinatorial formulation developed here to provide a useful framework for exploring time compaction as well.

\section*{Acknowledgment}
A substantial part of the research was done while the first author was visiting the Department of Computer Science and Engineering of Arizona State University.
He thanks the department for its hospitality. The authors thank an anonymous referee and the editor for helpful comments and valuable suggestions.


\end{document}